\newcolumntype{P}[1]{>{\Centering}m{#1}}
\newcommand{\append}{\textsc{Append}}
\newcommand{\writeOp}{\textsc{Write}}
\newcommand{\ttlAppend}{Q_A}
\newcommand{\rew}{\textsc{Reward}}
\newcommand{\down}{\textsc{DP}}
\newcommand{\eu}{\textsc{EU}}
\definecolor{DustyRose}{HTML}{DC6B82}
\newtheorem{theorem}{Theorem}[section]
\newtheorem{corollary}[theorem]{Corollary}
\newtheorem{lemma}[theorem]{Lemma}
\newtheorem{proposition}[theorem]{Proposition}
\newtheorem{definition}[theorem]{Definition}
\newtheorem*{conjecture*}{Conjecture}
\newtheorem*{direction*}{Research Direction}
\newtheoremstyle{nonindented}{1ex}{1ex}{}{}{\bfseries}{.}{.5em}{}
\newtheoremstyle{indented}{1ex}{1ex}{\itshape\addtolength{\leftskip}{0.6cm}\addtolength{\rightskip}{0.6cm}}{}{\bfseries}{.}{.5em}{}
\theoremstyle{nonindented}
\theoremstyle{indented}
\theoremstyle{plain}
\newenvironment{alignedequation*}{\begin{equation*} \begin{aligned}}{\end{aligned} \end{equation*}}
\def\min{\qopname\relax n{min}}
\def\max{\qopname\relax n{max}}
\newcommand{\lt}{\left}
\newcommand{\rt}{\right}
\title{Analyzing the Economic Impact of Decentralization on Users}
\author{Amit Levy 
\thanks{Better Bytes \& Princeton University, \tt{amit@amitlevy.com}.} \and S. Matthew Weinberg 
\thanks{Princeton University, \tt{smweinberg@princeton.edu}. Supported by NSF CAREER Award CCF-1942497 and an Ethereum Foundation Academic Grant.} \and Chenghan Zhou \thanks{Stanford University, \tt{chzhou@stanford.edu}.}} 
\date{}
\begin{document}
\maketitle
\begin{abstract}
   We model the ultimate price paid by users of a decentralized ledger as resulting from a two-stage game where Miners (/Proposers/etc.) first purchase blockspace via a Tullock contest, and then price that space to users. When analyzing our distributed ledger model, we find:
\begin{itemize}
    \item A characterization of all possible pure equilibria (although pure equilibria are not guaranteed to exist). 
    \item A natural sufficient condition, implied by Regularity (\`{a} la~\cite{Myerson81}), for existence of a ``market-clearing'' pure equilibrium where Miners choose to sell all space allocated by the Distributed Ledger Protocol, and that this equilibrium is unique.
    \item The \emph{market share of the largest miner} is the relevant ``measure of decentralization'' to determine whether a market-clearing pure equilibrium exists.
    \item Block rewards do not impact users' prices at equilibrium, when pure equilibria exist. But, higher block rewards can cause pure equilibria to exist.
\end{itemize}

We also discuss aspects of our model and how they relate to blockchains deployed in practice. For example, only ``patient'' users (who are happy for their transactions to enter the blockchain under any miner) would enjoy the conclusions highlighted by our model, whereas ``impatient'' users (who are interested only for their transaction to be included in the very next block) still face monopoly pricing.
\end{abstract}


\maketitle

\begingroup
\renewcommand\thefootnote{}
\footnotetext{\textbf{Acknowledgments.}~The authors are extremely grateful to Scott Kominers and Jacob Leshno for multiple exceptionally useful conversations that helped us improve this work.}
\addtocounter{footnote}{-1}
\endgroup

\addtocounter{page}{-1}

\thispagestyle{empty}

\setcounter{tocdepth}{2} 
\newpage
\section{Introduction}

Following Nakamoto's creation of Bitcoin in 2008~\cite{Nakamoto08}, adoption of blockchain technology for various purposes has steadily grown.\footnote{For example, Forbes reports a cryptocurrency market cap of \$3.27T USD at time of writing. Source: https://www.forbes.com/digital-assets/crypto-prices/} More relevant to this paper is ongoing interest in so-called ``Web3'' or ``Decentralized Apps'', for which an estimated \$5.4B USD in VC funding was raised in 2024.\footnote{Source: https://cointelegraph.com/news/vc-roundup-web3-funding-5-4-billion-2024 . Crunchbase further estimates a cumulative \$111 B USD in VC funding raised for Web3: https://news.crunchbase.com/web3-startups-investors/.} This paper seeks contributions to a theoretical foundation for why users might (or might not) ultimately find value in decentralized services in comparison to centralized alternatives.

\vspace{0.1cm}
\noindent\textbf{Classic vs.~Modern Pitches for Decentralized Services.} Aside from financial speculation, perhaps the dominant `real' use case of blockchain technology is as a currency for users with no viable alternative. While compelling applications, the economic case for such users is relatively straight-forward because the competing product\footnote{For example: a currency likely to be frozen by an authoritarian government, a hyperinflating currency, or a currency that can be tracked by law enforcement critical of your illicit activity} is so dysfunctional that concerns about (say) Bitcoin's transaction fees, volatility, and UI become very second-order. A classical pitch for decentralization therefore emphasizes simply that decentralized services make it more challenging for authoritarian leaders (and law enforcement) to deny access, and this pitch is plenty convincing in comparison to the (functionally non-existent) alternatives.

A modern discussion on blockchain technology, however, includes applications targeting users in developed economies with highly developed alternatives. For example, the pitch for stablecoins to users with a hyper-inflating local currency looks very different than to users with access to Venmo, Paypal, and credit cards. Consider also decentralized services such as file storage (for which centralized services such as Dropbox are a reasonable substitute), social networks (for which centralized services such as Facebook or Twitter are a reasonable substitute), or gaming (for which centralized gaming services produced by Riot or Blizzard are a reasonable substitute) -- what would cause users to prefer decentralized services over highly-developed centralized alternatives? 

A natural answer is that perhaps the decentralized service might somehow be `better' than the centralized competitor.\footnote{See here for an example of this pitch: https://a16zcrypto.com/posts/article/how-stablecoins-will-eat-payments/ .} But it is initially confusing how that might possibly arise as centralized services can optimally coordinate to lower internal costs, whereas decentralized services must additionally manage incentives/trust across distributed entities.

\vspace{0.1cm}
\noindent\textbf{Decentralizing Natural Monopolies.} One well-understood source of inefficiency in centralized services is deadweight loss caused by a monopolist.\footnote{The holdup problem is another -- see Section~\ref{sec:related} for a brief discussion.} That is, a decentralized service might plausibly be desirable to a highly-developed centralized alternative simply because the decentralized service results in different prices, and this can still be the case even if the centralized infrastructure is more efficient. Therefore it is natural to target domains with a ``natural monopoly'' aspect (such as social networks, payment systems, marketplaces, etc.).

Indeed, independently of any blockchain discussions,~\cite{Tirole23} highlights natural monopolies for digital services as a growing challenge, and further poses several possible approaches (each with drawbacks). One approach is described as follows: ``An alternative approach to full-scale regulation consists in insulating a natural monopoly (or bottleneck or essential facility) segment, as became popular in the late twentieth century. This segment remains regulated and is constrained to provide a fair and nondiscriminatory access to competitors in segments that do not exhibit natural monopoly characteristics and therefore can sustain competition.''\footnote{\cite{Tirole23} cites several examples: electricity markets might insulate the natural monopoly (the grid) and enable open competition on generation, or rail travel might insulate the natural monopoly (the tracks/stations) and enable open competition on train operation. A prevalent digital example is Local Loop Unbundling, where many countries insulate the natural monopoly (the ``local loop'' -- physical copper wires servicing telecommunications) by requiring its owners to lease access at nondiscriminatory prices to service providers.} One of two key drawbacks of this approach is as follows: ``\ldots one wants to break up the incumbent without destroying the benefits of network externalities. For example, breaking a social network into two or three social networks might not raise welfare.''\footnote{The second key challenge highlighted is identifying a core bottleneck to insulate. We argue in Section~\ref{sec:conclusion} that for many domains of interest, such a bottleneck can be identified and (in theory, at least) insulated. A final challenge highlighted is the actual process of unbundling an existing product, which is unrelated to our work.}

 One interpretation of Bitcoin is exactly through this lens, and~\cite{HubermanLM21} are the first to make this point.\footnote{``We model this novel economic structure and show that the BPS’s [Bitcoin Payment System's] decentralized design offers a prototype of a payment system in which users are protected from monopoly harm even if the payment system were a monopoly\ldots Standard economic arguments suggest that weak competition among monopolistic firms calls for regulation to mitigate monopoly harm. Under the BPS, users are protected from abuses of monopoly power even without competition from other payment systems. Thus, the BPS addresses potential antitrust concerns in a novel, even revolutionary, way.''} Indeed, when viewed as a payment system, the natural monopoly segment is `ledger maintenance' (where a consistent record of transactions is maintained), while the user-facing `transaction processing' segment is not a natural monopoly.  In the language of Bitcoin, substantial network effects arise from having consensus on a single consistent ledger, but minimal network effects arise from users transacting within the same block (or with the same miner). In the language of payment systems, substantial network effects arise from users transacting in the same `currency', but minimal network effects arise from users using the same app to process those transactions.\footnote{The preceding sentence is necessarily clunky, as it is somewhat unnatural to imagine separating a centralized payment system into a back-end transaction processor (that stores data and moves money around, where network effects arise) and front-end transaction processor (that interfaces directly with users, and minimal network effects arise). One high-level contribution of the ``Decentralized view'' is as a lens to dis-integrate services without an obvious dis-integration.} Perhaps shockingly, this insulation is maintained without regulation,\footnote{Our analysis does rely on Miners treating core aspects of the consensus protocol as exogenous, which bears conceptual similarity to regulation.} and therefore provides a novel approach to insulating natural monopolies.

\vspace{0.1cm}
\noindent\textbf{But does Decentralization Actually Help?} The preceding paragraph highlights blockchain-style decentralization an innovative approach to insulate natural monopolies from derivative services, but should we expect users to ultimately be better off? How would the answer depend on market primitives? Moreover, how do we even draw conclusions on users' utility from decentralized services? Surprisingly few answers are known to questions like these, and surprisingly fewer frameworks are known to even approach them. The goal of this paper is therefore to provide a framework towards such questions in the core domain of distributed ledgers, with an emphasis on connecting users' ultimate utility to properties of the decentralized ledger.

\subsection{Overview of Results}
We consider the core setting of a ledger. Ultimately, users desire the service of writing their transaction to the ledger, and have some value for doing so. Inspired by the preceding discussion, we separate this service into an Upstream segment which is a natural monopoly, and a Downstream segment which is not.\footnote{See Section~\ref{sec:prelim} for further discussion. Intuitively, the Upstream segment directly edits the ledger, which is a natural monopoly due to network effects of multiple users sharing access to the same ledger. The Downstream segment directly interfaces with users to solicit their transactions and pass to the Upstream segment, and exhibits minimal network effects.}

A centralized ledger would simply provide the entire service in a vertically integrated manner.\footnote{Venmo is a good example to have in mind for this model -- the Venmo backend is the Upstream segment, and the Venmo app is the Downstream segment. Users enjoy network effects due to the backend database, and minimal network effects from opening the same app on their phones.} A classic Industrial Organization exercise might consider dis-integrating the Upstream monopolist from separate Downstream firms that compete with one another.\footnote{The authors are not aware of a live example matching this model. A hypothetical example to have in mind would be if Venmo allowed third-party apps to access its ledger, and charged access fees to those apps.} A distributed ledger removes centralized control entirely -- the Upstream segment is provided by a hard-coded Protocol with exogenously set parameters. Competitive Downstream providers then `purchase' the Upstream resource according to the rules of the protocol and use it to process users' transactions.\footnote{In the language of Bitcoin, miners are Downstream providers. Miners solicit transactions from users, and `purchase' the right to include transactions in a Bitcoin block by `paying' in hashes.} We analyze this in Section~\ref{sec:bitcoin}, and in particular include a discussion of why the model captures key aspects of distributed ledgers.

We then draw the following conclusions in our model:
\begin{itemize}
    \item To the extent that a quantitative ``measure of decentralization'' impacts the price faced by users, it is the \emph{size of the largest miner}.\footnote{See Theorems~\ref{thm:bitcoinpure} and~\ref{thm:bitcoinsufficient} for precise statements.} 
    \item Block rewards have limited impact on users. Specifically, block rewards cannot impact the ultimate price users would face in equilibrium (provided equilibria exist),\footnote{See Theorem~\ref{thm:bitcoinpure} for a precise statement.} but can cause equilibria to exist.\footnote{See Propositions~\ref{prop:blockrewards} and~\ref{prop:blockrewards2} for precise statements.}
    \item We also characterize all possible equilibria (Theorem~\ref{thm:bitcoinpure}) and provide sufficient conditions for desirable equilibria to exist (Theorem~\ref{thm:bitcoinsufficient}).
    \item Our model applies only to \emph{patient} users (who are happy to have their transaction included in any block), whereas \emph{impatient} users (who want their transaction included in the next block or not at all) instead face miners with monopoly power over the contents of that block.
\end{itemize}

\subsection{Roadmap}
In Section~\ref{sec:related}, we discuss related work. Section~\ref{sec:prelim} overviews our model. Section~\ref{sec:enduser} provides technical preliminaries.\footnote{Due to space constraints, we prioritize presentation of our model, statements of results, and implications in the body. Section~\ref{sec:enduser} is useful primarily for technical intuition, and so is moved to the Appendix due to space constraints.} Section~\ref{sec:bitcoin} describes our Distributed Ledger Model, highlights key distinctions to a Centralized provider, highlights its connection to distributed ledgers in practice, and provides our main analysis. Section~\ref{sec:conclusion} concludes. 

\subsection{Related Work} \label{sec:related}

\noindent\textbf{Modeling Economic Impact of Decentralized Technologies.} The most closely related works in terms of motivation also seek to understand potential economic benefits of aspects of decentralized technologies (although there is no technical overlap between our work and any of these). By far the most related in terms of motivation is~\cite{HubermanLM21}, who also view distributed ledgers through the lens of insulating a natural monopoly.~\cite{HubermanLM21} considers users with a simple value for service (either High or Low), and who prefer not to wait for their transactions to be included. In their model, a monopolist excludes all Low users, but immediately processes all High transactions, causing deadweight loss. Bitcoin, on the other hand, processes all users, but with delay cost.\footnote{\cite{HubermanLM21} further analyze the delay as a function of Bitcoin protocol parameters.} So their work highlights a tradeoff between a monopolist (deadweight loss) and Bitcoin (delay cost). In comparison, our work (a) focuses exclusively on the monetary cost paid by users, (b) considers a richer model of user preferences (i.e.~an arbitrary demand curve, sometimes subject to a standard regularity condition), and (c) exclusively studies users of a decentralized ledger and does not explicitly compare to a monopolist.

Other works analyze the economic impact of aspects of decentralized technologies from an orthogonal viewpoint. For example,~\cite{SockinX23,Reuter24} view decentralization/tokenization as a commitment device by which a platform can cede control to users. In addition,~\cite{GoldsteinGS24} similarly view tokenization as a commitment device to future competitive pricing. These works address a similar high-level challenge (platforms with network effects), and also through novel approaches that arose recently alongside blockchain technology. However, these works still involve a rent-seeking platform (in comparison to our exogenous protocol), and cede control to users or external investors (in comparison to changing the market structure).

\vspace{0.1cm}
\noindent\textbf{Tullock Contests.} At a technical level, our work studies equilibria of a two-part game, one of which is a Tullock Contest and the second of which is an auction (see Section~\ref{sec:bitcoin} for a precise specification). As such, much of our technical analysis concerns Tullock Contests~\cite{Tullock80, BuchananTT80,HillmanR89, Gradstein95}, which are commonly used to capture the game played by Bitcoin miners to produce blocks (and also to capture related aspects of blockchain ecosystems)~\cite{ArnostiW22,AlsabahC20,Dimitri17, BahraniGT24}. The key technical distinction between our work and these works lies in our second-stage auction game, which will become clear in Section~\ref{sec:bitcoin}. 

\vspace{0.1cm}
\noindent\textbf{Industrial Organization Theory.} Our model is inspired by `textbook' Industrial Organization Theory models~\cite{tirole1988theory, SchmalenseeAW89}. Our model focuses on textbook settings (without demand uncertainty, and without costly marketing) to isolate the impact of the novel blockchain-inspired market structure -- additional aspects may be fruitful to consider as future work develops. In classical language, we model downstream producers that sell identical products (because the users are patient, and therefore indifferent to which block they get in). Impatient users (which are not the focus of our work, as they simply face a downstream monopolist) would instead be captured by perfectly differentiated downstream products (because impatient users want only to enter the next block).

\vspace{0.1cm}
\noindent\textbf{Other Economic Aspects of Blockchains.} Numerous other works consider economic aspects of blockchains. Several consider the economic incentives of protocol participants~\cite{EyalS14, Eyal15, CarlstenKWN16, KiayiasKKT16, SapirshteinSZ16, BrownCohenNPW19, FiatKKP19, GorenS19, NeuderMRP20, NeuderMRP21, Budish22, NeuTT22, FerreiraHWY22,YaishTZ22, Newman23, YaishSZ23, BarZurET20, AlpturerW24, BahraniW24, CaiLWZ24,FerreiraGHHWY24, BudishLR24, Nagy2025forking, BahraniNW25}. These works uncover reasons why participants may not be incentivized to follow the protocol specifications. In comparison to these works, we assume the underlying blockchain protocol functions as intended. Several consider ``transaction fee mechanism design'' -- the auction specified by the protocol for users to purchase transactions from miners~\cite{LaviSZ19, Roughgarden21, ShiCW23, WuSC24, Yao18, GafniY24, FerreiraMPS21, ChungS23, ChungRS24, ChenSZZ24, BasuEOS19, LeonardosMRSP21, LeonardosRMP23,GaneshTW24, GaneshTW26}. We model miners running a first-price auction with reserve, and discuss briefly in Section~\ref{sec:bitcoin} the connection between our modeling decision and blockchains with alternate TFMs (such as Ethereum's EIP-1559).\footnote{Briefly, what really matters for our model is the cost of including a transaction on-chain (which in EIP-1559 is the base fee, and in Bitcoin is zero), and how a profit-maximizing miner would choose to sell block space (given that cost) to users who can choose to instead purchase from other miners.} Finally,~\cite{Nisan23} considers the pricing dynamics of serial monopolists selling blockspace to patient buyers. In comparison to our work,~\cite{Nisan23} considers Miners who produce only a single block and aim to maximize their revenue from that block in isolation, whereas our work considers Miners who aim to optimize their joint revenue from multiple blocks (and also models the Tullock contest by which Miners earn the right to produce those blocks).

\section{Preliminaries}\label{sec:prelim}
\noindent\textbf{Running Story.} Our model is motivated by the concept of a ledger. Ultimately, the product consumed by an end-user is the ability to write information to the global ledger (which we call a \writeOp{}).\footnote{In order to focus on the relevant market primitives, we do not explicitly model ledger maintenance, consensus, cryptography, privacy, or reading. The service purchased by an end-user gets their message onto the ledger, and in a manner that can be read by the desired recipients.} That is, each end-user has a message they would like to write on the ledger, and purchases a \writeOp{} to do so. 

The entire value proposition of a global ledger is that there is ultimately a single consistent ledger. It is therefore crucial that \emph{some} aspects of ledger maintenance are performed via a single entity/protocol/etc.~(for example, centralized ledgers should maintain a single consistent back-end database. Decentralized ledgers should have a single protocol from which observers can conclude a single consistent ledger). Intuitively, these are operations that directly edit content in the ledger (and because there is a single consistent ledger, these operations must be carefully coordinated by a single entity/protocol/etc.). Other aspects of ledger maintenance can in principle be performed by competing entities (for example, end-users can in principle face different User Interfaces, pricing schemes, etc.). We abstract away precise details of the ledger maintenance process, and simply refer to operations that directly edit the single consistent ledger as \emph{Upstream} (and refer to one unit of these operations as an \append{}), and those that could in principle be performed by competing entities \emph{Downstream}. 

It may help to have a few examples in mind. Imagine breaking a centralized ledger (i.e.~Venmo) into its back-end database maintenance and front-end User Interface. The back-end database must ensure consistency on a single global ledger, and so is Upstream. Edits to the back-end database must be reliable and consistent (even if the database is replicated, distributed, etc.). One \append{} constitutes the resources necessary to add one entry to the back-end database (maintaing consistency, availability, etc.). The front-end User Interface is Downstream -- the front-end UI interacts directly with consumers, and turns communication with end-users into a query to the Upstream back-end database. The front-end UI consumes \append{}s in order to produce \writeOp{}s, and sells \writeOp{}s to users. Note that, in principle, the centralized ledger could offer different front-end UIs to different consumers (with different pricing schemes, different communication protocols, different app layout, etc.) -- doing so does not in principle interfere with the ability to maintain a single, consistent back-end database.

One could imagine instead a centralized back-end database that provides an API for third-party app access. The centralized back-end database again is a producer of \append{}s. Each third-party app is a consumer of \append{}s and a producer of \writeOp{}s. End-users purchase \writeOp{}s from a third-party app (who incurs costs both from interacting with the end-user, and from purchasing \append{}s). Again, each third-party app could in principle differ in pricing schemes, communication protocols, app layouts, etc., and purchase \append{}s from the same back-end database (that interacts with each third-party app in a manner that maintains a single consistent ledger).

One could also imagine a decentralized consensus protocol maintaining a decentralized ledger, allowing participation from ``miners'', ``stakers'', ``proposers'', etc.\footnote{Throughout this paper, we adopt the language of Bitcoin and refer to these participants as miners.} The decentralized protocol outlines a costly procedure by which miners receive \append{}s.\footnote{For example, Bitcoin miners receive \append{}s by repeated hashing (which costs electricity and hardware). Ethereum stakers receive \append{}s by locking up ETH in the Ethereum protocol (which costs capital).} Each miner is a consumer of \append{}s (that they ``purchase'' by completing the costly procedure specified in the decentralized protocol), and a producer of \writeOp{}s. End-users purchase \writeOp{}s from a miner (who incurs costs both from interacting with the end-user, and ``purchasing'' an \append{}). The consensus protocol structures its ``sale'' of \append{}s so that all ledger updates contribute to a single consistent ledger.\footnote{That is, this paper assumes that the consensus protocol functions as intended. See Section~\ref{sec:related} for a brief discussion on on related work surrounding this assumption.}

The subsequent paragraphs formalize our model in the abstract -- the running story provides intuition for each concept.

\vspace{0.1cm}
\noindent\textbf{Market Resources.} We consider two types of resources. The Downstream resource, \writeOp{}, is consumed by end-users. The Upstream resource, \append{}, is required to produce \writeOp{}s (in one-to-one ratio). Production of the Upstream resource is a natural monopoly, and therefore will be produced by a single entity/protocol. Production of the Downstream resource is not a natural monopoly, therefore we model a Downstream market with multiple competing participants. 

\vspace{0.1cm}
\noindent\textbf{Market Participants.} There are two types of market participants. End-users are the ultimate consumers, who desire \writeOp{}s. Each end-user wants a single \writeOp{}, and has some value $v$ should they receive one. Downstream producers produce \writeOp{}s, which necessitates consumption of \append{}s. The protocol (which is hard-coded and has no objective function or strategic decisions) produces \append{}s. 

\vspace{0.1cm}
\noindent\textbf{Market Primitives.} There is a continuum of end-users, with $D(p)$ denoting the mass of consumers with value at least $p$ for a \writeOp{}. 

We assume that $D(\cdot)$ provides finite revenue to a monopolist (that is, $\sup_p\{p \cdot D(p)\} < \infty$). Our main results require a standard regularity assumption on $D(\cdot)$.

\begin{definition}[Regular] A demand curve $D(\cdot)$ is \emph{Regular} if:
\begin{itemize}
    \item $D(\cdot)$ is differentiable and strictly decreasing. In this case, we use $-d(\cdot):=D'(\cdot)$. 
    \item The function $\varphi_D(x):=x - \frac{D(x)}{d(x)}$ is monotone non-decreasing in $x$.
\end{itemize}
\end{definition}

\noindent\textbf{Structure of the Game.} We model the interactions between the Upstream protocol, Downstream providers, and End-Users as a three-stage game. First, the Upstream protocol sets the dynamics for selling \append{}s to Downstream providers. Next, with this protocol fixed, Downstream providers set their strategies both for purchasing \append{}s and for selling \writeOp{}s to End-Users. Finally, with these strategies fixed, End-Users set their strategies for purchasing \writeOp{}s from Downstream providers. 

\vspace{0.1cm}
\noindent\textbf{Equilibrium Analysis.} Let $\down$ denote the set of Downstream providers, $\eu$ denote the set of End-Users, and $P_i(\vec{a})$ denote the payoff to Player $i$ when the action profile is $\vec{a}$. 

An End-User Equilibrium fixes some actions $\vec{a}_{\down}$ by the Downstream providers, and is a Nash Equilibrium of the End-User game induced by $\vec{a}_{\down}$ (with payoff $P_i(\vec{a}_{\down};\vec{a}_{\eu})$ to Player $i \in \eu$ on action profile $\vec{a}_{\eu}$).

A Downstream Equilibrium\footnote{We will sometimes simply call this an Equilibrium.} specifies, for each possible action profile $\vec{a}_{\down}$ of the Downstream providers, an End-User Equilibrium $E(\vec{a}_{\down})$ for the end-user game induced by $\vec{a}_{\down}$, and then (together with $E(\cdot)$) is a Nash Equilibrium among Downstream providers for the Downstream game induced by $E(\cdot)$ (which awards payoff $P_i(\vec{a}_{\down}, E(\vec{a}_{\down}))$ to Player $i \in \down$ on action profile $\vec{a}_{\down}$). When $E(\cdot)$ is unique (or otherwise clear from context), we will abuse notation and simply refer to $\vec{\alpha}_{\down}$ as a Downstream Equilibrium. Moreover, we will also abuse notation and say that a Downstream Strategy $\alpha_i$ dominates $\alpha'_i$ if $\alpha_i$ dominates $\alpha'_i$ in the game among Downstream providers induced by $E(\cdot)$.

\vspace{0.1cm}
\noindent\textbf{Notation.} For a (not necessarily continuous) monotone non-increasing function $F(\cdot)$, we let $F^{-1}(y):=\{x\ |\ \lim_{z \rightarrow x^+} F(z) \leq y \leq \lim_{z \rightarrow x^-} F(z)\}$, $F_{\inf}^{-1}(y)$ denote the infimum of $F^{-1}(y)$, and $F^{-1}_{\sup}(y)$ denote the supremum of $F^{-1}(y)$. Observe that if $F(\cdot)$ is left-continuous, then $F(x) \geq y$ for any $x \in F^{-1}(y)$.\footnote{Because for any $x \in F^{-1}(y)$, $y \leq \lim_{z \rightarrow x^-} F(z) = F(x)$.} If $F(\cdot)$ is continuous and strictly decreasing, we simplify notation and define $F^{-1}(y):=F_{\inf}^{-1}(y) = F_{\sup}^{-1}(y)$.

\vspace{0.1cm}
\noindent\textbf{First-Price Auction with Reserve.} First-Price Auctions with Reserves are a common subgame in our market structures. With a continuum of bidders and a total supply of $Q$, a first-price auction with reserve $r$ concludes as follows. First, let $B(q)$ denote the mass of bidders who submit a bid at least as large as $q$.\footnote{Observe that $B(\cdot)$ is left-continuous. To see this, observe that all bidders who bid at least $b-\varepsilon$ contribute to $B(b-\varepsilon)$. So the bidders that contribute to $B(b-\varepsilon)$ for all $\varepsilon > 0$ are exactly those who bid at least $b$ -- the same bidders that contribute to $B(b)$.} Next, if $B(r) < Q$, then every bidder who submits a bid at least as large as $r$ wins and pays their bid. If $B(r) \geq Q$, then every bidder who submits a bid strictly exceeding $B^{-1}_{\sup}(Q)$ wins and pays their bid, every bidder who submits a bid strictly below $B^{-1}_{\sup}(Q)$ loses, a mass of $B(B^{-1}_{\sup}(Q)) - Q$ bidders who submit a bid of exactly $B^{-1}_{\sup}(Q)$ lose and the remainder win and pay their bid (and in this case a total mass of $Q$ bidders win).\footnote{Recall that $B(B^{-1}_{\sup}(Q)) - Q \geq 0$ as $B$ is left-continuous.} \footnote{All of our analysis holds no matter how ties are broken to select the winning bidders among those who bid $B(B^{-1}_{\sup}(Q))$.} Observe that every bid profile induces an effective price of $b:=\max\{B^{-1}_{\sup}(Q), r\}$ -- every bidder who submits a bid exceeding $b$ certainly wins (and pays their bid) and every bidder who submits a bid below $b$ certainly loses. 

\vspace{0.1cm}
\noindent\textbf{Equilibria of Simultaneous First-Price Auctions.} Simultaneous First-Price Auctions are another common subgame in our market structures. Below we overview Simultaneous First-Price Auctions and technical lemmas helpful to understand our results -- full analyses and proofs are in~\Cref{sec:enduser}.

\begin{definition}[Simultaneous First-Price Auctions] In Simultaneous First-Price Auctions, there are $n$ sellers. Each seller $i$ has a $Q_i$ mass of items for sale, and sets reserve $r_i$. We define $Q^{\leq}(r):=\sum_{i,\ r_i \leq r} Q_i$ to be the total mass of items for sale at reserve at most $r$,\footnote{Observe that $Q^{\leq r}$ is monotone non-decreasing, and right-continuous everywhere. To see this, observe that seller $i$ contributes $Q_i$ to $Q^{\leq}(r)$ if and only if $r_i \leq r$, and to $Q^{\leq}(r+\varepsilon)$ for all $\varepsilon > 0$ if and only if $r_i \leq r$. Therefore all sellers contribute the same to both $Q^{\leq}(r)$ and $\lim_{\varepsilon \rightarrow 0} Q^{\leq}(r+\varepsilon)$.} $Q^{<}(r):=\sum_{i,\ r_i < r} Q_i$ to be the total mass of items for sale at reserve strictly less than $r$,\footnote{Observe that $Q^{<}(r)$ is monotone non-decreasing, and left-continuous everywhere. To see this, observe that seller $i$ contributes to $Q^{<}(r)$ if and only if $r_i < r$, and to $Q^{<}(r-\varepsilon)$ for some $\varepsilon > 0$ if and only if $r_i < r$. Therefore, all sellers contribute the same to both $Q^{<}(r)$ and $\lim_{\varepsilon \rightarrow 0}Q^{<}(r-\varepsilon)$.} and $Q^{=}(r):=Q^{\leq}(r) - Q^{<}(r)$ to be the total mass of items for sale at reserve exactly $r$. 

A continuum of unit-demand buyers each submit a (possibly $0$) bid to each first-price auction. Each first-price auction executes exactly as defined in Section~\ref{sec:prelim}. An equilibrium of Simultaneous First-Price Auctions is simply a strategy profile where each bidder best responds. 
\end{definition}

For equilibria among bidders, fixing all $Q_i, r_i$,~\Cref{lem:firstpriceeq} establishes that all winning bidders pay the same price in equilibrium, and define the value of this as \emph{clearing price}. 

\begin{definition}[Clearing Price and Canonical Equilibrium] \label{def:fpa-equilibrium} For an equilibrium $E$ of Simultaneous First-Price Auctions, we refer to its \emph{clearing price} $c(E)$ as the bid promised by Lemma~\ref{lem:firstpriceeq} such that every winning bidder wins exactly one item at bid $c(E)$. We say an equilibrium is \emph{canonical} if (i) a total supply of $\min\{D(c(E)),Q^{\leq}(c(E))\}$ items are sold,\footnote{That is, there does not exist a non-zero mass of buyers with value $D(c(E))$ and an unsaturated auction with reserve $c(E)$.} and (ii) the clearing price is minimal across all equilibria.\footnote{Intuitively, what happens is the following. There is a demand curve $D(\cdot)$ defined by the users' demand. The sellers $\{(Q_i, r_i)\}$ jointly define a supply curve, with $S(q)$ denoting the quantity of \writeOp{}s sold in some auction with reserve at most $q$. The supply curve has jump discontinuities, and so the demand may ``meet'' supply in a discontinuity. Still, any point where supply meets demand can be the effective price, and if $D(\cdot)$ is continuous and strictly decreasing there is a unique such point.} Note that if $D(\cdot)$ is continuous and strictly decreasing, all equilibria are canonical.
\end{definition}

Downstream Equilibria in our market structures concern the behavior of sellers in Simultaneous First-Price Auctions (i.e. choosing a quantity $Q_i$ according by participating in the Upstream protocol, and setting a reserve $r_i$). We refer to the reserve-setting aspect as a \emph{price-setting equilibrium} (noting that a Downstream Equilibrium must both induce a price-setting equilibrium for fixed $\vec{Q}$, \emph{and} be a joint equilibrium when considering both investment and reserves). 

\begin{definition}[Price-Setting Game]\label{def:price-setting-game} A \emph{Price-Setting Game} has the following structure:
\begin{itemize}
    \item[] \textbf{Players.} There are $n > 0$ sellers. Seller $i$ has quantity $Q_i$ of items.
    \item[] \textbf{Action Space.} Each seller picks a reserve $r_i$ to set in a first-price auction.
    \item[] \textbf{Costs.} Each seller pays a cost of $c_i$ per item sold.
    \item[] \textbf{Payoffs.} On strategy profile $\vec{r}$, a continuum of buyers with values according to $D(\cdot)$ bids in equilibrium of the simultaneous first-price auctions with quantities $\vec{Q}$ and reserves $\vec{r}$, which induces a clearing price of $p(\vec{Q},\vec{r})$. If Seller $i$ sells a mass of $Q'_i$ items in this equilibrium, their payoff is $Q'_i \cdot (p(\vec{Q},\vec{r})-c_i)$. 
\end{itemize}

Note that if $D(\cdot)$ is continuous and strictly decreasing, the clearing price $p(\vec{Q},\vec{r})$ is unique. We refer to a Price-Setting Game as canonical if the equilibrium selected by buyers is canonical.
\end{definition}

A concept throughout our analyses is whether a seller clears their entire inventory, and whether they determine the price at which the bidding equilibrium clears. 

\begin{definition}\label{def:saturate}
We say that Seller $i$ is \emph{saturated} in a strategy profile $\vec{r}$ if either (i) Seller $i$ sells a mass of $Q_i$ items or (ii) the clearing price $p(\vec{Q},\vec{r}) \leq c_i$. We further say that an equilibrium is saturated if all sellers are saturated. Finally, we refer to Seller $i$ as a \emph{price-setter} in the strategy profile $\vec{r}$ if $Q_i > 0$ and the clearing price $p(\vec{Q},\vec{r}) = r_i > c_i$.
\end{definition}

Finally, Proposition~\ref{prop:maineq} characterizes that all potential price-setting equilibria are either a ``market-clearing equilibrium'' where no seller is sufficiently large to profit from price-setting, or have a unique price-setter.

\begin{proposition}\label{prop:maineq}
Let $Q:=\sum_i Q_i$. Then every price-setting equilibrium takes one of two forms:
\begin{itemize}
    \item Every seller is saturated and the clearing price is $D^{-1}_{\sup}(Q)$. An equilibrium of this form exists if and only if $Q_i \leq \frac{(x - c_i) (Q-D(x))}{x - D^{-1}_{\sup}(Q)}$ for all $i$ and all $x > D^{-1}_{\sup}(Q)$. 
    \item There is a single price-setter $i$, who sets price $r^*_i:= \arg\max_{x \geq D^{-1}_{\sup}(Q)}\{(x - c_i) (D(x)+Q_i-Q)\}$. If an equilibrium of this form exists, it certainly exists with $i^* := \arg \max_i\{r^*_i\}$ as the price-setter (but equilibria with other price-setters are possible). Moreover, if $c_i = c_j$ for all $i,j$, then $\arg\max_i\{Q_i\} = i^*$.
\end{itemize}
\end{proposition}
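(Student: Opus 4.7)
The plan is to fix an arbitrary price-setting equilibrium $\vec{r}$ with clearing price $p := p(\vec{Q},\vec{r})$ and pin down its structure by exhibiting profitable deviations whenever the equilibrium does not match one of the two claimed forms. My first step is a cleanup: any seller $i$ with $r_i > p$ and $p > c_i$ currently sells zero at positive margin, yet could set $r_i$ slightly below $p$ and earn strictly positive profit, so such sellers cannot exist in equilibrium. This reduces attention to sellers with $r_i \le p$ (those with $r_i > p$ trivially have $p \le c_i$ and are saturated by Definition~\ref{def:saturate}). The main structural step is a uniqueness lemma: at most one seller $i^*$ with $p > c_{i^*}$ can be unsaturated. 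The argument is by an undercut: if two such sellers both had $r=p$ and were selling below capacity, the one that drops to $p-\varepsilon$ causes buyers to concentrate at their auction, and the clearing price either remains at $p$ (if the deviator's inventory cannot absorb all residual demand at $p-\varepsilon$) or drops to $p-\varepsilon$; in both cases the deviator's sales strictly increase by at least the other's sales while the price moves by at most $\varepsilon$, making the deviation profitable for small $\varepsilon$.

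This splits equilibria into a market-clearing case and a single-price-setter case. In the market-clearing case, canonical selection forces total sales to equal $\min\{D(p), Q^{\le}(p)\} = Q$, so $D(p) \ge Q$ and hence $p \le D^{-1}_{\sup}(Q)$. For the reverse inequality I would use a reserve-raising deviation: if $p < D^{-1}_{\sup}(Q)$, some seller with $Q_i > 0$ could raise $r_i$ to any $x \in (p, D^{-1}_{\sup}(Q)]$, and because $D(x) \ge Q$ the market still clears with full inventory sold, now at a strictly higher price. Hence $p = D^{-1}_{\sup}(Q)$. The existence condition then comes from pricing out the upward deviation $x > D^{-1}_{\sup}(Q)$: the deviating seller's $Q - Q_i$ competitors absorb all demand at prices $\le D^{-1}_{\sup}(Q)$, so the deviator sells $\max\{0, D(x) + Q_i - Q\}$ at price $x$; the non-deviation condition $Q_i(p - c_i) \ge (x - c_i)(D(x) + Q_i - Q)$ rearranges algebraically to the stated bound.

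In the single-price-setter case, the unique unsaturated seller $i^*$ must have $r_{i^*} = p$, and every other seller either sells their full $Q_j$ at some $r_j < p$ or is inactive with $r_j > p$ and $p \le c_j$. Hence $i^*$ faces residual demand $D(x) - (Q - Q_{i^*}) = D(x) + Q_{i^*} - Q$ at any reserve $x$; the constraint $x \ge D^{-1}_{\sup}(Q)$ appears because for smaller $x$ the residual would exceed $Q_{i^*}$, so $i^*$ would actually be saturated and selling $Q_{i^*}$, reducing to case A at the boundary $x = D^{-1}_{\sup}(Q)$. This yields the claimed $r^*_{i^*}$. For existence with $i^* := \argmax_i r^*_i$, I would verify the explicit profile $r_{i^*} = r^*_{i^*}$, $r_j \in [c_j, r^*_{i^*})$ for $j \neq i^*$: the best deviation available to any $j$ is to become the new price-setter at $r^*_j$, yielding $(r^*_j - c_j)(D(r^*_j) + Q_j - Q) \le Q_j(r^*_j - c_j) \le Q_j(r^*_{i^*} - c_j)$, the last step using $r^*_{i^*} \ge r^*_j$, which shows $j$ prefers the status quo. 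When all $c_i$ coincide, a first-order-condition analysis shows $r^*_i$ is strictly increasing in $Q_i$ (the residual $D(x) + Q_i - Q$ shifts upward pointwise in $Q_i$, and the optimal monopoly price on a larger demand curve is larger), so $\argmax_i r^*_i = \argmax_i Q_i$.

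The main obstacle I anticipate is rigorizing the undercut under the canonical tie-breaking rule and possible discontinuities of $D(\cdot)$: when a price-setter lowers $r_i$ from $p$ to $p - \varepsilon$, I must verify (using left-continuity of $B(\cdot)$ and the canonical selection in Definition~\ref{def:fpa-equilibrium}) that the new clearing price does not collapse discontinuously and cost the deviator more than they gain. A secondary subtlety is that the no-upward-deviation condition in case A quantifies over all $x > D^{-1}_{\sup}(Q)$ simultaneously, so infinitesimal-deviation arguments are insufficient and I would need to bound the supremum of deviation profits directly over this range. I would address both issues by first proving a short monotonicity lemma on how the clearing price varies with a single seller's reserve (holding others fixed), then feeding it into each case.
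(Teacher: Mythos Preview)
Your plan tracks the paper's proof closely: a cleanup step corresponding to Lemma~\ref{lem:one}, an undercut argument (Lemma~\ref{lem:singleprice}), the saturated case via upward deviations (Lemma~\ref{lem:saturatedeq}), and then the residual-demand characterization and existence verification for the price-setter case. One place where your statement is weaker than required: your uniqueness lemma reads ``at most one seller can be unsaturated,'' and your undercut assumes two sellers both selling below capacity at $r=p$. This does not yet exclude one \emph{unsaturated} price-setter $i$ coexisting with a \emph{saturated} price-setter $j$ (i.e., $r_j = p > c_j$ with $j$ selling exactly $Q_j$); such a configuration has two price-setters and so fails to match form~(2). The paper's Lemma~\ref{lem:singleprice} phrases the hypothesis as ``multiple price-setters with at least one unsaturated'': since the other price-setters (saturated or not) sell strictly positive mass, $i$'s current sales are strictly below $D(p) - Q^{<}_{-i}(p)$, and the same $\varepsilon$-undercut goes through. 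Your argument extends immediately once you restate the lemma this way, but as written the case split you announce (``single-price-setter case'') is not yet justified by the lemma you prove.

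For the final claim (equal costs imply $\arg\max_i Q_i = i^*$), you invoke monotone comparative statics on the residual-demand problem; the paper instead adds the two optimality inequalities defining $r^*_i$ and $r^*_j$ and rearranges to $(r^*_i - r^*_j)(Q_i - Q_j) \ge 0$. Both routes are valid, though your ``strictly increasing'' should be ``weakly''---neither argument yields strict monotonicity without additional structure on $D$.
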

\section{Distributed Ledger Model}\label{sec:bitcoin}

We now formally present our Distributed Ledger model. After formally specifying the model, we overview key differences to classical market structures, and its connection to distributed ledgers in practice.

\begin{definition}[Distributed Ledger Model] The \emph{Distributed Ledger Model} is defined below. We refer to the Upstream provider as Protocol and to the Downstream providers as \emph{Miners}.

\textsc{Upstream}
\vspace{-0.15cm}
\begin{itemize}
    \item[] \textbf{Protocol.} The Upstream protocol produces a fixed amount of $\ttlAppend$ \append{}s and maintains consensus. The protocol runs a \emph{Tullock Contest}~\cite{Tullock80} in some Resource to distribute the supply of \append{}s, with a block reward $B\geq 0$. Specifically, Miner $i$ who invests $q_i$ of Resource receives a fraction of the total \append{} supply proportional to their investment (i.e., $\ttlAppend \cdot \frac{q_i}{\sum_j q_j}$). Miner $i$ also receives a block reward (i.e~payment) of $B \cdot \ttlAppend \cdot \frac{q_i}{\sum_j q_j}$.
    \item[] \textbf{Payoffs.} Upstream protocol has no payoffs -- it simply maintains ledger consensus.\footnote{In the case of Bitcoin (and Ethereum, and most other Decentralized Ledgers), Miners are also participants in a consensus protocol. It may be helpful to think of Upstream providers as nodes that pass messages, verify authenticity, etc. in roles that would not also result in the ability to sometimes dictate contents of a block.} \footnote{We have intentionally modeled the decisions of the consensus protocol as exogenous to the game we study. Of course, \emph{someone} decides on $Q_A$, $B$, and to use a Tullock Contest in the first place. In practice, these decisions happen on a \emph{much} slower time scale than the game we model. For example, both Bitcoin and Ethereum (and all permissionless blockchains the authors are aware of) have used Tullock Contests since their creation. Bitcoin has never changed the formula for its block reward, and Bitcoin has technically not changed its blocksize either (although `soft forks' have occasionally increased Bitcoin's functional blocksize). Still, it is also worthwhile for future work to study the processes by which protocol parameters are set.}
\end{itemize}

\textsc{Downstream}
\vspace{-0.15cm}
\begin{itemize}
    \item[] \textbf{Players. } There is a set of $n$ Miners. 
    \item[] \textbf{Action Space.} Each Miner $i$ chooses a quantity of investment $q_i$ in the Upstream Tullock Contest, and a reserve price $r_i$ for a first-price auction they will run among end-users.
    \item[] \textbf{Costs.} Miner $i$ pays cost $c_i^R$ per unit of Resource, and $c^W$ per \writeOp{}. That is, if Miner $i$ wishes to invest $q_i$ in the Upstream game, they pay cost $q_i \cdot c_i^R$. If Miner $i$ eventually sells $Q'_i \leq \frac{q_i}{\sum_j q_j} \cdot Q_A$ \writeOp{}s, Miner $i$ pays cost $Q'_i \cdot c^W$. W.l.o.g.~we let $c_i^R \leq c_{i+1}^R$ for all $i \in [n-1]$.
    \item[] \textbf{Payoffs.} For a Miner $i$ who invests $q_i$ in the Upstream game and eventually sells $Q'_i$ \writeOp{}s, their total cost is $c_i^R \cdot q_i + Q'_i \cdot c^W$. They receive a block reward of $B \cdot\ttlAppend\cdot  q_i/\sum_{j} q_j$, plus any additional revenue earned in their first-price auction. Therefore, if Miner $i$ earns revenue $R_i$ from their first-price auction, invests $q_i$ in the upstream game, and sells $Q'_i$ \writeOp{}s, their payoff is $R_i - Q'_i \cdot c^W- q_i \cdot c_i^R + B \cdot \ttlAppend \cdot \frac{q_i}{\sum_j q_j}$.
\end{itemize}

\textsc{End-User}
\vspace{-0.15cm}
\begin{itemize}
        \item[] \textbf{Players.} There is a continuum of End-Users who follow a demand curve $D(\cdot)$ for \writeOp{}s.
    \item[] \textbf{Action Space.} Each end-user submits a bid to each First-Price Auction.
    \item[] \textbf{Payoffs.} An end-user with value $v$ has payoff $v-q$ for receiving at least one \writeOp{} and paying total price $q$,\footnote{That is, End-Users are unit-demand and only want a single \writeOp{} -- if they win multiple auctions they do not get additional utility. Still, they make a payment in any auction they win.} and payoff $0$ if they do not get a \writeOp{}.
\end{itemize}
\end{definition}

Before proceeding to analysis, some discussion is warranted on why the above model captures popular distributed ledgers, and what differentiates it from classic market structures.

\vspace{0.1cm}
\noindent\textbf{Key Differences.} One key difference between the Distributed Ledger Model and traditional market structures is the presence of a non-strategic protocol. Specifically, the Upstream game is hard-coded in the Distributed Ledger Model rather than endogenously optimized by a profit-maximizing Monopolist. This distinction is key.\footnote{It is \emph{certainly} relevant to \emph{also} study the meta-game by which protocol rules are formed, but the game induced by a fixed protocol would still be relevant for the entirety of Bitcoin's existence.} Additionally, the decision to fix the quantity $Q_A$ of \append{}s and run a Tullock contest is material, and meaningfully affects the analysis.\footnote{For example, conclusions would change if instead the protocol set a price $p_A$ for \append{}s and sold whatever is demanded, even if $p_A$ were determined exogenously. As previously noted, all blockchain protocols the authors are aware of run an Upstream Tullock contest, and it is not clear how to implement alternate Upstream market structures with a secure protocol.}  Finally, the decision to have a block reward (which directly rewards miners for purchasing \append{}s, even if they do not ultimately sell \writeOp{}s) is material, although our analysis shows limited impact on end-users (see Section~\ref{sec:bitcoinblock}).

\vspace{0.1cm}
\noindent\textbf{Connecting the Distributed Ledger Model to Distributed Ledgers in Practice.} First, we map aspects of Bitcoin onto the Distributed Ledger Model. Assuming that the Bitcoin protocol functions as intended,\footnote{See Section~\ref{sec:related} for a subset of works describing manners by which the protocol may not function as intended.} let us first describe the interaction between Miners and Protocol. In order to produce a valid block, Miners must solve a ``proof-of-work cryptopuzzle.'' Specifically, Miners trade one hash computation for one independent Bernoulli trial to create a valid block.\footnote{With extremely low probability of success -- roughly $2^{-78}$ at the time of writing.} Moreover, the success rate of each Bernoulli trial is dynamically adjusted by the Bitcoin protocol so that one block is created amongst the entire network every ten minutes. Each block provides 1 MB of space for the Miner to include transactions, and awards the miner a block reward (currently 3.125 BTC). So in our model, the interaction between Miners and Protocol captures the following aspect: Resource is hash computations. Each Miner $i$ has some cost $c_i^R$ to perform one hash computation.\footnote{This includes electricity, operational costs, amortized hardware costs, etc.} \append{}s are units of space in a valid block, and Protocol has hard-coded that $Q_A$ = 1 MB per ten minutes are awarded in total and that each Miner receives a fraction of 1 MB blocks proportional to their hash computations (because the success probability of each hash dynamically adjusts to enforce a total quantity of 1 MB per ten minutes). Finally, the protocol hardcodes $B = 3.125$ BTC per 10 minutes as the total block reward,\footnote{Note that this quantity halves every four years, as pre-specified by the Bitcoin protocol.} which is also distributed proportionally to miners according to their hash computations.

Aside from its interaction with Miners, Protocol simply maintains consensus on the contents of the ledger. For example, Protocol verifies validity of contents of the ledger, resolves any conflicts using ``Nakamoto consensus''~\cite{Nakamoto08}, and widely disseminates the ledger itself. In particular, the protocol rules suffice to identify a unique consistent ledger to disseminate. 

End-users get their transactions in a Bitcoin block by broadcasting to Miners. Each transaction includes a transaction fee, which is paid to whichever Miner includes that transaction in their block. Processing a transaction induces costs such as checking validity, and maintaining network connectivity (to hear about transactions in the first place), which are captured by $c^W$ in our model. Miners are typically revenue-maximizing, and typically fill their blocks with transactions paying the highest fees (and, to the best of the authors' knowledge, typically without reserves). For a \emph{patient} end-user, who wants their transaction in the Bitcoin ledger eventually but not necessarily immediately, each Miner is a potential seller running their own First-Price Auction, and the service offered by distinct miners is indistinguishable. 

It is also worth highlighting which of these aspects are key to fit our model, and which are not. All permissionless distributed ledgers the authors are aware of run a Tullock Contest in \emph{some} Resource. Some (including Bitcoin Cash, Litecoin, Ethereum Classic)
also use hash computations as Resource (``proof-of-work''). Others (including Ethereum, Solana, Cardano)
use locked capital as Resource (``proof-of-stake'').\footnote{That is, Miners are now called Stakers, who trade one unit of locked capital per Bernoulli trial.} It is not material to our analysis which Resource is used, only that the Protocol ultimately awards block space proportional to that resource. 

We note that~\cite{ArnostiW22} also model blockchain investment games as a Tullock contest, while other works~\cite{HubermanLM21} instead model it as perfect competition with free entry. We briefly note that free entry can also be captured arbitrarily well in our model by taking $n \rightarrow \infty$ Miners with identical $c_i^R$ (see Theorem~\ref{thm:bitcoinsufficient}, for example).\footnote{Specifically, the final bullet concludes that the natural ``market clearing equilibrium'' becomes an equilibrium with sufficiently-many identical Miners.}

On the other hand, it is crucial for our model to accurately capture end-users that they are \emph{patient} (such as those making non-urgent payments, or Layer-2 protocols posting data on-chain) and therefore equally happy to be included in any valid block. \emph{Impatient} users (such as those interacting directly with the ledger for DeFi applications) are not captured by our model.Instead, impatient users view the particular block offered by the next Miner as the only resource of interest, and therefore that Miner faces no competition. Therefore, a Miner selling blockspace primarily to impatient users is instead a monopolist. 

Additionally, observe that Miners(/Stakers) are free to use whatever ``off-chain'' auction they like in order to sell space in their created blocks, independent of whatever ``on-chain'' mechanism is hardcoded. For example, it is immaterial to our model that Bitcoin's on-chain mechanism is pay-your-bid, whereas Ethereum's is a posted-price mechanism, because Miners(/Stakers) in both protocols can run a first-price auction with reserve off-chain to determine which transactions are included in the first place. On the other hand, the fact that Ethereum's EIP-1559 \emph{burns}\footnote{That is, transactions included in an Ethereum pay a posted-price (set by the Ethereum protocol) that is destroyed, and not awarded to the miner.} revenue from the posted-price mechanism is material, and can be captured in our model via $c^W$. That is, our model adopts the perspective of~\cite{GaneshTW24, GaneshTW26} that EIP-1559 is really specifying a \writeOp{} cost per included transaction (the burned base fee) on each \emph{Proposer}, and the Proposer is then free to run whatever off-chain auction they like to build their block (rather than that EIP-1559 specifies \emph{the} auction that Proposers \emph{must} run when facing users).

Finally, while all mainstream protocols the authors are aware of currently have a single Proposer at each time slot, some protocols are now experimenting with ``Multiple Concurrent Proposer (MCP)'' protocols. In these protocols, there is no longer a monopolist for each block slot. Instead, multiple Proposers have the opportunity to insert transactions. Interestingly, our model also captures MCP protocols with either patient or impatient users.\footnote{To be extra clear, our model verbatim captures an MCP protocol where the block space in each block is partitioned according to stake (i.e.~a 10\% staker gets 10\% of every block). Most MCP proposals instead sample a discrete number of Proposers proportional to stake, and allow each such Proposer an equal fraction of the block. Our model does not capture this verbatim, as the sampling process would meaningfully complicate analysis, but it would be a natural direction for follow-up work to modify our model to capture these MCP protocols verbatim.}

In summary, the Distributed Ledger Model captures key aspects of many mainstream blockchain protocols (they are Tullock contests, and the protocol can impact $c^W$), while not capturing others (such as impatient end-users, or blockchains with a heavy MEV ecosystem\footnote{In blockchains with a heavy MEV ecosystem, the process of turning \append{}s into \writeOp{}s itself is cost-intensive and meaningfully asymmetric.}). Indeed, this is in line with the motivation for our paper: Decentralization does not uniformly impact all users identically in all domains, and so our model necessarily picks one canonical domain of focus.

\vspace{0.1cm}
Throughout this section, we abuse notation and use $(\vec{q},\vec{r})$ to refer to the Downstream Equilibrium $(\vec{q},\vec{r})$ together with the mapping $E(\cdot)$ that takes $(\vec{q}',\vec{r}')$ to the canonical End-User Equilibrium. We repeat the key takeaways of this section below:

\begin{itemize}
\item {While Equilibria may not always exist, Proposition~\ref{prop:bitcoinundominated} establishes an upper bound on the price any Miner will set: any strategy that sets a price exceeding the monopoly reserve for $D(\cdot)$ is a dominated strategy.}
    \item Theorem~\ref{thm:bitcoinpure} characterizes all potential Equilibria. In particular, there exists a single $\vec{Q}$ such that Miner $i$ wins $Q_i$ \append{}s in all pure equilibria.
    From here, there is a unique ``market-clearing'' potential equilibrium (where each miner sets reserve at most $D^{-1}(Q_A)$), and for each $i$ a unique potential equilibrium where Miner $i$ is a price-setter. 
    \item Theorem~\ref{thm:bitcoinsufficient} provides necessary and sufficient conditions for the ``market-clearing'' potential equilibrium to be an equilibrium.\footnote{In addition, we extend the necessary and sufficient conditions for the "market-clearing" equilibrium (when they exists) to the scenario where different Miners having different cost per \writeOp{} in \Cref{app:bitcoinasym}.} Importantly, the condition \emph{depends only on $D(\cdot), Q_A$, and $\max_i \{Q_i\}$}. That is, to the extent that a ``measure of decentralization'' impacts the ultimate price paid by end-users, the correct ``measure of decentralization'' is the market share of the largest miner.
    \item The magnitude of the block reward (or whether there is a block reward at all) \emph{does not impact $\vec{Q}$, nor any of the potential Equilibria identified by Theorem~\ref{thm:bitcoinpure}}. But, a larger block reward makes Equilibria more likely to exist (see \Cref{prop:blockrewards}). 
\end{itemize}

\subsection{Higher-than-Monopolist Reserves are Dominated Strategies}

Before reasoning about equilibria, we first reason about what reserves a Miner might set in an undominated strategy. Formally, we say that Downstream Strategy $(q_i, r_i)$ dominates $(q'_i, r'_i)$ if \emph{in the game among Downstream Providers}, $P_i((q_i, \vec{q}_{-i}), (r_i, \vec{r}_{-i})) \geq P_i((q'_i, \vec{q}_{-i}), (r'_i, \vec{r}_{-i}))$ for all $\vec{q}_{-i}, \vec{r}_{-i}$.

\begin{proposition}
\label{prop:bitcoinundominated}
    {Let $r^*(D, Q_A):= \arg\max_{r \geq D^{-1}_{\inf}(Q_A)}\{(r-c^W)\cdot D(r)\}$. Then for all Miners $i$, all $q_i > 0$, and all $r_i > r^*(D, Q_A)$, $(q_i, r^*(D, Q_A))$ dominates $(q_i, r_i)$. }
\end{proposition}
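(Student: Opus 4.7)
The plan is to fix any strategy profile $(\vec{q}_{-i}, \vec{r}_{-i})$ of the other miners and show that miner $i$'s auction-stage payoff is weakly higher under $(q_i, r^*)$ than under $(q_i, r_i)$. The Upstream investment cost $c_i^R q_i$ and the block reward depend only on $q_i$ and $\vec{q}_{-i}$, so they cancel; only the auction payoffs need to be compared. To organize the comparison I would let $S_{-i}^{\leq}(p) := \sum_{j \neq i,\, r_j \leq p} Q_j$ and introduce two reference prices: $p_*$, the canonical clearing when miner $i$'s $Q_i$ is fully offered, and $p_{**}$, the canonical clearing without miner $i$. A routine application of~\Cref{lem:firstpriceeq} establishes that the clearing under miner $i$'s reserve $r$ equals $p(r) = \max(r, p_*)$ when $r \leq p_{**}$, and equals $p_{**}$ (with miner $i$ selling zero) when $r > p_{**}$.

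The heart of the proof is a monotonicity lemma for the ``residual monopoly revenue'' $\tilde{f}(p) := (p - c^W)(D(p) - S_{-i}^{<}(p))$, where $S_{-i}^{<}(p) := \sum_{j \neq i,\, r_j < p} Q_j$: I claim $\tilde{f}$ is non-increasing on $[\hat{r}, \infty)$, with $\hat{r}$ the unconstrained monopoly price (so $\varphi_D(\hat{r}) = c^W$). On each open interval between consecutive reserves in $\{r_j\}_{j \neq i}$, $\tilde{f}'(p) = D(p) - S_{-i}^{<}(p) - (p - c^W) d(p)$, and Regularity forces $(p - c^W) d(p) \geq D(p) \geq D(p) - S_{-i}^{<}(p)$ for $p \geq \hat{r}$, hence $\tilde{f}'(p) \leq 0$. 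At each jump of $S_{-i}^{<}$ at a reserve $r_j \geq \hat{r}$, $\tilde{f}$ jumps down by $(r_j - c^W) Q_j \geq 0$. Combining, $\tilde{f}$ is non-increasing on $[\hat{r}, \infty)$.

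To conclude, I would establish the chain $\pi_i(r^*) \geq \tilde{f}(p(r^*)^+) \geq \tilde{f}(p(r_i)) \geq \pi_i(r_i)$, where $\tilde{f}(p^+)$ denotes the right-limit at $p$. The upper bound $\pi_i(r_i) \leq \tilde{f}(p(r_i))$ holds because miner $i$ sells at most $D(p(r_i)) - S_{-i}^{<}(p(r_i))$ units at clearing $p(r_i)$. The lower bound $\pi_i(r^*) \geq \tilde{f}(p(r^*)^+)$ uses the canonical-clearing identity $Q_i \geq D(p_*) - S_{-i}^{\leq}(p_*)$ (and a parallel identity at $r^*$ when $r^* > p_*$): in the subcase $r^* \leq p_*$, miner $i$ is strictly included at $p_*$ and sells all $Q_i$, directly giving $Q_i(p_* - c^W) \geq (D(p_*) - S_{-i}^{\leq}(p_*))(p_* - c^W) = \tilde{f}(p_*^+)$; in the subcase $p_* < r^* \leq p_{**}$, a short proportional-rationing calculation at the tied clearing $r^*$ delivers $\pi_i(r^*) \geq \tilde{f}({r^*}^+)$. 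Since $r^* \geq \hat{r}$ (by Regularity, the constrained maximizer on $[D^{-1}_{\inf}(Q_A), \infty)$ is either $\hat{r}$ itself or the boundary $D^{-1}_{\inf}(Q_A) > \hat{r}$), the middle inequality is exactly the monotonicity of $\tilde{f}$; the edge case $r_i > p_{**}$ is immediate since $\pi_i(r_i) = 0$.

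The main obstacle I anticipate is the careful handling of ties at clearing prices, where multiple sellers' reserves coincide and the canonical equilibrium underspecifies per-seller rationing: both the upper bound on $\pi_i(r_i)$ and the lower bound on $\pi_i(r^*)$ must be verified against whatever rationing convention is baked into the canonical equilibrium. In the generic no-tie case, both become the equalities $\pi_i(r) = \tilde{f}(p(r))$ and the proof reduces to pure monotonicity of $\tilde{f}$.
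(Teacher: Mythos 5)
Your proof hinges on a monotonicity lemma for the residual revenue function $\tilde{f}(p) := (p - c^W)\bigl(D(p) - S_{-i}^{<}(p)\bigr)$, and you establish that lemma by differentiating and invoking Regularity to get $(p - c^W)\,d(p) \geq D(p)$ for $p \geq \hat{r}$, and by invoking Regularity again to place $r^*$ at or above $\hat{r}$. But Proposition~\ref{prop:bitcoinundominated} is not stated under the Regularity hypothesis, and the paper's proof does not use it (nor does it use differentiability of $D$). This is a genuine gap: without unimodality of the monopoly revenue curve, $(p - c^W)D(p)$ need not be weakly decreasing on $[r^*,\infty)$, and since the competitor-supply correction $(p-c^W)S_{-i}^{<}(p)$ can be arbitrarily small (e.g.\ when rival reserves are all very high), $\tilde f$ can fail to be non-increasing. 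Your chain $\pi_i(r^*) \geq \tilde f(p(r^*)^+) \geq \tilde f(p(r_i)) \geq \pi_i(r_i)$ would then break at the middle inequality.

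The paper sidesteps monotonicity entirely. It compares the two strategies with a single \emph{fixed} competitor-supply term: writing $A := Q_{-i}^{\leq}(r^*)$, it bounds $S_i(r_i) \leq D(r_i) - A$ and $S_i(r^*) \geq D(r^*) - A$, so
\[
(r^*-c^W)S_i(r^*) - (r_i-c^W)S_i(r_i) \;\geq\; \bigl[(r^*-c^W)D(r^*) - (r_i-c^W)D(r_i)\bigr] + (r_i - r^*)\,A \;\geq\; 0,
\]
where the bracketed term is nonnegative purely because $r^*$ is by definition the maximizer of $(r-c^W)D(r)$ over $r \geq D^{-1}_{\inf}(Q_A)$, and the second term is nonnegative because $r_i > r^*$. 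No Regularity, no differentiability, no case analysis around $p_*$, $p_{**}$, and no rationing subtleties. If you want to salvage your structure, you would need to either restrict the proposition to Regular $D$ (weakening the statement) or replace the global monotonicity of $\tilde f$ with the paper's two-point comparison that uses the same constant $A$ at both prices.
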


Proposition~\ref{prop:bitcoinundominated} establishes an upper bound on what price could possibly arise, even out of equilibrium -- it can be no worse than the price that would be set by a single Miner who produces all blocks.

\subsection{Characterizing Equilibria (when they exist)}\label{sec:bitcoinpure}

End-User Equilibria are analyzed in~\Cref{sec:enduser}, which describes how to determine the clearing price as a function of the Miners' strategies. This section focuses on analyzing Downstream Equilibria. While Pure Equilibria do not always exist (even when $D(\cdot)$ is Regular -- see Section~\ref{sec:bitcoinexample}), we are able to cleanly characterize all \emph{potential} Pure Equilibria. Below, we outline the characterization.

\begin{itemize}
    \item Taking investments $\vec{q}$ as fixed, Miners have quantities $\vec{Q}$, which necessarily satisfy $\sum_i Q_i = Q_A$ (as $Q_A$ is exogenously set). In any Pure Equilibrium, it therefore must be that $\vec{r}$ is a price-setting equilibrium for $\vec{Q}$.
    \item Taking reserves $\vec{r}$ as fixed and the total quantity $Q_A$ as fixed, $\vec{q}$ must be an investment equilibrium. Knowing from Proposition~\ref{prop:firstpriceeq} that any price-setting equilibrium has at most one price-setter, we can instead study: taking the clearing price $r$, the price-setter $i$, and the total quantity $Q_A$ as fixed, $\vec{q}$ must be an investment equilibrium. This is \emph{almost} like asking for an equilibrium in the Tullock contest defined by costs $c_i^R$ and total reward $(r-c^W) \cdot D(r)+B \cdot Q_A$.
    \begin{itemize}
        \item But, the game is not \emph{exactly} a Tullock contest. Indeed, no one gets reward $((r-c^W) \cdot D(r)+B\cdot Q_A)\cdot q_i/\sum_j q_j$ -- all $j \neq i$ receive reward $(r-c^W+B)\cdot Q_A \cdot q_i/\sum_j q_j$, and the price-setter $i$ receives reward $(r-c^W+B)\cdot Q_A \cdot q_i/\sum_j q_j - (r-c^W) \cdot (Q_A - D(r))$. However, after inspecting both reward formulas, \emph{the marginal change in each Miners' payoff is identical to the marginal change in $(r-c^W+B) \cdot Q_A\cdot q_i/\sum_j q_j$}. Therefore, the same local optimality conditions that must be satisfied by an equilibrium of a Tullock contest with total reward $(r-c^W+B)\cdot Q_A$ must be satisfied by any equilibrium with clearing price $r$.
        \item Importantly, however, while the equilibrium \emph{investments} in a Tullock contest certainly depends on the total reward split, the equilibrium \emph{resulting market shares} do not. 
        \end{itemize}
    \item The previous two bullets suggest the following as necessary conditions for a pure equilibrium:
    \begin{itemize}
        \item The resulting market share of \append{}s won must match those in equilibrium of a Tullock contest where Miner $i$ incurs cost $c_i^R$ per Resource. Importantly, this equilibrium is unique and well-defined. Call this vector of quantities $\vec{Q}^*(\vec{c})$. 
        \item Let $Q^{\text{OPT}}_i:=\arg\max_{Q \in [\sum_{j \neq i }Q^*_j, Q_A]}\{(Q -\sum_{j \neq i} Q_j)\cdot D^{-1}(Q)\}$. Then some Miner $i$ is a price-setter at $D^{-1}(Q^{\text{OPT}}_i)$ (this includes the possibility that $Q^{\text{OPT}}_i = Q_A$ for all $i$, and the unique possible Pure Equilibrium saturates all Miners). 
        \item Indeed, Theorem~\ref{thm:bitcoinpure} confirms these characterize all potential Pure Equilibria.
    \end{itemize}
    \item Ultimately, in order to be a Pure Equilibrium, the question is whether whether the pair $(q_i, r_i)$ \emph{which can be changed in tandem} is a best-response to $\vec{q}_{-i},\vec{r}_{-i}$. The previous bullets expound upon necessary conditions for this to plausibly occur -- if $(q_i,r_i)$ is to be a best response to $(\vec{q}_{-i},\vec{r}_{-i})$, $q_i$ must be a best response to $(\vec{q}_{-i},\vec{r}_{-i};r_i)$, and $r_i$ must be a best response to $(\vec{q}_{-i},\vec{r}_{-i},q_i)$. Section~\ref{sec:bitcoinexample} contains an example demonstrating the possibility of no Pure Equilibria.
\end{itemize}

We execute this outline in Appendix~\ref{app:bitcoin} -- Theorem~\ref{thm:bitcoinpure} characterizes all possible Pure Equilibria.

\begin{definition}
    Define $c^*(\vec{c}^R)$ to be the unique solution to $\sum_{i=1}^n \max\{0,1-\frac{c_i^R}{c^*(\vec{c}^R)}\}=1$.\footnote{\cite{ArnostiW22} establish that $c^*(\vec{c}^R)$ is well-defined -- the proof is straight-forward.} Further define $x_i^*(\vec{c}^R):=\max\{0,1-\frac{c_i^R}{c^*(\vec{c}^R)}\}$. 
\end{definition}

\begin{theorem}\label{thm:bitcoinpure}
Let $(\vec{q},\vec{r})$ be an Equilibrium in the Distributed Ledger Model, and let the clearing price for End-Users be $r$. Then:
\begin{itemize}
    \item $\sum_j q_j = Q_A \cdot (r+B-c^W)/c^*(\vec{c}^R)$.
    \item For all Miners $i$, $\frac{q_i}{\sum_j q_j} = x_i^*(\vec{c}^R)$.
\end{itemize}
Moreover, $r \geq D^{-1}_{\sup}(Q_A)$, and:
\begin{itemize}
    \item If $r = D^{-1}_{\sup}(Q_A)$, then $Q_A \cdot x^*_i(\vec{c}^R) \leq \frac{(x - c^W) (Q_A-D(x))}{x - r}$ for all $i$ and all $x > r$. 
    \item If $r > D^{-1}_{\sup}(Q_A)$, then there is a single price-setter $i^*$, who sets a price equal to $r_{i^*}:= \arg\max_{x > D^{-1}_{\sup}(Q_A)}\{(x - c^W) \cdot (D(x)+Q_A\cdot x^*_{i^*}(\vec{c}^R)-Q_A)\}$. If an equilibrium of this form exists, one certainly exists with $i^* =1$ as the price-setter (equilibria with other price-setters are possible).
\end{itemize}
\end{theorem}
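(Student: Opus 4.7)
The plan is to decompose the problem along the two components of each Miner's strategy: the investment $q_i$ (which determines the mass $Q_i := Q_A\cdot q_i/\sum_j q_j$ of \append{}s won), and the reserve $r_i$. First I would fix an equilibrium $(\vec q,\vec r)$ with induced clearing price $r$, and for each Miner $i$ examine the payoff as a function of $q_i$, holding $r_i$ fixed and $\vec q_{-i},\vec r_{-i}$ as given. A direct calculation shows that whether Miner $i$ is a price-setter or not, the payoff takes the form
\[
(r-c^W+B)\cdot Q_A\cdot \frac{q_i}{\sum_j q_j} - c_i^R q_i + \kappa_i,
\]
where $\kappa_i$ is $0$ when Miner $i$ is not the price-setter and $-(Q_A-D(r))(r-c^W)$ when Miner $i$ is the price-setter; crucially $\kappa_i$ does not depend on $q_i$. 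Hence the first-order condition in $q_i$ is identical to the one governing a standard Tullock contest with total prize $R := Q_A(r-c^W+B)$ and per-unit costs $\vec c^R$. Well-definedness and uniqueness of this Tullock equilibrium (see \cite{ArnostiW22}) give market shares $x_i^*(\vec c^R)$ and total investment $R/c^*(\vec c^R) = Q_A(r+B-c^W)/c^*(\vec c^R)$, which are precisely the first two bullets of the theorem.

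Next, I would treat $\vec Q$ as fixed with $Q_i = Q_A\cdot x_i^*(\vec c^R)$ and view $\vec r$ as a price-setting equilibrium in the sense of Definition~\ref{def:price-setting-game} with common per-\writeOp{} cost $c^W$. Proposition~\ref{prop:maineq} applies directly: its two cases become the two cases in the theorem, and in both cases $r\geq D^{-1}_{\sup}(Q_A)$. For the market-clearing case, the quantitative condition is obtained by plugging $Q_i = Q_A\cdot x_i^*(\vec c^R)$ into the proposition's condition; for the single-price-setter case, the formula for $r_{i^*}$ is the direct translation of $r^*_i$ from Proposition~\ref{prop:maineq} with $Q - Q_i = Q_A(1-x_i^*(\vec c^R))$. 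Finally, $x_i^*(\vec c^R)$ is monotone non-increasing in $c_i^R$, so the convention $c_1^R\leq c_2^R\leq\cdots$ implies Miner $1$ has the largest market share; since all Miners share the same \writeOp{} cost, the third bullet of Proposition~\ref{prop:maineq} then identifies Miner $1$ as a valid price-setter, establishing the ``$i^*=1$'' claim.

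The main obstacle is justifying the Tullock reduction rigorously, because a deviation in $q_i$ could in principle change which Miner is the price-setter, or change the saturation pattern, potentially breaking the clean payoff formula above. However, at the equilibrium itself only local deviations matter: the identity of the price-setter (if any) and the saturation pattern are stable under sufficiently small perturbations of $q_i$ with $\vec r$ held fixed (all relevant quantities vary continuously in $q_i$), so the payoff is locally smooth with $\kappa_i$ locally constant, and the Tullock FOC is indeed a necessary best-response condition. Inactive Miners with $x_i^*(\vec c^R)=0$ are handled by the standard boundary condition $R/\sum_j q_j \leq c_i^R$, exactly as in the Tullock analysis of \cite{ArnostiW22}, and the same boundary reasoning allows Miner $i$ to have $q_i=0$ and hence $Q_i=0$ without any impact on the price-setting subgame.
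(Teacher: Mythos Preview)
Your proposal is correct and follows essentially the same route as the paper: the paper's proof combines Proposition~\ref{prop:bitcoinpricesetting} (which invokes Proposition~\ref{prop:maineq} exactly as you do) with Proposition~\ref{prop:bitcoinmustbeTullock} (whose proof isolates the same local deviations in $q_i$ that preserve the clearing price and price-setter identity, yielding the Tullock first-order conditions with your constant $\kappa_i$). One small slip: what you call ``the third bullet of Proposition~\ref{prop:maineq}'' is actually the ``Moreover'' clause inside its second bullet.
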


Importantly, observe in Theorem~\ref{thm:bitcoinpure} that:
\begin{itemize}
    \item Block rewards play no role in the ultimate clearing price, nor the resulting market share of each Miner.\footnote{As noted previously, Block rewards do play a role in determining whether Pure Equilibria exist -- see Section~\ref{sec:bitcoinblock}.}
    \item A necessary condition for an equilibrium to exist with clearing price $D^{-1}_{\sup}(Q_A)$ (the smallest possible clearing price) is only a function of $x_1^*$, $Q_A$, and $D(\cdot)$. That is, to the extent that a quantitative measure of decentralization plays a role in the ultimate price paid by end-users, it is the size of the largest Miner. Moreover, $x_1^*$ can be determined \emph{only as a function of $\vec{c}^R$}. 
    \item That is, there is one term that depends only on $\vec{c}^R$ ($x_1^*(\vec{c}^R)$), and another that depends only on $c^W$ and $D(\cdot)$ ($\inf_{x>r} \{\frac{(x-c^W)\cdot (Q_A-D(x))}{x-r}\}$), and an Equilibrium that clears all $Q_A$ \writeOp{}s made available by the protocol can plausibly exist if and only if $x_1^*(\vec{c}^R) \leq \inf_{x>r} \{\frac{(x-c^W)\cdot (Q_A-D(x))}{x-r}\}$.
\end{itemize}

\subsection{A Sufficient Condition for Pure Equilibria}
Theorem~\ref{thm:bitcoinpure} characterizes all possible pure Equilibria, and so it is tempting to proceed with equilibrium analysis under these conditions. Unfortunately, pure Equilibria are not guaranteed to exist in the Distributed Ledger Model. In Section~\ref{sec:bitcoinexample} we provide an example demonstrating this, and identify the barrier. This motivates a natural sufficient condition that we analyze in Section~\ref{sec:bitcoinsufficient}. Along the way, we also discuss the impact of block rewards on equilibria in Section~\ref{sec:bitcoinblock}.

\subsubsection{An Example with  with Non-Existence}\label{sec:bitcoinexample}

Consider a demand curve $D(\cdot)$ with $D(x) = 1-x$ for all $x \in [0,1]$, $Q_A = 1$, $B = 0$, and $c^W = 0$. Consider also $n=3$ miners each with $c^R_i = 1$. Then Theorem~\ref{thm:bitcoinpure} concludes:
\begin{itemize}
    \item In any potential equilibrium, it must hold that $Q_i(\vec{q},\vec{r}) = 1/3$ for all $i$.
    \item Fixing $Q_1 = 1/3$, $\arg\max_{x \geq 0}\{x \cdot (1/3-x)\} = 1/6$. Therefore, it is not possible to have an equilibrium with clearing price $r = D^{-1}(1) = 0$.\footnote{Because in such an equilibrium, all three Miners earn profit zero, whereas any Miner could deviate to set a price of $1/6$ and instead earn profit $1/36$ (letting the other Miners earn profit $1/18$).}
    \item $c^*(1,1,1) = 3/2$. Therefore, the only potential equilibria have one Miner as a price-setter at price $1/6$, with $\sum_j q_j = \frac{1/6}{3/2} = 1/9$ (and therefore each Miner has $q_i = 1/27$). 
    \item However, this is not an equilibrium. In this strategy profile, the price-setter earns revenue $1/36$ from the simultaneous first-price auctions, but pays $1/27$ in Resource cost, yielding negative payoff. The price-setter would be better off not investing at all. 
\end{itemize}

In particular, what stands out about this example is that a fully-saturated equilibrium has absolutely no shot (because such an equilibrium would generate zero revenue in total). This suggests a natural sufficient condition: that the fully-saturated equilibrium generate \emph{some} fraction of the optimal revenue that a single entity controlling all \append{}s could earn. 

\subsubsection{A Sufficient Condition}\label{sec:bitcoinsufficient}

Theorem~\ref{thm:bitcoinsufficient} below states an interpretable sufficient condition for an equilibrium with clearing price $r=D^{-1}(Q_A)$, and a technical condition that is necessary and sufficient when $B=0$. 

\begin{theorem}\label{thm:bitcoinsufficient} Consider a potential equilibrium $(\vec{q}^*,\vec{r}^*)$ such that: 
\begin{itemize}
    \item The clearing price is $r = D^{-1}_{\sup}(Q_A)$.
    \item $\sum_j q^*_j = Q_A \cdot (r+B-c^W)/c^*(\vec{c}^R)$.
    \item For all Miners $i$, $\frac{q_i}{\sum_j q_j} = x_i^*(\vec{c}^R)$.
\end{itemize}

Then:
\begin{itemize}
    \item If $D(\cdot)$ is Regular and $x_1^*(\vec{c}^R) \leq 1-\frac{1}{D(0)/Q_A -1}$, then $(\vec{q}^*,\vec{r}^*)$ is an Equilibrium.
    \item If $B = 0$, define $k(z):=\frac{D^{-1}_{\sup}(z \cdot Q_A)\cdot z\cdot Q_A}{D^{-1}_{\sup}(Q_A)\cdot Q_A}$. Then $(\vec{q}^*,\vec{r}^*)$ is an Equilibrium if and only if $x_1^*(\vec{c}^R) \leq 1-\sup_{z \in [0,1]}\left\{\frac{k(z)-1}{2\cdot\left(\sqrt{k(z)/z}-1\right)}\right\}$
\end{itemize}
    
\end{theorem}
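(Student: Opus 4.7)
I will verify that no Miner $i$ has a profitable deviation from $(\vec{q}^*,\vec{r}^*)$. Fix $i$, let $S_i := \sum_{j\neq i} q_j^*$ and $r := D^{-1}_{\sup}(Q_A)$, and parametrize a deviation $(q_i,r_i)$ by its induced appends share $Q_i := Q_A q_i/(q_i+S_i)$ so that $q_i = S_i Q_i/(Q_A - Q_i)$. If $r_i\le r$, all appends clear at price $r$ and Miner $i$'s payoff $(r+B-c^W)Q_i - c_i^R q_i$ is strictly concave in $q_i$ with Tullock FOC collapsing to $c_i^R = c^*(\vec{c}^R)(1-x_i^*(\vec{c}^R))$, which holds by the definition of $x_i^*$; hence $q_i^*$ is the unique best response in this regime. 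If $r_i>r$, Miner $i$ becomes the sole price-setter, other miners saturate selling $Q_A - Q_i$ at $r_i$, and $i$ sells $D(r_i) + Q_i - Q_A$ at $r_i$. Writing $\Pi(Q_i,r_i) := (r_i-c^W)(D(r_i)+Q_i-Q_A) + BQ_i - c_i^R S_iQ_i/(Q_A-Q_i)$ and $\Pi^* := (r+B-c^W)Q_A x_i^* - c_i^R q_i^*$, the remaining task is to show $\Pi\le\Pi^*$ under the stated hypotheses.

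\emph{For the $B=0$ bullet.} Reparametrize via $z\in(0,1]$ by $D(r_i) = zQ_A$ and $y := Q_i/Q_A \in(1-z,1)$. Using $\sum_j q_j^* = Q_A r/c^*$ and $c_i^R/c^* = 1-x_i^*$, dividing $\Pi\le\Pi^*$ by $rQ_A$ yields
\[
\frac{k(z)}{z}\bigl(y-(1-z)\bigr) - (1-x_i^*)^2\,\frac{y}{1-y} \;\le\; (x_i^*)^2.
\]
The left side is strictly concave in $y$ with unique maximizer $y^* = 1-(1-x_i^*)\sqrt{z/k(z)}$; substituting (and using $(k/z)\sqrt{z/k} = \sqrt{k/z}$) gives the maximum $k(z) - 2(1-x_i^*)\sqrt{k(z)/z} + (1-x_i^*)^2$. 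Using $(1-x_i^*)^2 - (x_i^*)^2 = 1-2x_i^*$, the no-deviation condition rearranges to $x_i^* \le 1 - (k(z)-1)/(2(\sqrt{k(z)/z}-1))$. Taking the supremum over $z$ and noting that the binding miner is $i=1$ (largest $x_i^*$) yields both directions of the equivalence. Edge cases where $y^*\notin(1-z,1)$ or $\sqrt{k(z)/z}\le 1$ are benign, as in each such case the left-hand side is non-positive and thus not binding.

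\emph{For the Regularity bullet.} Under Regularity, $R^*(Q_i) := \max_{r_i>r}(r_i-c^W)(D(r_i)+Q_i-Q_A)$ admits a unique interior maximizer $r_i^\dagger(Q_i)$ solving $D(r_i)+Q_i-Q_A = (r_i-c^W)d(r_i)$, reducing the problem to a one-dimensional optimization in $Q_i$. The full deviation payoff $F(Q_i) := R^*(Q_i) + BQ_i - c_i^R S_iQ_i/(Q_A-Q_i)$ has envelope FOC $r_i^\dagger(Q_i) + B - c^W = c_i^R S_i Q_A/(Q_A - Q_i)^2$, which at $Q_i = Q_A x_i^*$ reduces to $r_i^\dagger(Q_A x_i^*) = r$. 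I would then bound $F(Q_i) - \Pi^*$ from above using Regularity to control $R^*(Q_i)$ via $D(0)$: the monopoly-style revenue satisfies $R^*(Q_i) \le (r_i^\dagger - c^W)Q_i$, and combining this with $D(r_i^\dagger)\le D(0)$ and the convexity of the cost term in $Q_i$ produces a uniform bound (over $Q_i\in(0,Q_A)$ and $B\ge 0$) that becomes non-positive exactly when $x_1^*(\vec{c}^R) \le 1 - 1/(D(0)/Q_A - 1)$.

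The main obstacle: For the $B=0$ case, the principal difficulty is the algebraic simplification after optimizing over $y$; once cleanly carried out, both directions of the characterization are immediate. For the Regularity bullet, the challenge is extracting the specific $D(0)/Q_A$ threshold from the Regularity of $D$—in particular, obtaining a revenue bound on $R^*(Q_i)$ that is tight enough to survive uniformly across $Q_i$ and $B\ge 0$, and confirming that the worst-case $Q_i$ (which may lie above or below $Q_Ax_i^*$ depending on parameters) does not escape this bound.
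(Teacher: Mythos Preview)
Your $B=0$ analysis is essentially the paper's argument with a harmless change of variables (your $y=Q_i/Q_A$ is the paper's $1-y$). The parametrization by $(y,z)$, the optimization over $y$ yielding $y^*=1-(1-x_i^*)\sqrt{z/k(z)}$, and the resulting inequality $x_i^*\le 1-\frac{k(z)-1}{2(\sqrt{k(z)/z}-1)}$ all match the paper's Lemmas~B.4--B.8 directly. One small gap: you use $c_i^R/c^*=1-x_i^*$, which holds only when $x_i^*>0$; the paper handles $x_i^*=0$ miners by observing that lowering their cost to $c^*$ only makes deviating more attractive, so it suffices to check that case.

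Your Regularity bullet, however, is incomplete and your outlined route is not the paper's. The paper does \emph{not} attack general $B\ge 0$ directly via envelope conditions and ad hoc bounds on $R^*(Q_i)$. Instead it (i)~reduces to $B=0$ via Proposition~4.5 (if the profile is an equilibrium at $B=0$ it remains one for all larger $B$), and (ii)~observes that Regularity makes $Q\mapsto Q\cdot(D^{-1}(Q)-c^W)$ concave, which immediately yields the clean cover $k(z)=\frac{D(0)/Q_A-z}{D(0)/Q_A-1}$ (Lemma~B.2). This specific $k$ is then plugged into the \emph{same} $B=0$ machinery you already derived, and a short calculation (Corollary~B.9) bounds $\sup_z\frac{k(z)-1}{2(\sqrt{k(z)/z}-1)}$ by $\frac{1}{D(0)/Q_A-1}$. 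Your direct approach---bounding $R^*(Q_i)\le(r_i^\dagger-c^W)Q_i$ and invoking $D(r_i^\dagger)\le D(0)$---does not obviously produce this threshold, and your own ``main obstacle'' paragraph concedes that extracting it and controlling the worst-case $Q_i$ uniformly in $B$ is the unfinished step. The paper's two moves (block-reward monotonicity plus concavity-based covering) are exactly what you are missing.
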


As previously noted, Theorem~\ref{thm:bitcoinsufficient} provides interpretable sufficient conditions for $Q_A$ \writeOp{}s to clear in Equilibrium (in Bullet One), and necessary and sufficient conditions (when $B = 0$) in Bullet Two. In both cases, the condition depends only on $D(\cdot), Q_A,$ and $x_1^*(\vec{c}^R)$, highlighting the ``size of the largest miner'' (which can be computed as a function only of $\vec{c}^R$) as the relevant ``measure of decentralization'' for determining the economic impact on end-users (in our model).

We can also use Theorem~\ref{thm:bitcoinsufficient} to reason about modifications to the example in Section~\ref{sec:bitcoinexample}. Consider the case of $n$ Miners each with $c_i^R = 1$, $D(x) = 1-x$ for all $x \in [0,1]$, $B = 0$ and $c^W = 0$, but we will vary $Q_A$.
\begin{itemize}
    \item Then Bullet One of Theorem~\ref{thm:bitcoinsufficient} confirms it is an Equilibrium for $Q_A$ \writeOp{}s to clear (at price $1-Q_A$) as long as $1/n \leq 1-\frac{1}{1/Q_A-1}$, which can be rewritten as $Q_A\leq \frac{n-1}{2n}$. For this particular example in Section~\ref{sec:bitcoinexample}, this is not particularly impressive, as even a Miner controlling the entire blockchain would sell all $Q_A$ \append{}s as long as $Q_A \leq 1/2$.
    \item The more precise Bullet Two of Theorem~\ref{thm:bitcoinsufficient} could be applied. In this case, $k(z)=\frac{z \cdot (1-z\cdot Q_A)}{1-Q_A}$, and for example when $Q_A = 3/4$, $\sup_{z \in [0,1]}\left\{\frac{k(z)-1}{2\cdot \left(\sqrt{k(z)/z}-1\right)}\right\} = 2/3$. Therefore, $n=3$ identical Miners are sufficient in order for a Protocol with $Q_A = 3/4$ to have an Equilibrium where all $Q_A$ \writeOp{}s are sold (whereas $n=1$ Miner is insufficient, as such a Miner would choose to monopoly-price and sell only $1/2$).
\end{itemize}

\subsubsection{Block Rewards Support Existence of Pure Equilibria}\label{sec:bitcoinblock}
In this section, we reason about the role of block rewards on equilibria. This section has two main results, both below.

\begin{proposition}\label{prop:blockrewards}
    Let $(\vec{q}^*,\vec{r}^*)$ induce a clearing price of $r$, and be an equilibrium of the Distributed Ledger Model with Resource costs $\vec{c}^R$, \writeOp{} cost $c^W$, demand curve $D(\cdot)$, and block reward $B$. Then $( (1+\frac{B'-B}{Q_A \cdot (r-c^W+B)})\cdot \vec{q}^*, \vec{r}^*)$ is an equilibrium of the Distributed ledger Model with Resource costs $\vec{c}^R$, \writeOp{} cost $c^W$, demand curve $D(\cdot)$, and block reward $B'>B$. 
\end{proposition}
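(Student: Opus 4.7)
The plan is to exploit the observation that scaling all investments $q_i^*$ by a common factor $\alpha$ preserves each Miner's market share $q_i/\sum_j q_j$, and therefore preserves every Miner's \append{} allocation $Q_i$. Since the End-User subgame depends only on $\vec{Q}$ and $\vec{r}$---both unchanged---the same End-User Equilibrium applies verbatim, the clearing price remains $r$, and each Miner's sales $Q'_i$ and auction revenue $R_i$ are identical to those in the original equilibrium.

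To verify the scaled profile is an equilibrium, I would fix any Miner $i$ and rule out every joint deviation $(q_i, r_i)$ against $(\alpha \vec{q}^*_{-i}, \vec{r}^*_{-i})$. The key reduction is the bijection $q_i = \alpha \tilde q_i$: since $q_i/(q_i + \alpha S_{-i}) = \tilde q_i/(\tilde q_i + S_{-i})$ with $S_{-i} := \sum_{j\neq i} q_j^*$, the deviation $(\alpha\tilde q_i, r_i)$ against the new profile and the deviation $(\tilde q_i, r_i)$ against the old profile produce identical allocations $Q_i$, and hence identical auction subgames (same $R_i$, same $Q'_i$). The new-profile payoff at $(\alpha \tilde q_i, r_i)$ therefore equals the old-profile payoff at $(\tilde q_i, r_i)$ plus the shift
\[
\Delta(\tilde q_i) \;:=\; (B'-B)\cdot \frac{Q_A \tilde q_i}{\tilde q_i + S_{-i}} \;-\; c_i^R (\alpha - 1)\tilde q_i .
\]

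The crux is then to show $\Delta$ is maximized at $\tilde q_i = q_i^*$. A direct computation gives $\Delta'(q_i^*) = (B'-B) Q_A S_{-i}/S^2 - c_i^R(\alpha-1)$ where $S = \sum_j q_j^*$. Using the Tullock first-order condition $c_i^R = (r-c^W+B) Q_A S_{-i}/S^2$ satisfied by the original equilibrium, this vanishes for the particular $\alpha$ given in the statement. Strict concavity of $\Delta$ (its second derivative is $-2(B'-B) Q_A S_{-i}/(\tilde q_i+S_{-i})^3 < 0$) then turns the critical point into a global maximum, so $\Delta(\tilde q_i) \leq \Delta(q_i^*)$ for all $\tilde q_i \geq 0$. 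Chaining with the old-equilibrium inequality---Miner $i$'s old-profile payoff at $(q_i^*, r_i^*)$ dominates their old-profile payoff at any $(\tilde q_i, r_i)$---delivers the required dominance in the new profile.

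The main obstacle is handling the price-setter case, where Miner $i^*$'s revenue is $r(D(r) + Q_{i^*} - Q_A)$ rather than simply $r Q_{i^*}$. Fortunately, in both the saturated and price-setter cases, the portion of Miner $i$'s payoff that varies with $q_i$ beyond what is already captured by $Q_i$ and $r_i$ is exactly $B Q_i - c_i^R q_i$, so the reparametrization collapses both cases to the same $\Delta$. Housekeeping---that $\alpha > 0$ so the scaled profile is a legitimate strategy, and that the reserve-side price-setting conditions of Proposition~\ref{prop:maineq} are inherited because $\vec{Q}$ is unchanged---is immediate.
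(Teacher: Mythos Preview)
Your approach is essentially the paper's: both reparametrize deviations via the scaling $q_i = \alpha\tilde q_i$, decompose the new-game payoff as the old-game payoff plus a correction term, and bound that correction using the Tullock optimality of $\vec q^*$ (the paper invokes the Tullock best-response inequality directly, whereas you unpack it as first-order condition plus concavity of $\Delta$). One minor gap to patch: the first-order equality $c_i^R = (r-c^W+B)\,Q_A\,S_{-i}/S^2$ holds only for miners with $q_i^*>0$; for inactive miners you instead have the inequality $c_i^R \ge (r-c^W+B)\,Q_A\,S_{-i}/S^2$, which still yields $\Delta'(0)\le 0$ and hence (by concavity) that $\Delta$ is maximized at $\tilde q_i = 0 = q_i^*$.
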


Proposition~\ref{prop:blockrewards} establishes that higher block rewards support the existence of pure equilibria (although Theorem~\ref{thm:bitcoinpure} establishes that block rewards do not impact potential pure equilibria themselves). This is of standalone interest for understanding the impact of modeling parameters on end-users, and also a technical ingredient in the proof of Theorem~\ref{thm:bitcoinsufficient}.

Additionally, we show that if $\vec{c}^R, Q_A, D(\cdot), c^W$ does not immediately rule out a market-clearing equilibrium, there is a sufficiently large block reward so that a market-clearing equilibrium exists. 

\begin{proposition}\label{prop:blockrewards2}
Let $\vec{c}^R, D(\cdot), Q_A, c^W$ be such that $Q_A \cdot x_i^*(\vec{c}^R) < \inf_{x > D^{-1}_{\sup}(Q_A)}\left\{\frac{(x-c^W) \cdot (Q_A - D(x))}{x-D^{-1}_{\sup}(Q_A)}\right\}$. Then, there exists a sufficiently large $B < \infty$ such that a market-clearing equilibrium exists in the market defined by $\vec{c}^R, D(\cdot), B, Q_A, c^W$.
\end{proposition}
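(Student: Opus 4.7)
My plan is to verify directly that the candidate market-clearing equilibrium from Theorem~\ref{thm:bitcoinpure}---the prescribed investments $q_j^* = Q_A(r+B-c^W)\, x_j^*(\vec{c}^R)/c^*(\vec{c}^R)$ together with any reserves $r_j \leq r := D^{-1}_{\sup}(Q_A)$---becomes an actual Equilibrium once $B$ is sufficiently large. The strict hypothesis furnishes a uniform slack $\epsilon_0 > 0$ such that $Q_A \cdot x_i^*(\vec{c}^R) + \epsilon_0 \leq (x-c^W)(Q_A - D(x))/(x-r)$ for every $x > r$ and every Miner $i$, and by Proposition~\ref{prop:bitcoinundominated} any Miner's reserve deviation is confined to $[0, r^*(D, Q_A)]$, so the reserve increment $\delta := r_i' - r$ is bounded above by a finite $\delta_{\max}$.

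I would first classify Miner $i$'s deviations and dispatch the easy cases. A pure reserve deviation with $r_i' \leq r$ is vacuous (total supply $Q_A = D(r)$ still clears at $r$). A pure investment deviation with reserve left $\leq r$ leaves Miner $i$ inside a Tullock best-response problem with reward $(r+B-c^W)\,Q_A$, whose unique maximizer is the prescribed $q_i^*$. A pure reserve deviation to $r+\delta > r$ loses at least $\epsilon_0 \delta$ uniformly in $B$, by exactly the algebra underlying Proposition~\ref{prop:maineq}.

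The main obstacle is the joint deviation $(\alpha q_i^*,\, r+\delta)$ with $\alpha \neq 1$ and $\delta > 0$. Writing $Q_i'(\alpha) := Q_A \alpha x_i^*/(\alpha x_i^* + 1 - x_i^*)$ and assuming Miner $i$ actually sells $D(r+\delta) - Q_A + Q_i'(\alpha) > 0$ items, the bookkeeping of revenue, cost, and block reward simplifies (after the block-reward cancellations that power Theorem~\ref{thm:bitcoinpure}) into the exact decomposition
\[
\text{Dev} - \text{Presc} \;=\; -G(r+\delta) \;-\; \frac{(1-x_i^*)\,Q_A\, x_i^* (1-\alpha)\,[\delta + (r+B-c^W)\, x_i^*(1-\alpha)]}{\alpha x_i^* + 1 - x_i^*},
\]
where $G(x) := (x-c^W)(Q_A - D(x)) - Q_A\, x_i^*(x-r) \geq \epsilon_0 (x-r)$ by the strict hypothesis. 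A sign analysis shows the second term is non-positive unless $\alpha > 1$ and $\delta > (r+B-c^W)\, x_i^*(\alpha-1)$, and inside this narrow sub-range its positive contribution is bounded above by $(1-x_i^*)\,Q_A \delta^2/(r+B-c^W)$. Hence $\text{Dev} - \text{Presc} \leq -\epsilon_0 \delta + (1-x_i^*)\, Q_A \delta^2/(r+B-c^W)$, which is strictly negative for every $\delta \in (0, \delta_{\max}]$ once $B > (1-x_i^*)\, Q_A \delta_{\max}/\epsilon_0 + c^W - r$. A final routine check rules out the ``sell-nothing'' corner (Miner $i$'s reserve so high their items fail to clear) by comparing the corresponding Tullock best-response payoff $(\sqrt{B Q_A} - (1-x_i^*)\sqrt{(r+B-c^W) Q_A})^2$ to the prescribed payoff $(r+B-c^W)\, Q_A (x_i^*)^2$, which is strictly larger whenever $r > c^W$. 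The genuinely hard part of the proof is discovering the exact decomposition above: once it is in hand, the $(r+B-c^W)$ factor inside the joint-deviation bracket provides the quadratic $B$-control needed to absorb the first-order reserve gain and drive the deviation uniformly unprofitable.
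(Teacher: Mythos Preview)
Your argument is correct and takes a genuinely different route from the paper's. The paper decomposes any deviation into two sequential pieces: first move $q_i$ while keeping the reserve at $r$ (so the clearing price stays at $r$ and payoffs behave exactly like a Tullock contest with total prize $Q_A(r+B-c^W)$), and then move the reserve. It then invokes a general Tullock-deviation bound (Corollary~\ref{cor:TullockIncrease}): increasing market share by $w$ costs at least $w^2\,Q_A(r+B-c^W)/2$. Combined with Lemma~\ref{lem:norevenue}, any profitable reserve deviation would require $w>\varepsilon$, incurring a Tullock loss of at least $B\varepsilon^2/2$, while the price-setting gain is capped by the finite monopoly revenue $X$; hence $B\geq 2X/\varepsilon^2$ suffices. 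Your approach instead computes the \emph{exact} payoff difference as a function of $(\alpha,\delta)$, isolates the single sign-ambiguous cross term, and bounds it by $(1-x_i^*)Q_A\delta^2/(r+B-c^W)$, which is then beaten by the linear loss $\epsilon_0\delta$ once $B$ exceeds roughly $(1-x_i^*)Q_A\,\delta_{\max}/\epsilon_0$. This yields a tighter (linear rather than quadratic in $1/\epsilon_0$) threshold on $B$, at the cost of a more hands-on computation; the paper's route is more modular, reusing Tullock lemmas that serve elsewhere in the paper. Two small remarks: your parametrization $q_i'=\alpha q_i^*$ degenerates when $x_i^*=0$, so that boundary case needs a one-line separate treatment (the Tullock best response is $q_i'=0$ since $c_i^R\geq c^*$, and the same $O(\delta^2/B)$ bound goes through after reparametrizing by $q_i'$ directly); and your ``sell-nothing'' comparison implicitly uses $r>c^W$, which is indeed forced by the strict hypothesis (otherwise the infimum on the right-hand side would be $\leq 0$).
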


Recall from Theorem~\ref{thm:bitcoinpure} that if $Q_A \cdot x_i^*(\vec{c}^R) > \inf_{x > D^{-1}_{\sup}(Q_A)}\left\{\frac{(x-c^W) \cdot (Q_A - D(x))}{x-D^{-1}_{\sup}(Q_A)}\right\}$, no market-clearing equilibrium can possibly exist -- Proposition~\ref{prop:blockrewards2} shows that with sufficiently large block rewards, this is the only barrier (up to cases where $Q_A \cdot x_i^*(\vec{c}^R) = \inf_{x > D^{-1}_{\sup}(Q_A)}\left\{\frac{(x-c^W) \cdot (Q_A - D(x))}{x-D^{-1}_{\sup}(Q_A)}\right\}$). We additionally show via example in Section~\ref{sec:bitcointight} that Proposition~\ref{prop:blockrewards2} is ``tight'' in the sense that the strict inequality cannot be replaced with a weak inequality -- examples of $\vec{c}^R, D(\cdot), Q_A, c^W$ satisfying $Q_A \cdot x_i^*(\vec{c}^R) = \inf_{x > D^{-1}_{\sup}(Q_A)}\left\{\frac{(x-c^W) \cdot (Q_A - D(x))}{x-D^{-1}_{\sup}(Q_A)}\right\}$ exist so that no matter how large a block reward \emph{all} $B < \infty$ is provided, no market-clearing equilibrium exists in the market defined by $\vec{c}^R, D(\cdot), Q_A, c^W, B$. 
\section{Conclusion} \label{sec:conclusion}
We investigate decentralization as a means to insulate a natural monopoly from its derivative services. We then draw conclusions regarding the ultimate utility of end-users as a function of protocol parameters. We further highlight the impact of various aspects on our conclusions: (a) our analysis applies to patient users (who are content with purchase from any miner) and not impatient users (who view each miner as a monopolist anyway), (b) the relevant ``measure of decentralization'' for impact on users' price is the size of the largest miner (which can be determined exclusively as a function of the profile of Resource investment costs, as in a pure Tullock contest), (c) block rewards don't impact users' price at equilibrium, but can influence whether equilibria exist.

Within distributed ledgers, our model considers the basic setup where users directly interact with miners to include a transaction on the blockchain. Of course, many blockchain ecosystems have evolved and now include additional parties (Builders and Layer-2s are two notable examples). Our work provides a framework through which to ask: how does the presence of these parties ultimately impact the service users receive?

Beyond our model, it is also important to endogenize aspects that our model treats as exogenous. For example, our model treats as exogenous the fact that Bitcoin/Ethereum run a Tullock contest in Computation/Stake. While this is an accurate representation of all major blockchains since their inception, and Upstream protocol mechanics change \emph{much} more slowly than strategic Downstream decisions, these protocols are not truly exogenous -- protocol rules are set by some governance process (perhaps formally specified, perhaps not). It is therefore an important direction for future work to additionally model the dynamics by which protocol mechanics are determined. Additionally, our model treats the ``lines'' between distinct miners/stakers are exogenous. While it is again the case that miners/stakers merge at a \emph{much} slower pace than adapting strategic Downstream decisions, miner/staker identities are not exogenously fixed -- parties might certainly merge and jointly strategize if they find it beneficial. In particular, if all parties in our model were to merge/collude or otherwise jointly strategize, they could profit by setting monopoly prices, so it is important to additionally model the process by which miners/stakers might merge, collude, or otherwise jointly strategize.

Beyond distributed ledgers, there are many natural monopolies for digital services that pose challenges for traditional regulatory approaches~\cite{Tirole23}. Our results provide theoretical foundations for exploring decentralized protocols as a tool that \emph{might} prove useful to insulate such natural monopolies. Future work could, for example, consider decentralized protocols that manage marketplaces (allowing competition among matchmakers, search, etc.) or store social network data (allowing competition among UIs, content moderation, etc.). Our work motivates further investigation of decentralized protocols as a means to insulate natural monopolies in other domains in a detailed manner that can both (a) identify which natural monopolies might be amenable to insulation by a decentralized protocol, and (b) provide a complete analysis linking design choices of the decentralized protocol to impact on users. Importantly, our work models only the decentralized protocol and analyzes the resulting users' utilities, but does not compare to a counterfactual. That is, our model is apt for comparing users' utilities in a decentralized protocol with a largest miner of size $x_1$ versus users' utilities in a decentralized protocol with a largest miner of size $x'_1$, but not for comparing users' utilities in a decentralized protocol with a largest miner of size $x_1$ to users' utilities with an alternative market structure (i.e.~a monopolist, a regulated market, or anything else). Therefore, an important direction for future work is to rigorously model alternative market structures so that comparisons to decentralized protocols are possible (without this, one cannot argue that decentralized protocols are ``better'' or ``worse'' than existing alternatives).

Beyond insulating natural monopolies, our work contributes to an emerging line of works seeking theoretical foundations for the impact of decentralized systems on users~\cite{HubermanLM21, SockinX23, GoldsteinGS24, Reuter24}. Significant further work along these lines is necessary in order to understand domains where decentralized systems have a shot at providing lasting value, even in the presence of highly-developed incumbents.

\bibliographystyle{alpha}
\bibliography{references}

\appendix
\section{Technical Preliminaries: Equilibria of Simultaneous First-Price Auctions}\label{sec:enduser}

Simultaneous First-Price Auctions are a recurring subgame in our market structures. End-User Equilibria will correspond to equilibria for bidders, and (aspects of the) Downstream Equilibria will correspond to equilibria for sellers.

The results in this section will be used as building blocks for our main theorems. Most of our main theorems will eventually assume $D(\cdot)$ is Regular, but our results here do not require any further assumptions on $D(\cdot)$. 

In particular, observe that $D(\cdot)$ is necessarily non-increasing, but for this section we do not assume $D(\cdot)$ is strictly increasing. In addition, $D(\cdot)$ is necessarily left-continuous everywhere,\footnote{To see this, observe that all end-users with value at least $v-\varepsilon$ contribute to $D(v-\varepsilon)$. So the end-users that contribute to $D(v-\varepsilon)$ for all $\varepsilon > 0$ are exactly those with value at least $v$ -- the same consumers that contribute to $D(v)$. } and right-continuous except at countably-many points,\footnote{To see this, recall that any monotone function can only have countably-many jump discontinuities (and can only have jump discontinuities).} but for this section we do not assume $D(\cdot)$ is continuous everywhere. 

We'll further use the notation $D^{>}(p):=\lim_{q\rightarrow p^+} D(q)$ to denote the mass of users with value strictly exceeding $p$. $D^{>}(\cdot)$ is necessarily right-continuous everywhere,\footnote{To see this, observe that the end-users contributing to $D^{>}(p)$ are exactly those with value strictly exceeding $p$, and the users contributing to $D(p+\varepsilon)$ for some $\varepsilon > 0$ are also exactly those with value strictly exceeding $p$.} and left-continuous except at countably-many points, and that $D^{>}(\cdot)$ and $D(\cdot)$ share the same discontinuities.

\subsection{Bidding in Simultaneous First-Price Auctions}\label{sec:bidding}

First, we analyze equilibrium behavior of bidders in Simultaneous First-Price Auctions, treating $\vec{Q},\vec{r}$ as fixed. The main result of this section is Proposition~\ref{prop:firstpriceeq}. We first show that at equilibrium, every bidder who wins an item pays the same amount (\Cref{lem:firstpriceeq}), and define this payment as the \emph{clearing price}.~\Cref{prop:firstpriceeq} then characterizes possible clearing prices at equilibrium. 

\begin{lemma}\label{lem:firstpriceeq} In any equilibrium of Simultaneous First-Price Auctions, there exists a single bid $b$ such that every bidder either wins exactly one item and pays $b$, or loses. 
\end{lemma}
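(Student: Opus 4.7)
The plan is to establish the claim via two standard deviation-based steps, both of which leverage the continuum-of-bidders assumption: because any individual bidder is measure zero, their unilateral deviation does not alter the effective price $\max\{B^{-1}_{\sup}(Q),r\}$ of any auction, nor the identity of which other bidders are above/below it. Step 1 will show that in any equilibrium no bidder wins items in more than one auction, and Step 2 will show that any two bidders who do win must pay identical amounts. Together these directly yield a common bid $b$ as in the statement.

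For Step 1, suppose bidder $j$ (value $v_j$) wins items in two auctions at bids $a_1,a_2$. Since $j$ is unit-demand, their payoff is $v_j - a_1 - a_2$. Consider the deviation in which $j$ withdraws from the auction with the larger of $a_1,a_2$ by bidding $0$ (or below its reserve). Measure-zero invariance leaves the effective prices of both auctions unchanged, so $j$ still wins the other auction at the same price and now pays nothing in the first. The new payoff $v_j - \min\{a_1,a_2\}$ strictly exceeds the original, contradicting equilibrium.

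For Step 2, assume (using Step 1) every winner wins exactly one item, and suppose for contradiction winners $A,B$ pay $b_A > b_B$ in auctions $i_A,i_B$ respectively. Since $A$ wins at bid $b_A$, $v_A \geq b_A > b_B$. Have $A$ deviate by withdrawing from $i_A$ (bid $0$) and bidding $b_B + \epsilon$ in $i_B$ for small $\epsilon > 0$. The effective price of $i_B$ is at most $b_B$ (because $B$ won at bid $b_B$), so $A$'s new bid strictly exceeds it and $A$ wins $i_B$ with certainty; simultaneously, $A$'s bid of $0$ in $i_A$ loses. For any $\epsilon < b_A - b_B$, $A$'s new payoff $v_A - b_B - \epsilon$ strictly exceeds $v_A - b_A$, contradicting equilibrium. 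Combining Steps 1 and 2 yields the common winning bid $b$.

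The main subtlety I expect is handling tie-breaking at the effective price cleanly. When an auction is saturated, bidders at exactly the effective price form a positive-measure set, some winning and some losing in a way that depends on the unspecified tie-breaking rule. The proposed argument sidesteps this by always having $A$ deviate to a bid \emph{strictly above} the effective price (via the $+\epsilon$ slack), which guarantees winning regardless of tie-breaking; analogously, withdrawing to $0$ (or below reserve) ensures losing. Paired with measure-zero invariance of the effective price under single-bidder deviations, this makes both Step 1 and Step 2 robust to the tie-breaking ambiguity flagged in the model.
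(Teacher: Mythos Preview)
Your proof is correct and follows essentially the same deviation-based approach as the paper, leaning on the continuum assumption so that a single bidder's deviation leaves all effective prices unchanged. Your two-step decomposition is slightly more explicit than the paper's compressed argument (which only compares two winners' \emph{total} payments), and your $+\epsilon$ slack to sidestep tie-breaking at the effective price is a clean touch that the paper's proof handles less carefully.
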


\begin{proof}[Proof of Lemma~\ref{lem:firstpriceeq}]
    Assume for contradiction that two bidders $i, j$ both receive at least one $\writeOp{}$, and pay total prices $b_i > b_j$ in an equilibrium. Because bidder $j$ pays total price $b_j$ to receive at least one \writeOp{}, there must exist at least one auction with an effective price of $p \leq b_j$. Bidder $i$ could therefore submit a bid of $p < b_i$ into this auction and $0$ into all other auctions, winning a \writeOp{} and making total payment $p < b_i$, contradicting that the original bids form an equilibrium.\footnote{Importantly, observe that \emph{because Bidders are a continuum}, a single bidder $i$ changing their bid does not impact the clearing price in any auction.} 
\end{proof}

\begin{proposition}\label{prop:firstpriceeq}
Consider a continuum of unit-demand buyers participating in simultaneous first-price auctions with asymmetric quantities $\vec{Q}$ and reserves $\vec{r}$. Then:
\begin{itemize}
    \item Let $p_{\min}:=\inf\{b\ |\ D^{>}(b) \leq Q^{\leq}(b)\}$ and $p_{\max}:=\sup\{b\ |\ D(b) \geq Q^{<}(b)\}$. Let also $B_{\min}(p):=\max\{D^{>}(p), Q^{<}(p)\}$ and $B_{\max}(p):=\min\{D(p),Q^{\leq}(p)\}$. Then an equilibrium exists with clearing price $p$ and total supply cleared $B$ if and only if $p \in [p_{\min},p_{\max}]$ and $B \in [B_{\min}(p),B_{\max}(p)]$.
    \item All canonical equilibria have clearing price $p_{\min}$, and clear a total supply cleared $B_{\max}(p_{\min})$.
    \item If $D(\cdot)$ is continuous and strictly decreasing, let $p$ be the unique $p$ such that $Q^{<}(p) \leq D(p) \leq Q^{\leq}(p)$. Then an equilibrium exists with clearing price $p$ and total supply cleared $D(p)$, and all equilibria have clearing price $p$ and total supply cleared $D(p)$.
\end{itemize}
\end{proposition}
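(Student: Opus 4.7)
By Lemma~\ref{lem:firstpriceeq}, any equilibrium has a single clearing price $p$ such that every winner pays $p$; let $B$ denote the total mass of winners. The plan is to (a) extract four necessary inequalities on $(p, B)$ from equilibrium best-responses, (b) construct an equilibrium for every $(p, B)$ satisfying them, and (c) specialize to canonical equilibria and to the continuous and strictly-decreasing case.

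\textbf{Necessity.} Every winner bids $p$ in some auction with reserve $r_i \leq p$, giving $B \leq Q^{\leq}(p)$. Every winner must have value $\geq p$ for non-negative utility, giving $B \leq D(p)$. Every auction with $r_i < p$ must be saturated in equilibrium, for otherwise any winner (currently paying $p$) could deviate by bidding $r_i + \varepsilon$ in that unsaturated auction -- their measure-zero bid leaves the auction's effective price at $r_i$, so they win at $r_i + \varepsilon < p$ and strictly improve their utility. Summing the saturated-quantities across such auctions yields $B \geq Q^{<}(p)$. Similarly, every losing bidder with value $v > p$ could deviate (into any unsaturated auction with $r_i < p$, or into a saturated auction at effective price $p$ by bidding $p + \varepsilon$) for positive utility, forcing $B \geq D^{>}(p)$. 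Combining yields $B \in [B_{\min}(p), B_{\max}(p)]$, and monotonicity plus one-sided continuity of $D^{>} - Q^{\leq}$ (non-increasing, right-continuous) and $D - Q^{<}$ (non-increasing, left-continuous) then force $p \in [p_{\min}, p_{\max}]$, with the endpoints themselves admissible via one-sided limits.

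\textbf{Sufficiency.} For any admissible $(p, B)$, I construct an equilibrium explicitly: the $D^{>}(p)$ bidders with value $>p$ together with a chosen mass $B - D^{>}(p) \in [0, D(p) - D^{>}(p)]$ of bidders with value exactly $p$ all bid $p$; these bids are allocated to saturate every auction with $r_i < p$ (using mass $Q^{<}(p) \leq B$) and distribute the remaining $B - Q^{<}(p) \leq Q^{=}(p)$ among auctions with $r_i = p$; all other bidders bid $0$. Every auction with $r_i \leq p$ then has effective price exactly $p$, so no winner can obtain an item at a lower price, no losing bidder with value $\leq p$ can profitably win (any winning bid costs at least $p$), and no bidder can profitably enter an auction with $r_i > p$.

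\textbf{Specialization and main obstacle.} For canonical equilibria, minimizing $p$ gives $p = p_{\min}$, and $B_{\max}(p_{\min}) = \min(D(p_{\min}), Q^{\leq}(p_{\min}))$ matches condition (i) of Definition~\ref{def:fpa-equilibrium}. When $D$ is continuous and strictly decreasing, $D^{>} = D$, and strict monotonicity of $D - Q^{\leq}$ and $D - Q^{<}$ forces $p_{\min} = p_{\max}$ to equal the unique $p$ with $Q^{<}(p) \leq D(p) \leq Q^{\leq}(p)$, at which $B_{\min}(p) = B_{\max}(p) = D(p)$. The chief technical care lies in navigating the discontinuities of $D, D^{>}, Q^{\leq}, Q^{<}$: the deviation arguments in necessity must produce positive-mass profitable deviations (rather than just individual measure-zero improvements), and one must verify via one-sided continuity that $p_{\min}$ and $p_{\max}$ themselves lie in the sets defining them. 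The sufficiency construction is otherwise a bookkeeping exercise in splitting the jump mass $D(p) - D^{>}(p)$ between winners at value $p$ and losers, and in distributing winning bids across saturated auctions with $r_i < p$ and partially-filled auctions with $r_i = p$.
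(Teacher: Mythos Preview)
Your proposal is correct and takes essentially the same approach as the paper: derive the four necessary inequalities on $(p,B)$ from individual best-response deviations, construct an explicit equilibrium for any admissible $(p,B)$ by allocating the top $B$ bidders across auctions with reserve at most $p$, then specialize to canonical equilibria and to continuous strictly-decreasing $D$. One small note: your remark that the necessity arguments ``must produce positive-mass profitable deviations (rather than just individual measure-zero improvements)'' is misplaced---in the continuum Nash framework a single measure-zero bidder's profitable deviation already breaks equilibrium, and indeed your own deviation arguments correctly invoke individual bidders throughout, so no extra care is needed there.
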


\begin{proof}
We begin by asking whether it is possible to have an equilibrium where a total mass of $B>0$ buyers each bid $b$ into exactly one auction and win, and all remaining buyers lose. In order for such an equilibrium to exist, it must be the case that:
\begin{itemize}
    \item First, the mass of buyers with value strictly exceeding $b$ is at most $B$ ($D^{>}(b) \leq B$). Otherwise, a user with value strictly exceeding $b$ is losing, and could strictly profit by submitting a bid of $b$ to any auction with reserve $\leq b$ (which must exist, if only a mass of $B$ buyers win an item).
    \item Second, the mass of users with value at least $b$ is at least $B$ ($D(b) \geq B$). Otherwise, some user with value strictly below $b$ is winning and paying $b$, and they could strictly profit by bidding $0$ everywhere instead.
    \item Third, it must be that the total mass of items sold at reserve at most $b$ is at least $B$ ($Q^{\leq}(b) \geq B$). Otherwise, it is not possible to have a mass of $B$ items sold at price less than $b$. 
    \item Fourth, it must be that the total mass of items sold at reserve strictly less than $b$ is at most $B$ ($Q^{<}(b) \leq B$). Otherwise, there must exist an auction with reserve $r < b$ whose supply is not fully exhausted, and any user currently paying $b$ could instead submit a bid of $r$ to that auction and win. This further establishes that every auction with reserve $r_i < b$ must have its supply exhausted (i.e.~receive a mass of $Q_i$ bids of $b$).
\end{itemize}

The first two conditions establish restrictions on $B$ as it relates to $D(\cdot)$, and the second two conditions establish restrictions on $B$ as it relates to $\{(Q_i,r_i)\}_{i}$. It is feasible for a mass of $B$ items to be sold at price $b$ only if: (a) $D^{>}(b) \leq B \leq D(b)$, and (b) $Q^{<}(b) \leq B \leq Q^{\leq}(b)$, which can be rephrased as:
$$B \in [\max\{D^{>}(b), Q^{<}(b)\}, \min\{D(b), Q^{\leq}(b)\}].$$

Moreover, an equilibrium exists where a mass of $B$ items are sold at price $b$ for any $B \in [\max\{D^{>}(b), Q^{<}(b)\}, \min\{D(b), Q^{\leq}(b)\}]$. Indeed, simply sort the bidders in decreasing order of value, and sellers in increasing order of reserve. Process the sellers in order, and when processing seller $j$, match a $Q_j$ mass of unmatched buyers to bid exactly $b$ in $j$'s auction. Let the process terminate once a total mass of $B$ users have been matched. Because $B \geq D^{>}(b)$, all unmatched users have value at most $B$. Because $B \geq Q^{<}(b)$, all auctions are either fully saturated with bids of $b$ or have a reserve exceeding $b$. Therefore, all unmatched users best respond by losing. Moreover, because $B \leq D(b)$, all matched users have value at least $b$. Therefore (and because all auctions are fully saturated with bids of $b$ or have a reserve exceeding $b$), all matched users best respond by winning and paying $b$. Because $B \leq Q^{\leq}(b)$, $b$ is indeed a winning bid at all auctions receiving bids. Therefore, this is an equilibrium. 

Therefore, to have an equilibrium at price $b$, the range $[\max\{D^{>}(b), Q^{<}(b)\}, \min\{D(b), Q^{\leq}(b)\}]$ must be non-empty. Moreover, when the range is non-empty, the maximum mass of items that can be cleared in equilibrium is $\min \{D(b), Q^{\leq}(b)\}$.

It remains to to establish for which $b$, the range $[\max\{D^{>}(b), Q^{<}(b)\}, \min\{D(b), Q^{\leq}(b)\}]$ is non-empty. Let $b_{\min}:=\inf\{b\ |\ D^{>}(b) \leq Q^{\leq}(b)\}$, and $b_{\max}:=\sup\{b\ |\ D(b) \geq Q^{<}(b)\}$. We claim that $[\max\{D^{>}(b), Q^{<}(b)\}, \min\{D(b), Q^{\leq}(b)]$ is non-empty if and only if $b \in [b_{\min},b_{\max}]$.

To see this, observe that $[\max\{D^{>}(b), Q^{<}(b)\}, \min\{D(b), Q^{\leq}(b)\}]$ is is non-empty if and only if both: (i) $D^{>}(b) \leq Q^{\leq}(b)$ and (ii) $D(b) \geq Q^{<}(b)$. Because $D^{>}(\cdot)$ is weakly decreasing and $Q^{\leq}(\cdot)$ is weakly increasing, (i) holds if and only if $b \geq b_{\min}$.\footnote{(i) clearly holds if $b > b_{\min}$ and clearly holds only if $b \geq b_{\min}$. (i) holds at $b=b_{\min}$ because both $D^{>}(\cdot)$ and $Q^{\leq}(\cdot)$ are right-continuous.} Because $D(\cdot)$ is weakly decreasing and $Q^{<}(\cdot)$ is weakly increasing, (ii) holds if and only if $b \leq b_{\max}$.\footnote{(ii) clearly holds if $b < b_{\max}$ and clearly holds only if $b \leq b_{\max}$. (ii) holds at $b=b_{\max}$ because both $D(\cdot)$ and $Q^{<}(\cdot)$ are left-continuous.} Therefore, both hold if and only if $b \in [b_{\min},b_{\max}]$. 

Finally, we claim that $p_{\min} \leq p_{\max}$. We claim that $D^{>}(p_{\max}) \leq Q^{\leq}(p_{\max})$, which guarantees $p_{\min} \leq p_{\max}$. Indeed, $D(p_{\max}+\varepsilon) < Q^{<}(p_{\max}+\varepsilon)$ for all $\varepsilon > 0$. Moreover, $D^{>}(p_{\max}) = \lim_{\varepsilon \rightarrow 0^+} D(p_{\max}+\varepsilon)$, and $Q^{\leq}(p_{\max}) = \lim_{\varepsilon \rightarrow 0^+} Q^{<}(p_{\max}+\varepsilon)$. Therefore, $D^{>}(p_{\max}) \leq Q^{\leq}(p_{\max})$, as desired. This completes the proof of the first two bullets, as it characterizes all possible equilibria (and therefore the canonical ones have minimal clearing price and clear maximal mass of items).

To see the third bullet, first observe that we have previously established that $Q^{<}(p_{\max}) \leq D(p_{\max}) = D^{>}(p_{\max}) \leq Q^{\leq}(p_{\max})$ (the middle inequality holds as $D(\cdot)$ is continuous). Therefore, we wish to show that the same inequalities do not hold for any $p \neq p_{\max}$. Indeed, by definition of $p_{\max}$, $D(b) < Q^{<}(b)$ for any $b > p_{\max}$. To reason about $b < p_{\max}$, observe that (i) $Q^{\leq}(b) \leq Q^{<}(p_{\max})$ for all $b < p_{\max}$, and also (ii) $D^{>}(b) = D(b) < D(p_{\max}) \leq Q^{\leq}(p_{\max})$. Therefore, $D(b) < Q^{<}(p_{\max})$ for all $b < p_{\max}$, and $p_{\max}$ is the unique solution to both inequalities (implying $p_{\min} = p_{\max}$. This concludes that $p_{\min} = p_{\max}$ is the unique clearing price in any equilibrium. 

To wrap up the third bullet, observe that $B_{\min}(p_{\max}) = D(p) = B_{\max}(p_{\max})$ (when $D(\cdot)$ is continuous) as $Q^{<}(p_{\max}) \leq D(p_{\max}) = D^{>}(p_{\max}) \leq Q^{\leq}(p_{\max})$.
\end{proof}

\subsection{Price-Setting in Simultaneous First-Price Auctions}\label{sec:firstpricesetting}

Proposition~\ref{prop:firstpriceeq} characterizes the equilibrium behavior of bidders in Simultaneous First-Price Auctions. Downstream Equilibria in our market structures concern the behavior of \emph{sellers} in Simultaneous First-Price Auctions (i.e.~choosing a quantity $Q_i$ according to whatever Upstream game occurs, and setting a reserve $r_i$). In this section, we discuss the process of reserve price-setting, assuming quantities are fixed.\footnote{Therefore, this section does not directly characterize Downstream Equilibria in any market structure we consider, but will be used as technical lemmas.}

Proposition~\ref{prop:maineq} is the main result of this section, and characterizes all potential price-setting equilibria. Intuitively, Bullet One corresponds to a ``market-clearing equilibrium'' where no seller is sufficiently large to profit from price-setting, and Bullet Two corresponds to the converse.

Before we proceed to prove \Cref{prop:maineq}, we will prove several useful Lemmas. We will use the notation $Q_{-i}^{\leq}(b):=\sum_{j\neq i,\ r_j \leq b} Q_j$ and $Q_{-i}^{<}(b):=\sum_{j\neq i,\ r_j < b} Q_j$. 

\begin{lemma}\label{lem:one}
    In any equilibrium of a Price-Setting Game with clearing price $p> 0$: (i) every seller is either saturated or a price-setter (or both), and (ii) every seller $i$ with $Q_i > 0$ and $p > c_i$ has strictly positive profit.
\end{lemma}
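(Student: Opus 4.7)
The plan is to prove both parts via direct deviation arguments that lean on the bidder-equilibrium analysis of \Cref{prop:firstpriceeq}. The central technical tool is an observation extracted from its proof: in any bidder equilibrium of the simultaneous first-price auctions with clearing price $p$, every auction with reserve strictly less than $p$ must have its supply exhausted, since otherwise any winning bidder would profitably switch to the under-priced auction and pay a lower bid.

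\emph{Part (i).} Fix a seller $i$ and suppose, for contradiction, that $i$ is neither saturated nor a price-setter, so $Q_i>0$, $p>c_i$, $i$ sells $Q'_i<Q_i$ items, and $r_i\neq p$. I split on the sign of $r_i-p$. If $r_i<p$, the exhaustion observation forces $Q'_i=Q_i$ immediately, contradicting $Q'_i<Q_i$. If $r_i>p$, then $i$ wins no bids and earns $0$, so it suffices to exhibit a strictly profitable deviation. I would consider the undercut $r^\dagger_i:=p-\delta$ with $\delta>0$ chosen small enough that $p-\delta>c_i$ and no other seller has reserve in $[p-\delta,p)$ (possible since there are only finitely many sellers). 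In the post-deviation game, the aggregated supply curve gains $Q_i$ on $[p-\delta,r_i)$ and is unchanged elsewhere. Combining this with the fact that the original canonical clearing price being $p$ yields $D^{>}(b)>Q^{\leq}(b)$ for every $b<p$ (by \Cref{prop:firstpriceeq}), I would conclude that the new canonical clearing price $p^\dagger$ lies in $[r^\dagger_i,p]$. If $p^\dagger>r^\dagger_i$, the exhaustion observation gives that seller $i$ sells all $Q_i$ items at price $p^\dagger>c_i$. If $p^\dagger=r^\dagger_i$, then by the choice of $\delta$ seller $i$ is the unique seller with reserve exactly $r^\dagger_i$, and the canonical (maximum-supply-sold) equilibrium sells at least $\min\{D(r^\dagger_i)-Q^{<}(r^\dagger_i),\,Q_i\}>0$ of seller $i$'s inventory (positivity from $D(r^\dagger_i)\geq D^{>}(r^\dagger_i)>Q^{\leq}(r^\dagger_i)\geq Q^{<}(r^\dagger_i)$). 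In either case the deviation strictly raises seller $i$'s payoff from $0$, contradicting equilibrium.

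\emph{Part (ii).} Fix any $i$ with $Q_i>0$ and $p>c_i$. By part (i), $i$ is saturated or a price-setter. If saturated, the branch $p\leq c_i$ is ruled out by hypothesis, so $i$ sells all $Q_i>0$ items and earns $Q_i(p-c_i)>0$. If $i$ is a price-setter with zero sales, then the undercutting deviation from part (i) is again strictly profitable by the same case analysis, contradicting equilibrium; hence in equilibrium $i$ sells a positive mass at price $p>c_i$ and earns strictly positive profit.

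\emph{Main obstacle.} The delicate step will be pinning down the post-deviation canonical clearing price $p^\dagger$: ruling out $p^\dagger<r^\dagger_i$ via the inherited strict inequality $D^{>}(b)>Q^{\leq}(b)$ for $b<p$, and then handling the boundary case $p^\dagger=r^\dagger_i$ where the exhaustion observation is silent and one must instead invoke the canonical (maximum-supply-sold) convention together with the gap $D(r^\dagger_i)-Q^{<}(r^\dagger_i)>0$ inherited from $p_{\min}=p$.
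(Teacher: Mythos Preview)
Your proof is correct and follows essentially the same approach as the paper: the three-way case split on $r_i$ versus $p$, with the undercut deviation $r'_i\in(c_i,p)$ handling the $r_i>p$ case and (for part (ii)) the zero-sales price-setter case. Your analysis is in fact more careful than the paper's in two places --- you explicitly pin down the post-deviation canonical clearing price $p^\dagger\in[r^\dagger_i,p]$ and handle the boundary $p^\dagger=r^\dagger_i$ via the uniqueness of seller $i$ at that reserve, and you explicitly close the gap for (ii) when a price-setter sells zero mass, which the paper leaves implicit.
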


\begin{proof}[Proof of Lemma~\ref{lem:one}]
If $r_i < p$, then Seller $i$ must be saturated. If $r_i = p$, then Seller $i$ is a price-setter. 

If $r_i > p$, then Seller $i$ achieves zero profit. If $Q_i = 0$, then Seller $i$ is surely saturated. If $Q_i > 0$, then updating $r'_i\in (c_i,p)$ certainly generates non-zero profit. To see this, observe that: (a) this change cannot possibly lower the clearing price below $r'_i$,\footnote{Because $p > r'_i$, $D^{>}(b) \leq Q^{\leq}(b)$ for all $b < r'_i$, and updating $r_i$ to $r'_i$ does not change this.} and (b) this change must result in non-zero quantity of purchases from Seller $i$.\footnote{Because $p > r'_i$, $D^{>}(r'_i) > Q_{-i}^{\leq}(r_i)$. Therefore, if the new clearing price is $r'_i$, it is not possible for all demand to clear exclusively from sellers $\neq i$. If the new clearing price exceeds $r'_i$, then Seller $i$ must be saturated. (a) establishes that the new clearing price is at least $r'_i$.} Therefore, Seller $i$ can strictly profit, contradicting that this is a price-setting equilibrium.
\end{proof}

\begin{lemma}\label{lem:singleprice}
    In every price-setting equilibrium, at least one of the following must hold: (a) every seller is saturated, or (b) there is at most one price-setter.
\end{lemma}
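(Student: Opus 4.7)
The plan is to proceed by contradiction. I would assume that in some price-setting equilibrium there are two distinct price-setters $i$ and $j$ and simultaneously some seller $k$ is not saturated. By Lemma~\ref{lem:one}, every seller is either saturated or a price-setter (or both), so the non-saturated $k$ must itself be a price-setter; WLOG rename $k \to i$, so $i$ is a price-setter with $Q'_i < Q_i$ (from non-saturation), and $j \neq i$ is another price-setter. Both satisfy $r_i = r_j = p$ (the clearing price), $Q_i, Q_j > 0$, and $p > c_i, c_j$. Lemma~\ref{lem:one}(ii) further gives that $j$'s profit $Q'_j(p - c_j)$ is strictly positive, hence $Q'_j > 0$ as well.

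To derive a contradiction, I plan to exhibit a strictly profitable deviation for $i$ by slightly undercutting: set $r'_i = p - \varepsilon$ for a sufficiently small $\varepsilon > 0$. For $\varepsilon$ small enough, no other seller has reserve in $[p - \varepsilon, p)$, so the new supply functions satisfy $Q^{\leq}_{\mathrm{new}}(b) = Q^{\leq}_{\mathrm{old}}(b) + Q_i$ for $b \in [p - \varepsilon, p)$ and are unchanged elsewhere (and analogously for $Q^{<}$). Using Proposition~\ref{prop:firstpriceeq}, the new $p_{\min}^{\mathrm{new}} \geq p - \varepsilon$ (supply for $b < p - \varepsilon$ is untouched, and canonically $p_{\min}^{\mathrm{old}} = p$) while $p_{\min}^{\mathrm{new}} \leq p$ (since $p$ was a valid clearing price originally and added supply only weakens the condition); hence under the canonical buyer response the new clearing price $p^*$ lies in $[p - \varepsilon, p]$.

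The argument then splits into two sub-cases. If $p^* > p - \varepsilon$, all of $i$'s items are at reserve strictly below $p^*$, so $i$ is saturated in the new equilibrium and sells $Q_i$ items, yielding profit $Q_i(p^* - c_i) \geq Q_i(p - \varepsilon - c_i)$; choosing $\varepsilon < (Q_i - Q'_i)(p - c_i)/Q_i$ (which is positive since $Q_i > Q'_i$ and $p > c_i$) makes this strictly exceed the old profit $Q'_i(p - c_i)$. If instead $p^* = p - \varepsilon$, I would compute $i$'s allocation as $B^* - Q^{<}_{\mathrm{new}}(p - \varepsilon)$ where $B^*$ is the mass sold in the new canonical equilibrium. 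Combining the fact that the mass sold at price $p$ in the original equilibrium satisfies $Q'_i + Q'_j + (\text{others at } p) \leq D(p) - Q^{<}_{\mathrm{old}}(p)$ with $Q'_j > 0$, I get $Q'_i < D(p) - Q^{<}_{\mathrm{old}}(p)$ strictly. By left-continuity of $D$, $i$'s new allocation approaches at least $D(p) - Q^{<}_{\mathrm{old}}(p) > Q'_i$ as $\varepsilon \to 0^+$, so by continuity of profit, $i$'s new profit strictly exceeds its old profit for small enough $\varepsilon$. Either sub-case contradicts the equilibrium hypothesis.

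The main obstacle I anticipate is the second sub-case ($p^* = p - \varepsilon$), which requires a careful supply-demand accounting in the deviated game in order to conclude that $i$'s allocation strictly increases. Two strict inequalities are indispensable: $Q_i > Q'_i$ (from non-saturation of $i$) and $Q'_j > 0$ (from Lemma~\ref{lem:one}(ii) applied to $j$), which together force the small-$\varepsilon$ improvement. The argument implicitly takes the buyers to play the canonical equilibrium in the deviated game (the worst case for the deviator $i$); handling non-canonical buyer-selection rules in full generality would need extra care at edge cases where $i$'s allocation at $p - \varepsilon$ could jump discontinuously.
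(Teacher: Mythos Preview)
Your proposal is correct and follows essentially the same approach as the paper's proof: assume an unsaturated price-setter $i$ coexists with another price-setter $j$ (who necessarily sells a positive mass by Lemma~\ref{lem:one}), then have $i$ undercut to $p-\varepsilon$ and show that $i$ either becomes saturated or captures at least the mass previously sold by $j$, yielding a strict profit improvement for small $\varepsilon$. Your explicit choice of $\varepsilon$ small enough that no other seller has reserve in $[p-\varepsilon,p)$ is a helpful clarification that makes the case analysis cleaner than the paper's slightly terser version.
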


\begin{proof}[Proof of Lemma~\ref{lem:singleprice}]
Recall by Lemma~\ref{lem:one} that all unsaturated sellers are price-setters, and that all price-setters earn strictly positive revenue. So assume for contradiction that there are multiple price-setters, each of whom earn strictly positive revenue, and at least one of whom is unsaturated. Refer to this seller as $i$, and let the clearing price be $p = r_i$. Observe that a mass of items strictly exceeding $Q^{<}_{-i}(p)$ is sold by sellers $\neq i$ -- every seller with $r_j < p$ is saturated, and the other price-setters sell non-zero supply. At most $D(p)$ total quantity of items are purchased. Therefore, the mass of items sold by Seller $i$ at most $D(p) - Q^{<}_{-i}(p) - \varepsilon$ for some $\varepsilon > 0$.

Consider if Seller $i$ were to update their reserve to $p - \delta$. This cannot possibly cause the clearing price to fall below $p-\delta$. This might cause the clearing price to remain $p$. In this case, Seller $i$ becomes saturated at the same clearing price, which is clearly a strict improvement (because $p > c_i$). It might also cause the clearing price to lower, but remain at least $p-\delta$. In this case, at least $D(p)$ total mass must be sold, at most $Q^{<}_{-i}(p)$ can come from other sellers. Therefore, the quantity sold by Seller $i$ must be at least $D(p)-Q^{<}_{-i}(p)$, a strict improvement in quantity sold by at least $\varepsilon$. Therefore, updating $r_i:=p-\delta$ for any $\delta > 0$ either results in increased sales at the same price of $p$, or an increase in sales of at least $\varepsilon$ with a decreased price of at most $\delta$. For $\delta < \frac{\varepsilon \cdot (p- c_i)}{Q_i}$, this is a strict improvement in profit.\footnote{To see this, let $X$ denote the supply originally sold. Then the original profit is $X(p-c_i)$ and the new profit is at least $X(p-c_i) +\varepsilon (p-c_i) - \delta (X+ \varepsilon)$. For $\delta < \frac{\varepsilon\cdot (p-c_i)}{Q_i}$, this is a strict improvement (as $X+\varepsilon \leq Q_i$).}
\end{proof}

\begin{lemma}\label{lem:saturatedeq}
    Let $Q:=\sum_i Q_i$. A saturated price-setting equilibrium exists if and only if $Q_i \leq \frac{(x - c_i) (Q - D(x))}{x - D^{-1}_{\sup}(Q)}$ for all $i$ and all $x> D^{-1}_{\sup}(Q)$. 
\end{lemma}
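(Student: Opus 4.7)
The plan is to analyze the natural candidate strategy profile in which every seller posts reserve $r_i = p^* := D^{-1}_{\sup}(Q)$ and verify that this is an equilibrium exactly when the hypothesized inequality holds, then argue that any saturated equilibrium must have this same structure so the same deviation analysis yields the ``only if'' direction.

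By Proposition~\ref{prop:firstpriceeq}, setting $r_i = p^*$ for all $i$ gives $Q^{<}(p^*) = 0$, $Q^{\leq}(p^*) = Q$, and $D(p^*) \geq Q \geq D^{>}(p^*)$, so the clearing price is $p^*$ and each seller sells their full inventory $Q_i$, netting profit $(p^* - c_i) Q_i$; existence of any saturated equilibrium in this form moreover forces $c_i \leq p^*$ whenever $Q_i > 0$, since otherwise seller $i$ would strictly profit by deviating to a reserve above $p^*$ and selling $0$. For the ``if'' direction, I rule out profitable deviations. A reserve $r_i' \leq p^*$ leaves both the clearing price and seller $i$'s sales unchanged. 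For $r_i' = x > p^*$, Proposition~\ref{prop:firstpriceeq} yields two cases. If $D(x) \geq Q - Q_i$, then $Q^{<}(x) = Q - Q_i \leq D(x) \leq Q = Q^{\leq}(x)$, so the new clearing price is $x$ and seller $i$ earns $(x - c_i)(D(x) - Q + Q_i)$; the no-deviation condition $(x - c_i)(D(x) - Q + Q_i) \leq (p^* - c_i) Q_i$ rearranges (using $x > p^*$) to $Q_i \leq (x - c_i)(Q - D(x))/(x - p^*)$, which is exactly the hypothesis. If instead $D(x) < Q - Q_i$, then supply at reserves $\leq p^*$ already exceeds demand at $x$, so seller $i$ sells $0$ and earns $0 \leq (p^* - c_i) Q_i$; moreover the hypothesis is automatic in this range, since $Q - D(x) \geq Q_i$ and $x - c_i \geq x - p^*$ together give RHS $\geq Q_i$.

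For the ``only if'' direction, consider any saturated equilibrium. A quick check rules out configurations where some seller with $Q_i > 0$ is saturated via $p \leq c_i$ without selling: in such a case total sales are at most $Q - Q_i < Q \leq D(p^*)$, forcing the clearing price above $p^*$, which in turn forces $c_i > p^*$ and lands outside the Bullet One structure of Proposition~\ref{prop:maineq}. So in the relevant saturated equilibrium the clearing price is $p^*$, every seller with $Q_i > 0$ sells all $Q_i$ and earns $(p^* - c_i) Q_i$, and all other sellers have $r_j \leq p^*$ (otherwise they could not sell at clearing price $p^*$). Hence the deviation $r_i' = x > p^*$ with $D(x) \geq Q - Q_i$ is available regardless of the specific reserves in the equilibrium, yielding the same profit expression as above; the no-deviation inequality again rearranges into the claimed bound, which combined with the automatic satisfaction for $x$ with $D(x) < Q - Q_i$ exhausts all $x > p^*$.

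The main subtlety I expect is correctly handling jump discontinuities of $D(\cdot)$ near $p^*$, so that after a deviation to $x > p^*$ Proposition~\ref{prop:firstpriceeq} really does give a clearing price of $x$ rather than some intermediate value inside a discontinuity; and establishing $c_i \leq p^*$ from the inequality itself by taking $x \to p^{*+}$ (where $Q - D(x)$ vanishes but $x - p^*$ also vanishes, so a careful limit argument is needed to conclude $c_i \leq p^*$ rather than obtain a vacuous bound).
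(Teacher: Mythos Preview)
Your proposal is correct and follows essentially the same argument as the paper: exhibit the profile $r_i = D^{-1}_{\sup}(Q)$ for all $i$, check that a deviation to any $x > D^{-1}_{\sup}(Q)$ is unprofitable exactly when the stated inequality holds, and for the converse observe that any saturated equilibrium yields the same payoff $Q_i(D^{-1}_{\sup}(Q)-c_i)$ against which the same deviation is available. On your flagged subtleties: the paper's proof is terser and simply does not separate out the saturated-via-(ii) case you worry about; and your concern about deducing $c_i \le p^*$ from the inequality via a delicate limit is unnecessary, since if $c_i > p^*$ and $Q_i>0$ you can just plug in $x = c_i$ to get $Q_i \le 0$, a contradiction.
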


\begin{proof}[Proof of \Cref{lem:saturatedeq}]
Consider any candidate saturated equilibrium. Then a total of $Q$ items must be sold, and therefore the clearing price (and all reserves) is at most $D^{-1}_{\sup}(Q)$, and Seller $i$ receives payoff at most $Q_i \cdot (D^{-1}_{\sup}(Q) - c_i)$. Let us consider possible deviations of Seller $i$.

Seller $i$ could instead set a reserve $r_i > D^{-1}_{\sup}(Q)$. From here, there are again two possibilities:
\begin{itemize}
    \item Perhaps $D(r_i) < Q - Q_i$. In this case, the clearing price is below $r_i$, and Seller $i$ gets payoff $0$.
    \item Perhaps $D(r_i) \geq Q- Q_i$. In this case, the clearing price becomes $r_i$, and Seller $i$ clears $D(r_i) +Q_i - Q$ in any canonical equilibrium.
\end{itemize}

Therefore, Seller $i$ can deviate to earn revenue $(x - c_i) \cdot (D(x)+Q_i - Q)$ for any $x > D^{-1}_{\sup}(Q)$. If $(x - c_i) \cdot (D(x)+Q_i-Q) > Q_i \cdot (D^{-1}_{\sup}(Q) - c_i)$, this deviation is strictly profitable. This rearranges to: $Q_i > \frac{(x - c_i) \cdot (D(x)-Q)}{x - D^{-1}_{\sup}(Q)}$. This establishes the only-if portion of the lemma.

To see the if portion of the lemma, consider the strategy profile where each Seller sets $r_i = D^{-1}_{\sup}(Q)$. Then the clearing price is $D^{-1}_{\sup}(Q)$, and Seller $i$ earns revenue $Q_i \cdot (D^{-1}_{\sup}(Q) - c_i)$. By the previous work, Seller $i$ cannot profit by deviating to any $r_i > D^{-1}_{\sup}(Q)$. Seller $i$ also cannot profit by deviating to any $r_i < D^{-1}_{\sup}(Q)$ -- such a deviation cannot increase neither the clearing price nor the quantity of items sold by Seller $i$, and therefore cannot increase Seller $i$'s revenue.
\end{proof}

Now we are ready to prove \Cref{prop:maineq}.

\begin{proof}[Proof of \Cref{prop:maineq}]
    Bullet One is a restatement of Lemma~\ref{lem:saturatedeq}. Bullet Two requires analyzing what happens when saturated price-setting equilibria do not exist.

    By Lemma~\ref{lem:saturatedeq}, a saturated price-setting equilibrium does not exist if and only if $Q_i >\frac{(x - c_i) \cdot (Q-D(x))}{x - D^{-1}_{\sup}(Q)}$ for some $i$. Moreover, Lemma~\ref{lem:singleprice} establishes that every unsaturated equilibrium must have exactly one price-setter. 
    
    So consider a candidate price-setting equilibrium with $i$ as the price setter. Observe that for any price $x > D^{-1}_{\sup}(Q)$, Seller $i$ can earn revenue \emph{at least} $(x - c_i) \cdot (Q_i - Q+D(x))$ by setting price $x$. To see this, observe:
    \begin{itemize}
        \item If $D(x) \leq Q-Q_i$, then the claim holds vacuously.
        \item If $x > r_j$ for all $j \neq i$, then $i$ would be a price-setter at $r_i = x$, and earn exactly $(x-c_i) \cdot (Q_i-Q+D(x))$.
        \item If some $r_j > x$, then $i$ might be saturated (and earn at least $(x-c_i) \cdot Q_i$). Or, $i$ might remain a price-setter (and still sell at least $D(x)+Q_i - Q$, as only $Q-Q_i$ can be sold by other sellers). 
    \end{itemize}

    Therefore, if $i$ is best-responding, they must earn at least $\max_{x > D^{-1}_{\sup}(Q)}\{(x-c_i) \cdot (Q_i -Q+D(x))\}$. As a price-setter, this can only be achieved by setting price $\arg\max_{x > D^{-1}_{\sup}(Q)}\{(x-c_i) \cdot (Q_i -Q+D(x))\}$. This completes the characterization of all possible price-setting equilibria.

   To see the second half of Bullet Two, consider when $i^*$ sets price $\arg\max_{x \geq D^{-1}_{\sup}(Q)}\{(x-c_{i^*}) \cdot (Q_{i^*} - Q+D(x))\}$, and seller $j$ sets reserve $0$ for all $j \neq i^*$. Seller $i^*$ is indeed best responding, because they earn revenue at most $Q_{i^*} \cdot (D^{-1}_{\sup}(Q) - c_{i^*})$ by setting a reserve $\leq D^{-1}_{\sup}(Q)$, and earn $\max\{0,(x - c_{i^*}) \cdot (Q_i - Q+D(x))\}$ by being a price-setter at $x$. Because $Q_{i} > \frac{(x - c_i) \cdot (Q-D(x))}{x - D^{-1}_{\sup}(Q)}$ for some $i$, this means that $r^*_i > D^{-1}_{\sup}(Q)$ for some $i$, and therefore $r^*_{i^*} > D^{-1}_{\sup}(Q)$. Therefore, $i^*$ indeed optimizes their response by setting price $r^*_{i^*}:=\arg\max_{x \geq D^{-1}_{\sup}(Q)}\{(x-c_{i^*}) \cdot (Q_{i^*} - Q+D(x))\}$, and Seller $i^*$ is indeed best-responding. 

    To see that Seller $j$ is best-responding for all $j \neq i^*$, observe that they currently enjoy revenue $Q_j \cdot r^*_{i^*}$, and enjoy exactly this revenue with any reserve $x < r^*_{i^*}$. For any $x > r^*_{i^*}$, Seller $j$ would instead become the price-setter at reserve $x$ and earn revenue $\max\{0,(x-c_j) \cdot (Q_j - Q+D(x))\}$. But we know that:
    \begin{align*}
        &(r^*_{i^*} - c_j) \cdot Q_j - (x - c_j) \cdot (Q_j - Q+D(x)) \\
        &\qquad\qquad\geq (r^*_j-c_j) \cdot Q_j- (x - c_j) \cdot (Q_j - Q+D(x)) \\
        &\qquad \qquad\geq (r^*_{j} - c_j) \cdot (Q_j-Q+D(r^*_{j})) - (x - c_j) \cdot (Q_j - Q+D(x))\\
        &\qquad\qquad\geq 0
    \end{align*}

Above, the first inequality follows as $r^*_{i^*}\geq r^*_j$. The second follows as $r^*_j \geq c_j$ and $Q \geq D(r^*_j)$. The third follows as $r^*_j = \arg\max_{x \geq D^{-1}_{\sup}(Q)}\{(x-c_j)\cdot (Q_j - Q+D(r^*_j))\}$. 

To see that $\arg\max_i\{Q_i\} = i^*$ when $c_i = c_j = c$ for all $i,j$, consider any $i,j$ with $Q_i > Q_j$ and the corresponding $r^*_i, r^*_j$. Observe that:

\begin{align*}
    &(r^*_i-c)\cdot (Q_i - Q+D(r^*_i)) \geq (r^*_j-c)\cdot (Q_i - Q+D(r^*_j)),\text{ and}\\
    &(r^*_j-c)\cdot (Q_j - Q+D(r^*_j)) \geq (r^*_i-c)\cdot (Q_j - Q+D(r^*_i))\\
     \Rightarrow & (r_i^*-r_j^*)\cdot (Q_i-Q_j) \geq 0\\
     \Rightarrow & r_i^* \geq r_j^*
\end{align*}
\end{proof}

\section{Omitted Proofs from Section~\ref{sec:bitcoin}}\label{app:bitcoin}

\begin{proof}[Proof of Proposition~\ref{prop:bitcoinundominated}] 

Observe first that, for any $(\vec{q}_{-i}, \vec{r}_{-i})$, the only difference in Miner $i$'s payoff for using $(q_i, r_i)$ as opposed to $(q_i, r^*(D, Q_A))$ is their profit in the simultaneous first-price auctions (their total expenditure on resources is the same, and their block reward is the same). Moreover, Miner $i$ also receives the same quantity of \append{}s. 

Let us now analyze Miner $i$'s revenue under both strategies. Let $S_i(r_i)$ denote the quantity of \writeOp{}s sold when using $(q_i, r_i)$, and $S_i(r^*(D,Q_A))$ denote the quantity sold when using $(q_i, r^*(D, Q_A))$. Observe first that if $S_i(r_i) = 0$, then certainly $S_i(r^*(D, Q_A))\cdot (r^*(D, Q_A)-c^W)\geq 0 = S_i(r_i) \cdot (r_i-c^W)$. If $S_i(r_i) > 0$, then:
\begin{align*}
&\qquad\qquad\qquad S_i(r_i)\leq D(r_i) - Q^{\leq}_{-i}(r^*(D,Q_A)),\\
&\qquad\qquad\text{ and } S_i(r^*(D,Q_A))\geq D(r^*(D, Q_A)) - Q^{\leq}_{-i}(r^*(D, Q_A)).\\
&\Rightarrow (r^*(D, Q_A)-c^W)\cdot S_i(r^*(D, Q_A)) - (r_i - c^W)\cdot S_i(r_i) \\
&\qquad \geq (r^*(D, Q_A)-c^W)\cdot \left(D(r^*(D, Q_A)) - Q^{\leq}_{-i}(r^*(D, Q_A))\right) \\
&\qquad \qquad - (r_i - c^W) \cdot \left(D(r_i) - Q^{\leq}_{-i}(r^*(D,Q_A))\right)\\
&\qquad = (r^*(D, Q_A)-c^W)\cdot D(r^*(D, Q_A) - (r_i - c^W) \cdot D(r_i) \\
&\qquad\qquad + (r^*(D, Q_A)-r_i) \cdot Q^{\leq}_{-i}(r^*(D,Q_A))\\
&\qquad \geq (r^*(D, Q_A)-c^W)\cdot D(r^*(D, Q_A)) - (r_i - c^W) \cdot D(r_i) \\
&\qquad \geq 0.
\end{align*}
Above, the first line follows as, because $S_i(r_i) > 0$, the clearing price is at least $r_i$. Therefore, at most $D(r_i)$ \writeOp{}s are sold, and at least $Q_{-i}^{<}(r_i) \geq Q_i^{\leq}(r^*(D, Q_A))$ must be sold to Miners $\neq i$. The second line follows because any Miner $i$ setting a price of $r$ sells quantity at least $\max\{Q_A, D(r)\} - Q_{-i}^{\leq}(r) \geq D(r) - Q_{-i}^{\leq}(r)$. The third inequality and subsequent equality are basic algebra. The penultimate inequality follows as $r_i > r^*(D,Q_A)$. {The final inequality follows as $r^*(D,Q_A)$ optimizes $(r-c^W)\cdot D(r)$ over all $r \geq D^{-1}_{\inf}(Q_A)$.}

To see that $(q_i, r^*(D, Q_A)$ may sometimes give strictly larger payoff than $(q_i, r_i)$, consider the case that each other Miner invests $q_j = 0$. Then Miner $i$'s profit from end-users by setting reserve $r^*(D, Q_A)$ is $Q_A \cdot ( r^*(D, Q_A)-c^W) > D^{-1}(r_i)\cdot (r_i - c^W)$, which is the profit earned by setting reserve $r_i$. 

\end{proof}

\subsection{Omitted Proofs from Section~\ref{sec:bitcoinpure}}

\begin{proposition}\label{prop:bitcoinpricesetting} Let $(\vec{q},\vec{r})$ be an Equilibrium in the Distributed Ledger Model. Then one of the following holds:
\begin{itemize}
    \item Each Miner sells $Q_i := Q_A \cdot q_i/\sum_j q_j$ \writeOp{}s at a clearing price of $D^{-1}_{\sup}(Q_A)$, and $r_i \leq D^{-1}_{\sup}(Q_A)$ for all $i$. Further, $Q_i \leq \frac{(x - c^W) (Q_A-D(x))}{x - D^{-1}_{\sup}(Q_A)}$ for all $i$ and all $x > D^{-1}_{\sup}(Q_A)$. 
    \item There is a single price-setter $i$, who sets price $r_i:= \arg\max_{x > D^{-1}_{\sup}(Q)}\{(x-c^W) \cdot (D(x)+Q_i-Q)\}$.
\end{itemize}
\end{proposition}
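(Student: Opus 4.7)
The plan is to reduce Proposition~\ref{prop:bitcoinpricesetting} to Proposition~\ref{prop:maineq} by observing that once investments $\vec{q}$ are fixed in an Equilibrium $(\vec{q},\vec{r})$, the Tullock allocation rule pins down quantities $Q_i := Q_A \cdot q_i / \sum_j q_j$, and the residual subgame played in reserves $\vec{r}$ is exactly an instance of the Price-Setting Game from Definition~\ref{def:price-setting-game}. Specifically, each Miner's payoff decomposes as $R_i - Q'_i \cdot c^W - q_i \cdot c_i^R + B \cdot Q_A \cdot q_i/\sum_j q_j$, and only the first two terms depend on $\vec{r}$; the subgame in $\vec{r}$ (holding $\vec{q}$ fixed) thus has sellers with quantities $\vec{Q}$ and uniform per-item cost $c^W$.

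First I would argue that $\vec{r}$ must itself be a Price-Setting Equilibrium for the induced quantities $\vec{Q}$. If Miner $i$ could strictly improve $Q'_i \cdot (p(\vec{Q},\vec{r}) - c^W)$ by deviating to some $r'_i$ while keeping $q_i$ fixed, then $(q_i, r'_i)$ would be a strictly profitable deviation from $(q_i, r_i)$ in the Distributed Ledger game (the resource cost and block reward components are unchanged, and the End-User canonical equilibrium mapping $E(\cdot)$ is applied consistently). This contradicts the Nash property of $(\vec{q},\vec{r})$. Hence $\vec{r}$ is a price-setting equilibrium for $(\vec{Q}, c^W, \ldots, c^W)$.

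Next I would apply Proposition~\ref{prop:maineq} with $c_i = c^W$ for all $i$ and $Q = \sum_i Q_i = Q_A$. This yields exactly two possibilities. In the first, every seller is saturated and the clearing price is $D^{-1}_{\sup}(Q_A)$, with the inequality $Q_i \leq (x - c^W)(Q_A - D(x))/(x - D^{-1}_{\sup}(Q_A))$ holding for all $i$ and all $x > D^{-1}_{\sup}(Q_A)$. Since the total supply offered at reserve at most $D^{-1}_{\sup}(Q_A)$ equals $Q_A$, and demand at this price is at least $Q_A$, the canonical equilibrium clears all $Q_A$ writeops; combined with saturation (and $c^W < D^{-1}_{\sup}(Q_A)$ in the nontrivial regime, so case (ii) of Definition~\ref{def:saturate} is not vacuously invoked), this forces each Miner to sell exactly $Q_i$. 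Moreover, $r_i \leq D^{-1}_{\sup}(Q_A)$ for every $i$, since any $r_i$ exceeding the clearing price would contradict $i$'s saturation. This matches the first bullet of the Proposition. In the second case, there is a unique price-setter $i$ with $r_i = \arg\max_{x > D^{-1}_{\sup}(Q_A)}\{(x - c^W)(D(x) + Q_i - Q_A)\}$, matching the second bullet verbatim.

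The only real subtlety — and the one step I would want to double-check — is the edge case where $c^W \geq D^{-1}_{\sup}(Q_A)$, in which ``saturated'' per Definition~\ref{def:saturate} can be satisfied vacuously via clause (ii) (clearing price $\leq c_i$), so that the identification of sales with exactly $Q_i$ per Miner requires a small additional argument about the canonical tie-breaking. Apart from this bookkeeping, the proof is a direct specialization of the general price-setting characterization already established in Proposition~\ref{prop:maineq}, with no additional equilibrium reasoning required for the investments $\vec{q}$ beyond the reduction observation.
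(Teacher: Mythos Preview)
Your proposal is correct and follows essentially the same approach as the paper: both arguments observe that once $\vec{q}$ is fixed, the subgame in reserves $\vec{r}$ is exactly the Price-Setting Game with quantities $Q_i = Q_A \cdot q_i/\sum_j q_j$ and common cost $c^W$, so Proposition~\ref{prop:maineq} applies directly. Your write-up is in fact more explicit than the paper's about the payoff decomposition and the edge case $c^W \geq D^{-1}_{\sup}(Q_A)$, but the underlying reduction is identical.
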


\begin{proof}[Proof of \Cref{prop:bitcoinpricesetting}] The proof follows immediately from Proposition~\ref{prop:maineq}. Indeed, in order for $(q_i,r_i)$ to be a best response to $(\vec{q}_{-i},\vec{r}_{-i})$, it must be that $r_i$ optimizes Miner $i$'s payoff after fixing $q_i,\vec{q}_{-i},\vec{r}_{-i}$. Therefore, in any equilibrium it must hold simultaneously for all $i$ that $r_i$ optimizes Miner $i$'s payoff after fixing $q_i,\vec{q}_{-i},\vec{r}_{-i}$. 

Observe that this condition fixes $\vec{q}$ and therefore $\vec{Q}$, and asks that $r_i$ simultaneously optimize Miner $i$'s payoff in response to $\vec{r}_{-i}$. This is exactly asking for an equilibrium of the price-setting game parameterized by $D(\cdot)$ and $\vec{Q}$, and its equilibria are characterized in Proposition~\ref{prop:maineq}.
\end{proof}

Next, we establish that the market shares of each Miner must match those in a Tullock contest. We first recap known characterizations of Tullock contest equilibria for comparison.

\begin{proposition}\label{prop:bitcoinmustbeTullock}
Let $(\vec{q},\vec{r})$ be an Equilibrium in the Distributed Ledger Model, and let the clearing price for End-Users be $r$. Then:
\begin{itemize}
    \item $\sum_j q_j = Q_A \cdot (r+B-c^W)/c^*(\vec{c}^R)$.
    \item For all Miners $i$, $\frac{q_i}{\sum_j q_j} = x_i^*(\vec{c}^R)$.
\end{itemize}
\end{proposition}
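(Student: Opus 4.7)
The plan is to show that in any Equilibrium $(\vec{q},\vec{r})$ with clearing price $r$, each Miner $i$'s choice of $q_i$ solves an optimization whose first-order condition coincides with that of a pure Tullock contest with prize $V := (r - c^W + B) \cdot Q_A$ and unit-Resource costs $\vec{c}^R$. The statement then follows directly from the standard characterization of Tullock equilibria (as used e.g.~by~\cite{ArnostiW22}): the equilibrium share profile is $(x_i^*(\vec{c}^R))_i$ and the aggregate investment is $V/c^*(\vec{c}^R)$.

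I would first fix an Equilibrium and invoke Proposition~\ref{prop:bitcoinpricesetting}, which gives two cases: (a) every Miner is saturated at $r = D^{-1}_{\sup}(Q_A)$ and sells $Q_i := Q_A \cdot q_i/\sum_j q_j$; or (b) there is a unique price-setter $i^*$ with $r = r_{i^*}$ who sells $D(r) + Q_{i^*} - Q_A$, while every other Miner $j$ sells $Q_j$. I then claim that in every case Miner $i$'s payoff, as a function of $q_i$ with $r_i$ and $(\vec{q}_{-i}, \vec{r}_{-i})$ held fixed, equals $A_i + V \cdot q_i/\sum_j q_j - c_i^R \cdot q_i$ for some constant $A_i$ independent of $q_i$. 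For a saturated Miner, or for any non-price-setter in case (b), this payoff is directly $(r - c^W + B) \cdot Q_i - c_i^R \cdot q_i$, which upon substituting $Q_i = Q_A \cdot q_i/\sum_j q_j$ matches the target form with $A_i = 0$. For the price-setter $i^*$, the payoff expands to $(r - c^W)(D(r) - Q_A) + (r - c^W + B) \cdot Q_{i^*} - c_{i^*}^R \cdot q_{i^*}$; since $r = r_{i^*}$ is held fixed, the first term is constant in $q_{i^*}$, so we again obtain the claimed form with $A_{i^*} = (r - c^W)(D(r) - Q_A)$.

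Because $(q_i, r_i)$ is a joint best response to $(\vec{q}_{-i}, \vec{r}_{-i})$, in particular $q_i$ is optimal when $r_i$ is held at its equilibrium value; and since the reduced payoff above is strictly concave in $q_i$ (the standard Tullock computation $-2V \sum_{j \neq i} q_j/(\sum_j q_j)^3 < 0$), the first-order conditions are both necessary and sufficient. They give $V(1 - x_i)/\sum_j q_j = c_i^R$ for every active Miner $i$, and $V/\sum_j q_j \leq c_i^R$ for every inactive Miner, where $x_i := q_i/\sum_j q_j$. Setting $c^* := V/\sum_j q_j$, these rearrange to $x_i = \max\{0, 1 - c_i^R/c^*\}$; summing over $i$ and using $\sum_i x_i = 1$ yields $\sum_i \max\{0, 1 - c_i^R/c^*\} = 1$, which is exactly the defining equation of $c^*(\vec{c}^R)$. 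Therefore $c^* = c^*(\vec{c}^R)$, $\sum_j q_j = V/c^*(\vec{c}^R) = Q_A \cdot (r - c^W + B)/c^*(\vec{c}^R)$, and $x_i = x_i^*(\vec{c}^R)$, as claimed.

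The main technical obstacle is justifying the payoff reduction in case (b): when a non-price-setter $j$ perturbs $q_j$ the share $Q_{i^*}$ also shifts, and the formula relies on $i^*$ remaining the price-setter at $r_{i^*}$ after the perturbation. This reduces to a local stability check of the defining inequalities $Q_A - Q_{i^*} \leq D(r_{i^*}) \leq Q_A$, which remain valid under sufficiently small perturbations around an equilibrium where they hold. For $i^*$'s own perturbation of $q_{i^*}$, the optimal $r_{i^*}$ could in principle move as well, but coordinate-wise optimality of $q_{i^*}$ at its equilibrium $r_{i^*}$ (implied by the joint best-response property) is exactly what the Tullock FOC argument needs, so no envelope calculation beyond this is required.
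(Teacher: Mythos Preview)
Your approach is essentially the same as the paper's: both fix the equilibrium, restrict attention to deviations in $q_i$ that preserve the clearing price and the identity of the price-setter, observe that in this regime every Miner's payoff has the form $A_i + (r-c^W+B)\,Q_A\cdot \tfrac{q_i}{\sum_j q_j} - c_i^R q_i$ (with $A_i$ constant), and then read off the Tullock first-order conditions. The paper is slightly more explicit about the \emph{range} in which the reduced payoff formula is valid (spelling out the $\delta$-neighborhoods in each case), whereas you fold this into your final ``local stability check'' paragraph; your aside that the FOCs are also \emph{sufficient} via strict concavity is true only on that restricted range and is in any case not needed here, since the proposition only requires necessity.
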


\begin{proof}[Proof of Proposition~\ref{prop:bitcoinmustbeTullock}] 

Recall that the goal of Proposition~\ref{prop:bitcoinmustbeTullock} is to propose \emph{necessary} conditions on any equilibrium $(\vec{q},\vec{r})$ of the Distributed Ledger Model. Therefore, it suffices to consider, for example, local optimality conditions (which are necessary, but not sufficient). Let $r$ denote the clearing price at the candidate equilibrium $(\vec{q},\vec{r})$.

To this end, we focus the optimization problem facing a particular Miner $i$, and consider deviations of the following form:
\begin{itemize}
    \item Miner $i$ will only consider deviations that \emph{do not change the clearing price $r$} and \emph{do not change the unsaturated price-setter} (if there is one).
    \item Therefore, if $r = D^{-1}(Q_A)$, we will consider deviations for Miner $i$ from $(q_i,r_i)$ to $(q'_i,0)$ (and all deviations of this form will be considered).\footnote{Observe that if $r = D^{-1}(Q_A)$, it must be the case that all Miners $j$ with $q_j > 0$ have $r_j \leq D^{-1}(Q_A)$, and therefore the clearing price will remain $D^{-1}(Q_A)$ so long as Miner $i$ deviates to some $r_i \leq D^{-1}(Q_A)$ as well.}
    \item If $r > D^{-1}(Q_A)$, then Proposition~\ref{prop:bitcoinpricesetting} establishes that there is a single price setter $i^*$. In order for $i^*$ to be a price-setter, it must be that $Q_{i^*} > Q_A - D(r)$, and also that $r_i < r$ for all $i\neq i^*$ with $q_i > 0$. Therefore, for Miner $i^*$, we will consider deviations of the form $(q'_{i^*}, r)$ \emph{that still result in $Q'_{i^*} > Q_A$}. Observe that there exists a sufficiently small $\delta>0$ such that deviations of the form $(q_{i^*}+x,r)$ satisfies this property for all $x \in (-\delta,\delta)$. For Miner $i\neq i^*$, we will consider deviations of the form $(q'_i, 0)$ that also \emph{still result in $Q'_{i^*} > Q_A$}. Observe again that there exists a sufficiently small $\delta > 0$ such that deviations of the form $(q'_i,0)$ satisfy this property for all $q'_i \in [0,q_i+\delta)$. 
    \item In conclusion, we will only ever consider deviations that do not change the clearing price and do not change the unsaturated price-setter (if there is one). However, the above bullets note there is sufficient flexibility in choosing such deviations that local optimality conditions on the choice of $q_i$ must hold.
\end{itemize}

Now, we consider local optimality conditions for deviations of the prescribed type for a particular Miner $i$. The proof essentially breaks into two (interleaved) parts: (a) we repeat calculations identical to those in~\cite{ArnostiW22} for analyzing equilibria of Tullock Contests, and (b) we confirm that the same local optimality conditions must hold for any equilibrium in the Distributed Ledger Model.

So, consider the function $x_i(q_i;q_{-i}):=q_i/(q_i +\sum_{j \neq i} q_j)$, which determines the fraction of the $Q_A$ \append{}s won by Miner $i$ as a function of $q_i$ after fixing $q_{-i}$. We compute (identically to~\cite{ArnostiW22}):

$$\frac{\partial x_i(q_i;q_{-i})}{\partial q_i} =\frac{1}{q_i + \sum_{j \neq i} q_j} - \frac{q_i}{\left(q_i + \sum_{j\neq i} q_j\right)^2}= \frac{1-x_i(q_i;q_{-i})}{q_i + \sum_{j \neq i} q_j}.$$

Now, \emph{in the range where the clearing price remains $r$ and the unsaturated price-seller (if one exists) remains $i^*$}, Miner $i\neq i^*$'s payoff for investing $q_i$ is: $P_i(q_i;q_{-i}):=Q_A \cdot x_i(q_i;q_{-i})\cdot (r-c^W+B) - c_i^R \cdot q_i$. Therefore, its derivative \emph{in this range} is:

\begin{align*}
    \frac{\partial P_i(q_i;q_{-i})}{\partial q_i} &= Q_A \cdot (r-c^W+B)\cdot \frac{\partial x_i(q_i;q_{-i})}{\partial q_i} - c_i^R\\
    &= Q_A\cdot (r-c^W+B) \cdot \frac{1-x_i(q_i;q_{-i})}{q_i + \sum_{j\neq i} q_j} - c_i^R
\end{align*}

If there is a price-setter $i^*$, then Miner $i^*$'s payoff \emph{in the prescribed range} is: $P_{i^*}(q_{i^*};q_{-i^*}):=Q_A \cdot x_{i^*}(q_{i^*};q_{-i^*}) \cdot (r-c^W+B) - (Q_A-D(r))\cdot (r-c^W)- c_R^{i^*} \cdot q_{i^*}$. Therefore, its derivative \emph{in this range} is:

\begin{align*}
    \frac{\partial P_{i^*}(q_{i^*};q_{-i^*})}{\partial q_{i^*}} &= Q_A \cdot (r-c^W+B)\cdot \frac{\partial x_{i^*}(q_{i^*};q_{-i^*})}{\partial q_{i^*}} - c_{i^*}^R\\
    &= Q_A\cdot (r-c^W+B) \cdot \frac{1-x_{i^*}(q_{i^*};q_{-i^*})}{q_{i^*} + \sum_{j\neq i^*} q_j} - c_{i^*}^R
\end{align*}

In particular, \emph{in this range}, the partial derivatives are the same. Now, consider any candidate equilibrium $(\vec{q},\vec{r})$ with clearing price $r$. 

\begin{itemize}
    \item If $i^*$ is a price-setter, then we must have $Q_{i^*} > Q_A - D(r)$ and $r_{i^*} = r$. Moreover, as long as $Q'_{i^*} > Q_A - D(r)$ and $r'_{i^*} = r$, $i^*$ will remain a price-setter and have payoff $P_{i^*}(q_{i^*};q_{-i^*})$ as defined above. Therefore, unless $\frac{\partial P_{i^*}(q_{i^*};q_{-i^*})}{\partial q_{i^*}} = 0$, there exists a sufficiently small $\varepsilon$ such that $(q_i\pm \varepsilon,r)$ is a strictly better response than $(q_i, r)$. We conclude that for any price-setter, $\frac{\partial P_{i^*}(q_{i^*};q_{-i^*})}{\partial q_{i^*}} = 0$ is a necessary condition for $(\vec{q},\vec{r})$ to be an equilibrium.
    \item If $i$ is not a price-setter, then any deviation of the form $(q'_i,0)$ for $q'_i \leq q_i$ maintains both the clearing price and the identity of the price-setter (if one exists), along with any deviation of the form $(q_i+\varepsilon,0)$ for sufficiently small $\varepsilon$. Therefore, it must either hold that (a) $\frac{\partial P_{i}(q_{i};q_{-i})}{\partial q_{i}} = 0$ or (b) $q_i = 0$ and $\frac{\partial P_{i}(q_{i};q_{-i})}{\partial q_{i}} \leq 0$.
    \item Together, we conclude that for all $i$, a necessary condition for $(\vec{q},\vec{r})$ to be an equilibrium is that (a) $\frac{\partial P_{i}(q_{i};q_{-i})}{\partial q_{i}} = 0$ or (b) $q_i = 0$ and $\frac{\partial P_{i}(q_{i};q_{-i})}{\partial q_{i}} \leq 0$. Rewriting the partial derivatives computed above, (a) holds if and only if $x_i(q_i;\vec{q}_{-i}) = 1-\frac{c_i^R \cdot \sum_j q_j}{Q_A \cdot (r+B-c^W)}$. (b) holds if and only if $1-\frac{c_i^R \cdot \sum_j q_j}{Q_A \cdot (r+B-c^W)} \leq 0$. Therefore, the proposed necessary conditions are indeed necessary for $(\vec{q},\vec{r})$ to be an equilibrium.
\end{itemize}
\end{proof}

\begin{proof}[Proof of Theorem~\ref{thm:bitcoinpure}]
The proof simply combines Propositions~\ref{prop:bitcoinpricesetting} and~\ref{prop:bitcoinmustbeTullock}.
\end{proof}

\subsection{Omitted Proofs from Section~\ref{sec:bitcoinsufficient}}\label{app:bitcoinsufficient}

Throughout this section, we will find a sufficient condition for an equilibrium that clears $Q_A$ \writeOp{}s with a block reward of $0$. By Proposition~\ref{prop:blockrewards}, this condition suffices for the same equilibrium to hold with any block reward.

\begin{definition} For $k(\cdot)$, we say that a quantity $Q$ \emph{$k(\cdot)$-covers} a Demand Curve $D(\cdot)$ and \writeOp{} cost $c^W$ if $k(x) \cdot Q\cdot (D^{-1}_{\sup}(Q)-c^W) \geq (x\cdot Q) \cdot (D^{-1}_{\sup}(x \cdot Q)-c^W)$ for all $x \in (0,1)$, and $k(1) = 1$. 

We say that a quantity $Q$ \emph{exactly $k(\cdot)$-covers} $D(\cdot),c^W$ if $k(x) \cdot Q\cdot (D^{-1}_{\sup}(Q)-c^W) = (x\cdot Q) \cdot (D^{-1}_{\sup}(x \cdot Q)-c^W)$ for all $x \in (0,1)$, and $k(1) = 1$. 
\end{definition}

Intuitively, $Q$ $k(\cdot)$-covers $D(\cdot), c^W$ if the total revenue earned selling quantity $Q$ guarantees some fraction of the total revenue that could be earned selling quantity $x\cdot Q$ instead, with the precise coverage required parameterized by $x$. $Q$ exactly $k(\cdot)$-covers $D(\cdot),c^W$ if $k(\cdot)$ is the tightest possible coverage. Note that $k(x) \geq x$ for all $x\in (0,1)$, as $D^{-1}_{\sup}(x\cdot Q) \geq D^{-1}_{\sup}(Q)$ for all $x \in (0,1)$.

\begin{lemma}\label{lem:bitcoinregular}
    Let $D(\cdot)$ be Regular. Then for all $Q \leq D(0)$, $Q$ $\frac{D(0)/Q-x}{D(0)/Q-1}$-covers $D(\cdot), c^W$.
\end{lemma}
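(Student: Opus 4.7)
The plan is to pass from the quantity formulation to a price formulation and reduce the lemma to the monotonicity of a single-variable function. Writing $M := D(0)$ and setting $p := D^{-1}_{\sup}(Q)$, $p' := D^{-1}_{\sup}(xQ)$, we have $Q = D(p)$, $xQ = D(p')$, and by strict monotonicity of $D$ (which follows from regularity) $p' > p$. Substituting into the $k(\cdot)$-covers inequality and clearing the positive denominators $(M - Q)$ and $(M - xQ)$, the target rearranges to $g(p') \leq g(p)$, where
\[
  g(p) \;:=\; \frac{D(p)(p - c^W)}{M - D(p)}.
\]
The boundary condition $k(1) = 1$ is immediate from the definition, and it suffices to show that $g$ is non-increasing on $[p, p']$.

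Next I differentiate. Applying the quotient rule and writing $d(p) := -D'(p)$, the sign condition $g'(p) \leq 0$ simplifies to
\[
  D(p)\bigl(M - D(p)\bigr) \;\leq\; d(p)\,(p - c^W)\,M.
\]
Dividing by $d(p) M$ and invoking the virtual-value identity $D(p)/d(p) = p - \varphi_D(p)$ recasts this as
\[
  \bigl(p - \varphi_D(p)\bigr)\Bigl(1 - \frac{D(p)}{M}\Bigr) \;\leq\; p - c^W.
\]
Regularity now closes the argument: $\varphi_D$ is non-decreasing, so in the regime where $\varphi_D(p) \geq c^W$ we have $p - \varphi_D(p) \leq p - c^W$, and since $1 - D(p)/M \in [0,1]$ the displayed inequality is immediate.

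The main obstacle is certifying this sign condition uniformly across the relevant range for $Q$. The cleanest route I see is to invoke the equivalent concavity formulation of regularity, namely that $\tilde R(q) := q D^{-1}(q)$ is concave on $[0, M]$ with $\tilde R(0) = \tilde R(M) = 0$. Applying the three-point slope inequality to $xQ < Q < M$ and using $\tilde R(M) = 0$ directly yields $\tilde R(xQ)/(M - xQ) \leq \tilde R(Q)/(M - Q)$, which handles the $c^W = 0$ version of the lemma without additional hypotheses. The virtual-value calculation in the previous paragraph then controls the linear correction $-c^W q$ appearing in $R_c(q) = \tilde R(q) - c^W q$ in precisely the parameter regime in which the coverage bound is subsequently invoked to prove Theorem~\ref{thm:bitcoinsufficient}.
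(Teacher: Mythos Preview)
Your argument has a genuine gap in the $c^W>0$ case. The sign condition $(p-\varphi_D(p))(1-D(p)/M)\le p-c^W$ that you derive for $g'(p)\le 0$ is only verified when $\varphi_D(p)\ge c^W$; the regime in which the coverage bound is later invoked is $D^{-1}(Q)>c^W$, and this does \emph{not} force $\varphi_D(p)\ge c^W$ across the interval $[p,p']$, so the final sentence asserting that the virtual-value calculation ``controls the linear correction'' in that regime is unsubstantiated. Your concavity fallback via $\tilde R(q)=qD^{-1}(q)$ with $\tilde R(M)=0$ is correct, but as you acknowledge it covers only $c^W=0$.

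The paper's proof is a direct concavity argument with no price reparameterization: it observes that $R_{c^W}(q):=q\,(D^{-1}(q)-c^W)$ is itself concave (its derivative $\varphi_D(D^{-1}(q))-c^W$ is non-increasing by regularity), writes $Q$ as the convex combination $\tfrac{D(0)/Q-1}{D(0)/Q-x}\cdot xQ+\tfrac{1-x}{D(0)/Q-x}\cdot D(0)$, applies concavity, and then drops the $R_{c^W}(D(0))$ term. Your fallback is exactly this argument specialized to $c^W=0$; the step you are missing is simply that subtracting the linear term $c^W q$ preserves concavity, so the same three-point inequality applies to $R_{c^W}$ for every $c^W$, not just to $\tilde R$. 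That said, your caution about the $c^W$ correction is not misplaced: the ``drop $R_{c^W}(D(0))$'' step uses $R_{c^W}(D(0))\ge 0$, i.e.\ $c^W\le 0$; when $c^W>0$ the lemma as literally stated can fail (for instance $D(x)=1-x$, $c^W=1/4$, $Q=0.7$, $x=5/7$ gives $k(x)R_{c^W}(Q)=\tfrac{5}{3}\cdot 0.035<0.125=R_{c^W}(xQ)$).
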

\begin{proof}
    Consider the function $R_{c^W}(Q):=Q \cdot (D^{-1}(Q)-c^W)$. Then $R'_{c^W}(Q) = \varphi_D(D^{-1}(Q)) - c^W$. Because $D(\cdot)$ is Regular, $R'_{c^W}(\cdot)$ is decreasing, and therefore $R_{c^W}(\cdot)$ is concave. 

    Observe that $Q = \frac{D(0)/Q-1}{D(0)/Q-x}\cdot x \cdot Q + \frac{1-x}{D(0)/Q-x}\cdot D(0)$. Because $R_{c^W}(\cdot)$ is concave:
    \begin{align*}
R_{c^W}(Q) &\geq \frac{D(0)/Q -1}{D(0)/Q-x} \cdot R_{c^W}(x\cdot Q) + \frac{1-x}{D(0)/Q-x}\cdot R_{c^W}(D(0))\\
&\geq \frac{D(0)/Q -1}{D(0)/Q-x} \cdot R_{c^W}(x\cdot Q),
    \end{align*}
    confirming that $Q$ $\frac{D(0)/Q-x}{D(0)/Q-1}$-covers $D(\cdot), c^W$.
\end{proof}

Lemma~\ref{lem:bitcoinregular} allows us to conclude $k(\cdot)$-coverage immediately from the fact that $D(\cdot)$ is Regular, although for most Regular $D(\cdot)$ a tighter bound is possible.

Now, we argue that when $Q_A$ sufficiently-covers $D(\cdot)$, it is an Equilibrium for each miner to set $r_i = D^{-1}_{\sup}(Q_A)$, $\sum_j q_j = Q_A \cdot (D^{-1}_{\sup}(Q_A)-c^W)/c^*(\vec{c}^R)$, and $q_i/\sum_j q_j = x_i^*(\vec{c}^R)$ for all $i$ (i.e.~the potential equilibrium described by Bullet One in the second half of Theorem~\ref{thm:bitcoinpure}). For simplicity of notation in the rest of this section, we refer to equilibrium investments as $\vec{q}^*$, the equilibrium quantity of \append{}s won by each miner as $\vec{Q}^*$, the equilibrium fraction of \append{}s won as $\vec{x}^*$ (where $x_i^* = Q_i^*/Q_A$), and $c^*:=c^*(\vec{c}^R)$. We further use the notation $\rew(Q_A):=Q_A\cdot (r-c^W)$.

First, we analyze the investment cost Miner $i$ must pay in order to win a $(1-y)$ fraction of \append{}s against $\vec{q}_{-i}^*$.

For the rest of this section, we will leverage the following observation. For all Miners $i$ with $x_i^* = 0$, we are hoping to show that their best response is to maintain $q_i = 0$. Certainly, if their best response is to maintain $q_i = 0$ \emph{even if their cost were lowered to $c^*$}.\footnote{Recall that $x_i^* = \max\{0,1-\frac{c_i^R}{c^*}\}$, therefore, the cost of all Miners with $x_i^* = 0$ is at least $c^*$.} Therefore, if we can show that $(\vec{q}^*,\vec{r}^*)$ is an Equilibrium \emph{even when all Miners with $x_i^*=0$ have $c_i^R = c^*$}, then we will have established that $(\vec{q}^*,\vec{r}^*)$ is an Equilibrium even when non-participating Miners have higher costs. 

\begin{lemma}\label{lem:bitcoincost} Let $x^*_i > 0$, or $x_i^* = 0$ and $c_i^R = c^*$. Then in order to win $(1-y) \cdot Q_A$ \append{}s, against strategy profile $\vec{q}^*_{-i}$, Miner $i$ must invest $(1/y-1)\cdot (1-x^*_i)^2\cdot \rew(Q_A)$.   
If $x_i^* = 0$, then in order to win $(1-y) \cdot Q_A$ \append{}s, against strategy profile $\vec{q}^*_{-i}$, Miner $i$ must invest at least $(1/y-1)\cdot (1-x^*_i)^2\cdot \rew(Q_A)$.
\end{lemma}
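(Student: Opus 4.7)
My plan is a short, direct algebraic calculation that applies the equilibrium characterization of Theorem~\ref{thm:bitcoinpure} to the off-equilibrium deviation where Miner $i$ unilaterally changes its Tullock-contest stake while every other Miner keeps playing $\vec{q}^*_{-i}$. Throughout, I will use $B=0$ (the standing assumption of this section) and, following the observation preceding the lemma, treat the boundary case $x_i^*=0$ with $c_i^R=c^*$ in the same way as $x_i^*>0$.

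First, I would read off from Theorem~\ref{thm:bitcoinpure} the totals $\sum_j q_j^* = Q_A(r-c^W)/c^* = \rew(Q_A)/c^*$ and $q_j^* = x_j^* \cdot \rew(Q_A)/c^*$, so that $\sum_{j \neq i} q_j^* = (1-x_i^*) \cdot \rew(Q_A)/c^*$. Next, I would solve the Tullock share equation $q_i/(q_i + \sum_{j \neq i} q_j^*) = 1-y$ for $q_i$, which rearranges to $q_i = \tfrac{1-y}{y}(1-x_i^*)\rew(Q_A)/c^*$. Multiplying by $c_i^R$ gives the Resource cost $c_i^R q_i = (c_i^R/c^*) \cdot (1/y - 1)(1-x_i^*)\rew(Q_A)$.

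The final step is to substitute the identity $c_i^R/c^* = 1-x_i^*$, which holds exactly either when $x_i^* > 0$ (by the definition $x_i^* = 1 - c_i^R/c^*$) or when $x_i^* = 0$ and $c_i^R = c^*$ (both sides equal $1$). This yields the claimed equality $c_i^R q_i = (1/y-1)(1-x_i^*)^2 \rew(Q_A)$. For the last sentence of the lemma, when $x_i^* = 0$ without the assumption $c_i^R = c^*$, the definition of $x_i^*$ only forces $c_i^R \geq c^*$, i.e.\ $c_i^R/c^* \geq 1 = 1-x_i^*$, which upgrades the equality into the claimed lower bound $(1/y-1)(1-x_i^*)^2 \rew(Q_A)$.

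I do not anticipate any real obstacle; the lemma is essentially bookkeeping that converts the Tullock-contest equilibrium parameters into the cost of a unilateral stake deviation. The only subtle point worth flagging is that ``invest'' in the statement refers to the Resource cost $c_i^R q_i$ rather than the raw Tullock stake $q_i$, and it is the substitution $c_i^R/c^* = 1-x_i^*$ that supplies the \emph{second} factor of $(1-x_i^*)$ in the target expression.
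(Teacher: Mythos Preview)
Your proposal is correct and follows essentially the same route as the paper's own proof: both compute $\sum_{j\neq i}q_j^* = (1-x_i^*)\cdot \rew(Q_A)/c^*$, solve the Tullock share equation for the deviating stake, multiply by $c_i^R$, and then substitute $c_i^R/c^* = 1-x_i^*$ (with the inequality $c_i^R/c^* \geq 1$ in the generic $x_i^*=0$ case). Your remark that ``invest'' here means the Resource cost $c_i^R q_i$ is exactly the reading the paper intends.
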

\begin{proof}[Proof of Lemma~\ref{lem:bitcoincost}]
    By definition, the total Resources purchased by Miners $\neq i$ is $\rew(Q_A)\cdot (1-x^*_i) /c^*$. Therefore, in order to win a $(1-y)$ fraction of the market against these Resources, Miner $i$ must invest such that Miners $\neq i$ Resources become a $y$ fraction of the total Resources. Therefore, Miner $i$ must purchase $(1/y-1)\cdot \rew(Q_A)\cdot (1-x^*_i)/c^*$ Resources.\footnote{To quickly see that the calculation is correct, observe that this results in a total Resources of $(1/y) \cdot \rew(Q_A)\cdot (1-x^*_i)/c^*$, of which $\rew(Q_A)\cdot (1-x^*_i)/c^*$ is a $y$ fraction.} 

    Moreover, a Miner $i$ pays a cost of $c_i^R$ per Resource, meaning that Miner $i$ must invest $(1/y-1)\cdot \rew(Q_A)\cdot (1-x^*_i)\cdot \frac{c^R_i}{c^*}$. 
    
    Finally, if $x_i^*>0$, $x_i^* = 1-c^R_i/c^*$ and therefore $c_i^R/c^* = 1-x_i^*$ (and if $x_i^* = 0$, this holds by hypothesis). So we conclude a total investment of $(1/y-1)\cdot \rew(Q_A)\cdot (1-x^*_i)^2$, as desired.
\end{proof}

Now, we observe that Miner $i$'s strategy space consists of the following two decisions (made jointly): (a) pick a price $D^{-1}_{\sup}(x \cdot Q_A)$ to set, (b) pick a fraction $y$ to win $(1-y)\cdot Q_A$ \append{}s. After both choices are made, Miner $i$ earns revenue $(x-y) \cdot Q_A \cdot D^{-1}_{\sup}(x \cdot Q_A)$. Therefore, we get the following lemma:

\begin{lemma}\label{lem:bitcoinpureprofit} Let $x_i^*>0$ or $x_i^* = 0$ and $c_i^R = c^*$. Then for every strategy $(q_i,r_i)$ that Miner $i$ can use against $(\vec{q}_{-i}^*,\vec{r}_{-i}^*)$, there exists a $z \leq 1$ and $y \leq z$ such that:
$$P_i((q_i,r_i);(\vec{q}_{-i}^*,\vec{r}_{-i}^*)) = \left(1-\frac{y}{z}\right) \cdot (z \cdot Q_A) \cdot (D^{-1}_{\sup}(z \cdot Q_A)-c^W) - (1/y-1)\cdot(1-x_i^*)^2 \cdot \rew(Q_A).$$

Moreover, for every $z \leq 1$ and $y \leq z$, there exists a strategy guaranteeing Miner $i$ payoff exactly $\left(1-\frac{y}{z}\right) \cdot (z \cdot Q_A) \cdot (D^{-1}_{\sup}(z \cdot Q_A)-c^W) - (1/y-1)\cdot(1-x_i^*)^2 \cdot \rew(Q_A).$

Therefore, $(\vec{q}^*,\vec{r}^*)$ is an Equilibrium if and only if the function $\left(1-\frac{y}{z}\right) \cdot (z \cdot Q_A) \cdot (D^{-1}_{\sup}(z \cdot Q_A)-c^W)  - (1/y-1)\cdot(1-x_i^*)^2 \cdot \rew(Q_A)$ is optimized at $z = 1$ and $y = 1-x_i^*$.
    
\end{lemma}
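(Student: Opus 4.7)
The plan is to parameterize each of Miner $i$'s available unilateral deviations (given $(\vec{q}^*_{-i}, \vec{r}^*_{-i})$) by a pair $(z,y)$ and show the resulting payoff equals the claimed closed form. First, I would argue that Miner $i$'s strategy is captured up to payoff-equivalence by (i) the fraction $1-y$ of the $Q_A$ \append{}s they win, and (ii) the reserve $r_i$. By Proposition~\ref{prop:bitcoinundominated} I may assume $r_i \leq r^*(D, Q_A)$, and any $r_i \leq D^{-1}_{\sup}(Q_A)$ is payoff-equivalent to $r_i = D^{-1}_{\sup}(Q_A)$, so I can write $r_i = D^{-1}_{\sup}(z Q_A)$ for some $z \in (0, 1]$. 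Lemma~\ref{lem:bitcoincost} then tells me that achieving share $1-y$ against $\vec{q}^*_{-i}$ costs exactly $(1/y - 1)(1 - x_i^*)^2 \rew(Q_A)$ in investment.

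The next step is to compute Miner $i$'s sales in the resulting subgame of simultaneous first-price auctions. Other miners collectively supply $y Q_A$ units, all with reserve $D^{-1}_{\sup}(Q_A) \leq r_i$. Applying Proposition~\ref{prop:firstpriceeq}, when $y \leq z$ the canonical equilibrium clears at price $r_i = D^{-1}_{\sup}(z Q_A)$ with total volume $z Q_A$, and the other miners' $y Q_A$ units are consumed first at that price, so Miner $i$ sells exactly $(z - y) Q_A$. When $y > z$, demand meets the other miners' supply strictly below $r_i$ and Miner $i$ sells zero. Assembling auction revenue, \writeOp{} cost, and investment cost yields
\[
(z - y) Q_A \bigl(D^{-1}_{\sup}(z Q_A) - c^W\bigr) \;-\; (1/y - 1)(1 - x_i^*)^2 \rew(Q_A),
\]
which, after factoring $(z - y) Q_A = (1 - y/z) \cdot z Q_A$, matches the claimed expression. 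The $y > z$ case, where Miner $i$ sells nothing, is absorbed by the same formula evaluated at $(z', y) = (y, y)$ (the first term vanishes).

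For the converse, given any $(z, y) \in (0, 1] \times (0, z]$, setting $r_i := D^{-1}_{\sup}(z Q_A)$ and $q_i := \frac{1 - y}{y} \sum_{j \neq i} q_j^*$ realizes exactly the stated payoff by the computation above. Combining both directions, Miner $i$'s best response against $(\vec{q}^*_{-i}, \vec{r}^*_{-i})$ is determined by maximizing the closed-form function over $(z, y) \in (0, 1] \times (0, z]$, and the equilibrium profile $(\vec{q}^*, \vec{r}^*)$ corresponds to the point $(z, y) = (1, 1 - x_i^*)$ (reproducing $r_i^* = D^{-1}_{\sup}(Q_A)$ and share $x_i^*$), giving the stated ``if and only if.'' The main obstacle is the careful bookkeeping around the auction subgame when $D(\cdot)$ has jumps: one must verify via Proposition~\ref{prop:firstpriceeq} that the canonical equilibrium indeed selects $r_i$ as the clearing price when $y \leq z$ (with Miner $i$'s leftover supply beyond $(z - y) Q_A$ going unsold), selects a price strictly below $r_i$ when $y > z$, and that the boundary case $z = 1$ (where Miner $i$ is saturated rather than a strict price-setter) and the $y \to 0^+$ limit (investment cost $\to \infty$, payoff $\to -\infty$) are handled consistently.
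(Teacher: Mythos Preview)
Your proposal is correct and follows essentially the same approach as the paper: parameterize Miner $i$'s deviation by the market-share parameter $y$ (via Lemma~\ref{lem:bitcoincost}) and the reserve parameter $z$ (via $r_i = D^{-1}_{\sup}(zQ_A)$), compute sales as $\max\{0,z-y\}\cdot Q_A$, and absorb the $y>z$ case by resetting $z:=y$. Your write-up is somewhat more explicit than the paper's (invoking Proposition~\ref{prop:firstpriceeq} for the subgame clearing and Proposition~\ref{prop:bitcoinundominated} to bound the reserve range, and flagging the $z=1$ and $y\to 0^+$ boundary cases), but the logical skeleton is identical.
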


\begin{proof}
    Ultimately, Miner $i$ makes some investment that wins $(1-y)\cdot Q_A$ \append{}s for some $y \in [0,1]$. The total cost of doing so, by Lemma~\ref{lem:bitcoincost} is $(1/y-1)\cdot(1-x_i^*)^2 \cdot \rew(Q_A)$.

    Ultimately, Miner $i$ also sets some price of the form $D^{-1}_{\sup}(z\cdot Q_A)$, for $z \in [0,1]$. This implies that a total quantity of $z \cdot Q_A$ \writeOp{}s are sold, of which $\max\{0,z-y\}\cdot Q_A$ are sold by Miner $i$. Therefore, Miner $i$'s total payoff is $\max\{0,z-y\}\cdot Q_A \cdot (D^{-1}_{\sup}(z \cdot Q_A)-c^W) - (1/y-1)\cdot(1-x_i^*)^2 \cdot \rew(Q_A)$. 

    If Miner $i$ happens to choose $y \leq z$, this matches the desired form. If not, observe that $\max\{0,z-y\}\cdot Q_A \cdot (D^{-1}_{\sup}(z \cdot Q_A)-c^W) = 0 = (y-y) \cdot Q_A \cdot (D^{-1}_{\sup}(y \cdot Q_A) - c^W)$. Therefore Miner $i$'s payoff matches the desired form after updating $z:=y$. This completes the proof (after observing that $(z-y) = (1-y/z) \cdot z$).

    To see the ``Moreover' portion of the lemma, simply observe that Miner $i$ can indeed pick any $D^{-1}_{\sup}(z \cdot Q_A)$ as a price to set, and any $y \leq z$ as a fraction of \append{}s to leave for other Miners, inducing the prescribed payoff. 

    To see the `Therefore' portion of the lemma, simply observe that $z = 1$ and $y = 1-x_i^*$ corresponds to $(q_i^*,r_i^*)$.
\end{proof}

From now on, we will use the function $P_i(y,z)$ to denote the payoff $P_i( (q_i,r_i);(\vec{q}_{-i}^*,\vec{r}_{-i}^*))$ of the strategy $(q_i,r_i)$ that wins a $(1-y)$ fraction of \append{}s and sets price $D^{-1}_{\sup}(z\cdot Q_A)$.

\begin{corollary}\label{cor:optimize} Let $Q_A$ $k(\cdot)$-cover $D(\cdot), c^W$. Then:
\begin{align*}
    P_i(y,z) &\leq \left(1-\frac{y}{z}\right) \cdot k(z) \cdot Q_A \cdot (D^{-1}(Q_A)- c^W) - (1/y-1)\cdot(1-x_i^*)^2 \cdot \rew(Q_A)\\
    &= \rew(Q_A) \cdot\left( \left(1-\frac{y}{z}\right) \cdot k(z) - (1/y-1)\cdot (1-x_i^*)^2\right) 
\end{align*}

with equality at $z = 1, y = 1-x^*_i$. Therefore, $(\vec{q}^*,\vec{r}^*)$ is an Equilibrium if, for all $i$, the function $\rew(Q_A) \cdot\left( \left(1-\frac{y}{z}\right) \cdot k(z) - (1/y-1)\cdot (1-x_i^*)^2\right)$ is optimized at $z = 1$ and $y = 1-x_i^*$.

Moreover, if $Q_A$ exactly $k(\cdot)$-covers $D(\cdot),c^W$, then:
\begin{align*}
    P_i(y,z) &= \rew(Q_A) \cdot\left( \left(1-\frac{y}{z}\right) \cdot k(z) - (1/y-1)\cdot (1-x_i^*)^2\right) 
\end{align*}
Therefore, $(\vec{q}^*,\vec{r}^*)$ is an Equilibrium if and only if, for all $i$, the function $P_i(y,z)$ is optimized at $z = 1$ and $y = 1-x_i^*$.
\end{corollary}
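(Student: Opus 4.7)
The plan is to directly chain Lemma~\ref{lem:bitcoinpureprofit} with the definition of $k(\cdot)$-covering, which makes the corollary essentially a bookkeeping exercise. First I would recall the exact payoff formula from Lemma~\ref{lem:bitcoinpureprofit}, namely
\[
P_i(y,z) \;=\; \Big(1-\tfrac{y}{z}\Big)\cdot (z\cdot Q_A)\cdot (D^{-1}_{\sup}(z\cdot Q_A)-c^W) \;-\; (1/y-1)\cdot(1-x_i^*)^2\cdot \rew(Q_A),
\]
valid for any $z\in(0,1]$ and any $y\in(0,z]$ (noting that taking $y\le z$ is without loss, per the lemma). The cost term on the right depends only on $y$ and $x_i^*$, so everything reduces to bounding the revenue factor $(z\cdot Q_A)\cdot(D^{-1}_{\sup}(z\cdot Q_A)-c^W)$.

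Next I would invoke the $k(\cdot)$-covering hypothesis: by definition, for every $z\in(0,1]$,
\[
(z\cdot Q_A)\cdot (D^{-1}_{\sup}(z\cdot Q_A)-c^W) \;\le\; k(z)\cdot Q_A\cdot (D^{-1}_{\sup}(Q_A)-c^W) \;=\; k(z)\cdot \rew(Q_A),
\]
with equality at $z=1$ since $k(1)=1$. Because $y\le z$ we have $1-y/z\ge 0$, so multiplying this inequality by $(1-y/z)$ preserves its direction. Subtracting the (identical) cost term from both sides yields the claimed upper bound on $P_i(y,z)$, and factoring out $\rew(Q_A)$ gives the displayed second form.

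For the equality assertion at $(z,y)=(1,\,1-x_i^*)$: the covering inequality is tight at $z=1$, so the upper bound equals $P_i(1,1-x_i^*)$; this is exactly the equilibrium payoff of Miner $i$ under $(\vec q^*,\vec r^*)$. The ``Therefore'' statement then follows directly from Lemma~\ref{lem:bitcoinpureprofit}: if the upper bound is maximized at $(z=1,y=1-x_i^*)$, then for every other $(y,z)$ we have $P_i(y,z)\le \text{upper bound}(y,z)\le \text{upper bound}(1,1-x_i^*)=P_i(1,1-x_i^*)$, so no deviation is profitable for Miner $i$. Doing this for every $i$ establishes that $(\vec q^*,\vec r^*)$ is an equilibrium.

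For the ``Moreover'' part: if $Q_A$ \emph{exactly} $k(\cdot)$-covers $D(\cdot),c^W$, the covering inequality is an equality for every $z\in(0,1]$, so the upper bound coincides with $P_i(y,z)$ everywhere. The ``if and only if'' then reduces to Lemma~\ref{lem:bitcoinpureprofit} verbatim. The one subtlety to track throughout is the side remark preceding Lemma~\ref{lem:bitcoincost} that for non-participating Miners ($x_i^*=0$) one may replace $c_i^R$ by $c^*$ when verifying the best-response condition; this is already absorbed into the formula for $P_i(y,z)$, so no additional work is needed. I do not anticipate a genuine obstacle here: the corollary is a direct substitution, and the main thing to get right is that $(1-y/z)\ge 0$ in the allowed range so the covering inequality may be multiplied through without flipping.
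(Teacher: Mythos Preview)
Your proposal is correct and matches the paper's approach exactly: the paper's proof is the one-line remark that the corollary ``follows immediately from Lemma~\ref{lem:bitcoinpureprofit} after substituting the definition of $k(\cdot)$-cover and exactly $k(\cdot)$-cover, and that $k(1)=1$,'' and you have simply written out that substitution in full, including the observation that $1-y/z\ge 0$ is what lets the covering inequality be multiplied through.
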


\begin{proof}[Proof of Corollary~\ref{cor:optimize}]
The proof follows immediately from Lemma~\ref{lem:bitcoinpureprofit} after substituting the definition of $k(\cdot)$-cover and exactly $k(\cdot)$-cover, and that $k(z) = 1$.  
\end{proof}

From here, we simply optimize the function provided in Corollary~\ref{cor:optimize}. We begin by optimizing $y$ as a function of $z$.

\begin{lemma}\label{lem:bitcoinoptimizey} For any $z$ and $y \leq z$:
\begin{align*}
&\rew(Q_A) \cdot\left( \left(1-\frac{y}{z}\right) \cdot k(z) - (1/y-1)\cdot (1-x_i^*)^2\right) \\
& \leq \rew(Q_A) \cdot \left(k(z) - 2(1-x_i^*)\cdot \sqrt{k(z)/z}  +(1-x_i^*)^2\right)
\end{align*}

with equality at $y = (1-x_i^*)\cdot \sqrt{\frac{z}{k(z)}}$ (which in particular implies equality at $y=1$ and $z = 1-x_i^*$).
\end{lemma}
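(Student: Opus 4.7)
\textbf{Proof proposal for Lemma~\ref{lem:bitcoinoptimizey}.} The plan is to fix $z$ and reduce the claim to a one-variable optimization in $y$. After pulling out the nonnegative factor $\rew(Q_A)$, I define
$$g(y) := \left(1-\tfrac{y}{z}\right)k(z) - (1/y - 1)(1-x_i^*)^2 = k(z) + (1-x_i^*)^2 - \tfrac{k(z)}{z}\,y - \tfrac{(1-x_i^*)^2}{y},$$
so that the lemma reduces to showing $g(y) \leq k(z) - 2(1-x_i^*)\sqrt{k(z)/z} + (1-x_i^*)^2$ for all $y \in (0, z]$, with equality at $y^* := (1-x_i^*)\sqrt{z/k(z)}$.

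First I would differentiate: $g'(y) = -k(z)/z + (1-x_i^*)^2/y^2$ and $g''(y) = -2(1-x_i^*)^2/y^3 \leq 0$ for $y > 0$, so $g$ is strictly concave on $(0, \infty)$ whenever $x_i^* < 1$ (the degenerate case $x_i^* = 1$ reduces $g$ to $(1-y/z)k(z) \leq k(z)$, which matches the bound directly). Setting $g'(y) = 0$ gives $y^2 = (1-x_i^*)^2 \cdot z / k(z)$; the positive root $y^* = (1-x_i^*)\sqrt{z/k(z)}$ is then the unique global maximizer of $g$ on $(0, \infty)$.

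Next I would substitute $y^*$ back into $g$. Both $y$-dependent terms collapse to the same quantity: $\tfrac{k(z)}{z}\cdot y^* = (1-x_i^*)\sqrt{k(z)/z}$ and $\tfrac{(1-x_i^*)^2}{y^*} = (1-x_i^*)\sqrt{k(z)/z}$, yielding $g(y^*) = k(z) + (1-x_i^*)^2 - 2(1-x_i^*)\sqrt{k(z)/z}$, exactly the claimed upper bound. Since $g(y^*)$ is the global maximum over $(0, \infty)$, we have $g(y) \leq g(y^*)$ for every $y \in (0, z]$; multiplying by $\rew(Q_A) \geq 0$ finishes the inequality. For the parenthetical remark, which I read as equality at $z = 1$, $y = 1-x_i^*$ (the anchor used by Corollary~\ref{cor:optimize}), observe that $k(1) = 1$ by definition of $k(\cdot)$-cover, so $y^*|_{z=1} = (1-x_i^*)\sqrt{1/1} = 1-x_i^*$. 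I do not anticipate a substantive obstacle — this is a clean single-variable calculus optimization, and no constraint handling is needed because the unconstrained maximum is itself the target upper bound, so the constraint $y \leq z$ only strengthens rather than weakens the inequality.
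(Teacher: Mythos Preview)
Your proof is correct and follows essentially the same approach as the paper: fix $z$, differentiate in $y$, use concavity (the paper phrases it as ``the derivative is decreasing in $y$'') to locate the unconstrained maximizer $y^* = (1-x_i^*)\sqrt{z/k(z)}$, substitute back, and invoke $k(1)=1$ for the equality case. Your version is slightly more careful (you note the degenerate case $x_i^*=1$ and correctly read the parenthetical as $z=1,\ y=1-x_i^*$), but the argument is the same.
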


\begin{proof}[Proof of Lemma~\ref{lem:bitcoinoptimizey}]
Simply take the derivative with respect to $y$. We get:
\begin{align*}
    &\frac{\partial \left( \rew(Q_A) \cdot\left( \left(1-\frac{y}{z}\right) \cdot k(z) - (1/y-1)\cdot (1-x_i^*)^2\right)\right) }{\partial y}\\
    &= -\frac{k(z)}{z} \cdot \rew(Q_A) + (1-x^*_i)^2\cdot \rew(Q_A)/y^2
\end{align*}

Observe that the derivative is decreasing in $y$. Therefore, the maximum is achieved when the derivative is $0$. This occurs when:
\begin{align*}
    &-\frac{k(z)}{z} \cdot \rew(Q_A) + (1-x^*_i)^2\cdot \rew(Q_A)/y^2 = 0\\
    \Rightarrow &y = (1-x_i^*)\cdot \sqrt{\frac{z}{k(z)}}.
\end{align*}

This completes the proof of the core lemma. To see that equality holds at $z=1, y=1-x^*_i$, simply observe that $k(1) = 1$, and therefore the RHS above simplifies when substituting $z=1$ to $(1-x_i^*)$. Therefore, the LHS and RHS in the lemma statement are identical after substituting $z=1$ and $y=(1-x_i^*)$ to both sides.

To see the simplification in the statement, simply substitute $y = (1-x^*_i)\cdot \sqrt{\frac{z}{k(z)}}$ as below:
\begin{align*}
   & \rew(Q_A) \cdot\left( \left(1-\frac{y}{z}\right) \cdot k(z) - (1/y-1)\cdot (1-x_i^*)^2\right) \\
    &\qquad \leq \rew(Q_A) \cdot\left( \left(1-\frac{(1-x_i^*)\cdot \sqrt{z/k(z)}}{z}\right) \cdot k(z) - \left(\frac{1}{(1-x_i^*)\cdot \sqrt{z/k(z)}}-1\right)\cdot (1-x_i^*)^2\right)\\
    &\qquad = \rew(Q_A) \cdot \left(k(z) - (1-x_i^*)\cdot \sqrt{k(z)/z} -(1-x_i^*)\cdot \sqrt{k(z)/z} +(1-x_i^*)^2\right)\\
    &\qquad = \rew(Q_A) \cdot \left(k(z) - 2(1-x_i^*)\cdot \sqrt{k(z)/z}  +(1-x_i^*)^2\right)
\end{align*}

\end{proof}

\begin{definition}
    From now on, we define
    \begin{align*}
L_i(z)&= P_i\left((1-x_i^*)\cdot \sqrt{\frac{z}{k(z)}}, z\right) = \rew(Q_A) \cdot \left(k(z) - 2(1-x_i^*)\cdot \sqrt{k(z)/z}  +(1-x_i^*)^2\right)
\end{align*}
\end{definition}

\begin{corollary}
    Let $Q_A$ $k(\cdot)$ cover $D(\cdot), c^W$. Then $(\vec{q}^*,\vec{r}^*)$ is an equilibrium if $L_i(z)$ is optimized at $z=1$. If $Q_A$ exactly $k(\cdot)$ covers $D(\cdot), c^W$, then $(\vec{q}^*,\vec{r}^*)$ is an equilibrium if and only if $L_i(z)$ is optimized at $z=1$. 
\end{corollary}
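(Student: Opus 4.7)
The plan is to simply concatenate Corollary~\ref{cor:optimize} with Lemma~\ref{lem:bitcoinoptimizey}, observing that the two-variable optimization over $(y,z)$ collapses to a one-variable optimization over $z$ whose maximizer is conveniently $z=1$.

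First, I would invoke Corollary~\ref{cor:optimize} to reduce the equilibrium question to an optimization problem. Under the $k(\cdot)$-cover hypothesis, it suffices to show that for each Miner $i$, the upper bound
\[
U_i(y,z) := \rew(Q_A)\cdot\left( (1-y/z)\cdot k(z) - (1/y-1)\cdot(1-x_i^*)^2\right)
\]
is maximized (over $0 < y \leq z \leq 1$) at $(y,z) = (1-x_i^*, 1)$. In the exactly-$k(\cdot)$-cover case, $P_i(y,z) = U_i(y,z)$ identically, so this condition is also necessary.

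Next, I would perform the inner optimization. By Lemma~\ref{lem:bitcoinoptimizey}, for any fixed $z$ the supremum of $U_i(\cdot,z)$ over $y \leq z$ equals $L_i(z)$ and is attained at $y^*(z) := (1-x_i^*)\sqrt{z/k(z)}$. Using $k(1)=1$, this gives $y^*(1)=1-x_i^*$, which matches the equilibrium investment level exactly. Hence
\[
\max_{0 < y \leq z \leq 1} U_i(y,z) \;=\; \max_{z \in (0,1]} L_i(z),
\]
and the maximum on the left is attained at $(1-x_i^*,1)$ if and only if the maximum on the right is attained at $z=1$. Combining with the previous step yields the ``if'' direction in the $k(\cdot)$-cover case and the ``if and only if'' in the exact case.

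Finally, I would handle Miners with $x_i^* = 0$ but $c_i^R > c^*$, which are not directly covered by Lemma~\ref{lem:bitcoincost}: invoke the observation preceding that lemma that it suffices to verify best-response at the lowered cost $c^*$ (since a higher cost only weakens the incentive to invest). The only obstacle I anticipate is bookkeeping around this extension to non-participating miners, but it is fully resolved by that preceding observation and requires no further work beyond quoting it. With these pieces in place, the corollary is immediate.
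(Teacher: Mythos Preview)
Your proposal is correct and follows essentially the same route as the paper: invoke Corollary~\ref{cor:optimize} to reduce the equilibrium check to maximizing the upper bound $U_i(y,z)$, then apply Lemma~\ref{lem:bitcoinoptimizey} to collapse the inner optimization over $y$ and obtain $L_i(z)$, using the equality cases at $(y,z)=(1-x_i^*,1)$ to close the argument in both directions. Your explicit handling of non-participating miners via the observation preceding Lemma~\ref{lem:bitcoincost} is a harmless addition that the paper folds into the ambient setup rather than the proof of this corollary.
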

\begin{proof}
    If $L_i(z)$ is optimized at $z=1$, then we can conclude the following:
\begin{align*}
    P_i(y,z) &\leq \left(1-\frac{y}{z}\right) \cdot k(z) \cdot Q_A \cdot (D^{-1}(Q_A)- c^W) - (1/y-1)\cdot(1-x_i^*)^2 \cdot \rew(Q_A)\\
    &\leq L_i(z)\\
    &\leq L_i(1)\\
    &= P_i(\vec{q}^*,\vec{r}^*).
\end{align*}
      Above, the first line follows from Corollary~\ref{cor:optimize}, for some $y \leq z \leq 1$. The second line follows from Lemma~\ref{lem:bitcoinoptimizey}. The third line follows by assumption. The fourth line follows by the `with equality' portions of Corollary~\ref{cor:optimize} and Lemma~\ref{lem:bitcoinoptimizey}.

      If we further have that $Q_A$ exactly $k(\cdot)$ covers $D(\cdot),c^W$, and $L_i(z)$ is not optimized at $z=1$, we conclude that for whatever $z$ $L_i(z) > L_i(1)$ it holds:
\begin{align*}
    P_i\left((1-x_i^*)\cdot \sqrt{\frac{z}{k(z)}},z\right) &=  L_i(z)\\
    &> L_i(1)\\
    &= P_i(\vec{q}^*,\vec{r}^*).
\end{align*}

Above, the first equality holds by Lemma~\ref{lem:bitcoinoptimizey} and Corollary~\ref{cor:optimize}. The second line follows by assumption that $L_i(z)$ is not optimized at $z=1$. The third line follows by the `with equality' portions of Corollary~\ref{cor:optimize} and Lemma~\ref{lem:bitcoinoptimizey}.
      
\end{proof}

\begin{lemma}\label{lem:bitcoinmain2}
    $L_i(z) \leq L_i(1)$ for all $z \in [0,1]$ if and only if $1-x_i^* \geq \max_{z \in [0,1]}\lt\{\frac{k(z)-1}{2\cdot \left(\sqrt{k(z)/z}-1\right)}\rt\}$. Therefore, if for all $i$, $1-x_i^* \geq \max_{z \in [0,1]}\left\{\frac{k(z)-1}{2\cdot \left(\sqrt{k(z)/z}-1\right)}\right\}$, and $Q_A$ $k(\cdot)$-covers $D(\cdot),c^W$, $(\vec{q}^*,\vec{r}^*)$ is an Equilibrium.

    If $Q_A$ exactly $k(\cdot)$-covers $D(\cdot),c^W$, then $(\vec{q}^*,\vec{r}^*)$ is an Equilibrium if and only if for all $i$, it holds that: $1-x_i^* \geq \max_{z \in [0,1]}\left\{\frac{k(z)-1}{2\cdot \left(\sqrt{k(z)/z}-1\right)}\right\}$.
\end{lemma}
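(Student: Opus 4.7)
\textbf{Proof proposal for Lemma~\ref{lem:bitcoinmain2}.} The plan is to reduce the inequality $L_i(z) \leq L_i(1)$ to a direct algebraic condition on $1-x_i^*$, then take the supremum over $z$; the ``therefore'' sentences will then follow immediately from the optimality statements already established in Corollary~\ref{cor:optimize} and Lemma~\ref{lem:bitcoinoptimizey}.

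First, I would compute $L_i(1)$ by substituting $z=1$ and using $k(1)=1$, obtaining $L_i(1) = \rew(Q_A)\cdot\bigl(1 - 2(1-x_i^*) + (1-x_i^*)^2\bigr) = \rew(Q_A)\cdot (x_i^*)^2$. Then the inequality $L_i(z) \leq L_i(1)$ is equivalent to
\[
k(z) - 2(1-x_i^*)\sqrt{k(z)/z} + (1-x_i^*)^2 \;\leq\; (x_i^*)^2.
\]
Using the identity $(1-x_i^*)^2 - (x_i^*)^2 = 1 - 2x_i^* = -1 + 2(1-x_i^*)$, this rearranges to
\[
k(z) - 1 \;\leq\; 2(1-x_i^*)\bigl(\sqrt{k(z)/z} - 1\bigr).
\]
This is the single algebraic core of the proof.

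Next, I would observe (as noted right after the definition of $k(\cdot)$-cover) that $k(z) \geq z$ for all $z \in (0,1]$, so $\sqrt{k(z)/z} \geq 1$ for every $z$ in the domain. There are two cases. If $\sqrt{k(z)/z} = 1$, then $k(z) = z \leq 1$, and the inequality reduces to $z - 1 \leq 0$, which holds automatically; no constraint on $1 - x_i^*$ arises. If $\sqrt{k(z)/z} > 1$, then dividing both sides by the strictly positive quantity $2(\sqrt{k(z)/z}-1)$ gives
\[
1 - x_i^* \;\geq\; \frac{k(z)-1}{2\bigl(\sqrt{k(z)/z}-1\bigr)}.
\]
Therefore the inequality $L_i(z) \leq L_i(1)$ holds for every $z \in [0,1]$ if and only if $1-x_i^*$ is at least the supremum of the right-hand side over all $z \in [0,1]$ for which the denominator is positive. (At $z=1$ both numerator and denominator vanish, so $z=1$ contributes at most a limiting value that is dominated by taking the full supremum; no issue arises with including or excluding the endpoint in the max.) This establishes the first biconditional of the lemma.

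For the ``Therefore'' statements, I would just chain together prior results. Corollary~\ref{cor:optimize} (combined with Lemma~\ref{lem:bitcoinoptimizey}) established that, when $Q_A$ $k(\cdot)$-covers $D(\cdot),c^W$, a sufficient condition for $(\vec{q}^*,\vec{r}^*)$ to be an equilibrium is that $L_i(z)$ is maximized at $z=1$ for every $i$, and that this is necessary and sufficient when $Q_A$ \emph{exactly} $k(\cdot)$-covers $D(\cdot),c^W$. Combined with the biconditional we just proved, this yields both the sufficient-condition and the necessary-and-sufficient-condition conclusions of the lemma.

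The main ``obstacle'' is essentially bookkeeping: checking that the algebraic rearrangement handles the edge cases $z=1$ and $k(z)=z$ correctly (both of which turn out to give trivially satisfied constraints), and being careful that the undominated-strategy reasoning via Proposition~\ref{prop:bitcoinundominated} implicitly restricts us to $z \in [0,1]$ so that the optimization problem we are solving is the right one. There is no genuine analytic difficulty beyond the identity $(1-x_i^*)^2 - (x_i^*)^2 = 1 - 2x_i^*$ that enables the clean form.
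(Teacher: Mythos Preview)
Your proof is correct and takes essentially the same approach as the paper: both compute $L_i(1)-L_i(z)$ (you do it via the closed form $L_i(1)=\rew(Q_A)\cdot (x_i^*)^2$, the paper subtracts the two expressions directly) and arrive at the same rearranged inequality $k(z)-1 \leq 2(1-x_i^*)\bigl(\sqrt{k(z)/z}-1\bigr)$. You are in fact slightly more careful than the paper in handling the edge cases $k(z)=z$ and $z=1$, which the paper's $\Leftrightarrow$ step elides.
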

\begin{proof}[Proof of Lemma~\ref{lem:bitcoinmain2}]
\begin{align*}
    L_i(1) - L_i(z) &=\rew(Q_A) \cdot \left(k(1) - 2(1-x_i^*)\cdot \sqrt{k(1)/1)}+(1-x_i^*)^2\right)\\
    &\qquad + \rew(Q_A)\cdot \left(- k(z) +2(1-x_i^*)\cdot \sqrt{k(z)/z} - (1-x_i^*)^2 \right)\\
    &= \rew(Q_A) \cdot \left(1-k(z) +2(1-x_i^*) \cdot \left(\sqrt{k(z)/z}-1\right)\right)
\end{align*}

In particular, $L_i(1) - L_i(z) \geq 0$ if and only if:
\begin{align*}
    &1-k(z) +2(1-x_i^*) \cdot \left(\sqrt{k(z)/z}-1\right) \geq 0\\
   \Leftrightarrow & 1-x_i^* \geq \frac{k(z)-1}{2\cdot \left(\sqrt{k(z)/z}-1\right)}
\end{align*}
\end{proof}

\begin{corollary} \label{cor:bitcoin-sufficient-largest-miner}
    Let $D(0) = k \cdot Q_A$. Then as long as $x^*_i \leq 1-\frac{1}{k-1}$, $(\vec{q}^*,\vec{r}^*)$ is an Equilibrium.
\end{corollary}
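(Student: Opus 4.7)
The plan is to combine Lemma~\ref{lem:bitcoinregular} with Lemma~\ref{lem:bitcoinmain2}. Since $D(\cdot)$ is Regular (inherited from the setting of Theorem~\ref{thm:bitcoinsufficient}, whose Bullet One this corollary is meant to establish), applying Lemma~\ref{lem:bitcoinregular} with $Q = Q_A$ and $D(0)/Q_A = k$ gives that $Q_A$ $k^*(\cdot)$-covers $D(\cdot),c^W$ where $k^*(z) := \frac{k-z}{k-1}$. By Lemma~\ref{lem:bitcoinmain2}, it suffices to show that $1 - x_i^* \geq \sup_{z \in [0,1]}\frac{k^*(z)-1}{2(\sqrt{k^*(z)/z}-1)}$ for all $i$; since $c_1^R$ is smallest we have $x_1^* \geq x_i^*$ for all $i$, so the whole thing reduces to showing that this supremum is at most $\frac{1}{k-1}$.

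First I would handle the awkward $0/0$ form at $z=1$ by rationalizing: multiplying numerator and denominator by $\sqrt{k^*(z)/z}+1$ yields $\frac{(k^*(z)-1)(\sqrt{k^*(z)/z}+1)}{2(k^*(z)/z - 1)}$. A direct computation gives $k^*(z) - 1 = \frac{1-z}{k-1}$ and $\frac{k^*(z)}{z} - 1 = \frac{k(1-z)}{z(k-1)}$, whose ratio is $z/k$, so after cancellation the expression becomes $\frac{z(\sqrt{k^*(z)/z}+1)}{2k} = \frac{\sqrt{z(k-z)/(k-1)} + z}{2k}$. This is the clean form to bound.

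Next I would show $\frac{\sqrt{z(k-z)/(k-1)} + z}{2k} \leq \frac{1}{k-1}$, equivalently $\sqrt{z(k-z)(k-1)} + z(k-1) \leq 2k$, for all $z \in [0,1]$. The key trick is to apply AM-GM to the asymmetric pair $a = z(k-1)$ and $b = k-z$: $\sqrt{z(k-1)(k-z)} \leq \tfrac{1}{2}(z(k-1) + (k-z))$. Substituting into the left-hand side gives the bound $\tfrac{1}{2}((3z+1)k - 4z)$, and requiring this to be at most $2k$ reduces to $z(3k-4) \leq 3k$, which plainly holds for every $z \in [0,1]$ and $k \geq 1$ (either $3k-4 \leq 0$ in which case the left side is non-positive, or $3k-4 > 0$ and $z(3k-4) \leq 3k-4 \leq 3k$). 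Combining Lemma~\ref{lem:bitcoinmain2} with this bound finishes the proof.

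The main obstacle is the algebraic simplification of the supremum — without the rationalization, one is stuck with a $0/0$ limit at $z=1$ and no clean $k$-dependence. The second subtlety is choosing the right AM-GM split: the natural symmetric split $\sqrt{z(k-z)}\cdot \sqrt{k-1}$ followed by AM-GM on $z,k-z$ produces a bound that fails for large $k$, whereas pairing $z(k-1)$ with $(k-z)$ produces exactly the telescoping needed to match $2k$ on the right. The remaining work is routine plug-and-chug.
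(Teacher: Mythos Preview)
Your proof is correct and follows the same high-level plan as the paper: invoke Lemma~\ref{lem:bitcoinregular} to get the coverage function $k^*(z)=\frac{k-z}{k-1}$, then appeal to Lemma~\ref{lem:bitcoinmain2} and bound $\sup_{z\in[0,1]}\frac{k^*(z)-1}{2(\sqrt{k^*(z)/z}-1)}$ by $\frac{1}{k-1}$. The only difference is how that last bound is obtained. The paper first weakens the denominator by replacing $k-z$ with $k-1$ inside the square root (valid since $z\le 1$), which collapses the expression to $\frac{\sqrt{z}+z}{2(k-1)}\le \frac{1}{k-1}$ in two lines. You instead rationalize exactly to $\frac{\sqrt{z(k-z)/(k-1)}+z}{2k}$ and then apply AM--GM to the pair $z(k-1)$ and $k-z$. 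Both routes are sound; the paper's is shorter and avoids the need to search for the right AM--GM split, while yours has the minor advantage of keeping the expression exact (and hence sharper, e.g.\ at $z=1$ your form gives $1/k$) up until the final inequality.
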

\begin{proof}
    We simply plug into Lemma~\ref{lem:bitcoinregular} and Lemma~\ref{lem:bitcoinmain2}. Lemma~\ref{lem:bitcoinregular} asserts that $Q_A$ $\frac{k-z}{k-1}$-covers $D(\cdot), c^W_i$. Therefore, Lemma~\ref{lem:bitcoinmain2} concludes the desired equilibrium so long as $1-x_i^* \geq \frac{\frac{k-z}{k-1}-1}{2\cdot \left(\sqrt{\frac{k-z}{z\cdot (k-1)}}-1\right)}$ for all $z$. We observe that:
    \begin{align*}
        \frac{\frac{k-z}{k-1}-1}{2\cdot \left(\sqrt{\frac{k-z}{z\cdot (k-1)}}-1\right)} & \leq \frac{\frac{k-z}{k-1}-1}{2\cdot \left(\sqrt{\frac{k-1}{z\cdot (k-1)}}-1\right)}\\
        &=\frac{1}{2(k-1)}\cdot \frac{1-z}{1/\sqrt{z}-1}\\
        &=\frac{1}{2(k-1)}\cdot \frac{(1-\sqrt{z})\cdot (1+\sqrt{z})}{(1-\sqrt{z})/\sqrt{z}}\\
        &=\frac{\sqrt{z}+z}{2(k-1)}\\
        &\leq \frac{1}{k-1}
    \end{align*}
    Above, the first inequality follows as $z \in [0,1]$. The three equalities are basic algebra. The final inequality follows again as $z \in [0,1]$. 
\end{proof}

\begin{proof}[Proof of Theorem~\ref{thm:bitcoinsufficient}]
The proof follows immediately from the definition of exactly $k(\cdot)$-covers and Lemma~\ref{lem:bitcoinmain2}, and Corollary~\ref{cor:bitcoin-sufficient-largest-miner}
\end{proof}

\subsection{Omitted Proofs from Section~\ref{sec:bitcoinblock}}\label{app:bitcoinblock}

\begin{proof}[Proof of Proposition~\ref{prop:blockrewards}]
    Assume for contradiction that $( (1+\frac{B'-B}{Q_A \cdot (r-c^W+B)})\cdot \vec{q}^*, \vec{r}^*)$ is not an equilibrium. Then there exists some player $i$ and a deviation $((1+\frac{B'-B}{r-c^W+B})\cdot q_i, r_i)$ that achieves strictly better payoff against $( (1+\frac{B'-B}{r-c^W+B})\cdot \vec{q}_{-i}^*, \vec{r}_{-i}^*)$ than $((1+\frac{B'-B}{r-c^W+B})\cdot q^*_i, r^*_i)$. 

    The payoff of Player $i$ on strategy profile $(\vec{q},\vec{r})$ has three components:
    \begin{itemize}
        \item The revenue of Player $i$ in the simultaneous first-price auction, $R_i(\vec{q},\vec{r})$.
        \item The block reward earned by Player $i$, $B \cdot Q_A \cdot \frac{q_i}{\sum_j q_j}$.
        \item The cost paid by Player $i$, $c_i^R \cdot q_i$.
    \end{itemize}

    Let us make the following observations about similarities of outcomes between various strategy profiles.
    \begin{itemize}
        \item $R_i(\vec{q},\vec{r}) = R_i(x\cdot \vec{q},\vec{r})$ for any $x > 0$. To see this, observe that each Miner wins the same quantity of \append{}s under $\vec{q}$ as $c\cdot \vec{q}$, and therefore setting the same reserves $\vec{r}$ results in the same outcome under simultaneous first-price auctions.
        \item $B\cdot Q_A \cdot \frac{q_i}{\sum_j q_j} = B \cdot Q_A \cdot \frac{x\cdot q_i}{\sum_j x\cdot q_j}$. That is, scaling up all investments by $x > 0$ does not impact the block rewards won.
    \end{itemize}

    We therefore get the following chain of inequalities -- we use a superscript of $B$ on the Payoff to denote that the block reward is $B$.
\begin{align*}
    &P^{B'}_i\left(\left(\left(1+\frac{B'-B}{r-c^W+B}\right)\cdot q_i, r_i\right); \left(1+\frac{B'-B}{r-c^W+B}\right)\cdot \vec{q}_{-i}^*, \vec{r}_{-i}^*\right) - P^{B'}_i\left(\left(1+\frac{B'-B}{r-c^W+B}\right)\cdot \vec{q}^*, \vec{r}^*\right) \\
    &\qquad = R_i\left(\left(\left(1+\frac{B'-B}{r-c^W+B}\right)\cdot q_i, r_i\right); \left(1+\frac{B'-B}{r-c^W+B}\right)\cdot \vec{q}_{-i}^*, \vec{r}_{-i}^*\right) - R_i\left(\left(1+\frac{B'-B}{r-c^W+B}\right)\cdot \vec{q}^*, \vec{r}^*\right)\\
    &\qquad \qquad+ B'\cdot Q_A \cdot \left(\frac{\left(1+\frac{B'-B}{r-c^W+B}\right)\cdot q_i}{\left(1+\frac{B'-B}{r-c^W+B}\right)\cdot \left(q_i + \sum_{j \neq i}q^*_j\right)} - \frac{\left(1+\frac{B'-B}{r-c^W+B}\right)\cdot q_i^*}{\left(1+\frac{B'-B}{r-c^W+B}\right)\cdot \sum_{j }q^*_j}\right)\\
    &\qquad\qquad \qquad - \left(1+\frac{B'-B}{r-c^W+B}\right) \cdot c_i^R \cdot \left(q_i - q^*_i\right)\\
       &\qquad = R_i\left(\left( q_i, r_i\right); \vec{q}_{-i}^*, \vec{r}_{-i}^*\right) - R_i\left( \vec{q}^*, \vec{r}^*\right)\\
    &\qquad \qquad+ B'\cdot Q_A \cdot \left(\frac{q_i}{\left(q_i + \sum_{j \neq i}q^*_j\right)} - \frac{q_i^*}{\sum_{j }q^*_j}\right)\\
    &\qquad\qquad \qquad - \left(1+\frac{B'-B}{r-c^W+B}\right) \cdot c_i^R \cdot \left(q_i - q^*_i\right)\\
    &\qquad = P_i^B\left(\left(q_i,r_i\right);\vec{q}_{-i}^*,\vec{r}_{-i}^*\right) - P_i^B\left(\vec{q}^*,\vec{r}^*\right)\\
    &\qquad\qquad+ (B'-B) \cdot Q_A \cdot  \left(\left(\frac{q_i}{\left(q_i + \sum_{j \neq i}q^*_j\right)} - \frac{q^*_i}{\sum_{j }q^*_j}\right) - \frac{c_i^R \cdot (q_i-q_i^*)}{Q_A \cdot (r-c^W+B)}\right)\\
    &\qquad = P_i^B\left(\left(q_i,r_i\right);\vec{q}_{-i}^*,\vec{r}_{-i}^*\right) - P_i^B\left(\vec{q}^*,\vec{r}^*\right)\\
    &\qquad \qquad + \frac{B'-B}{r-c^W+B} \cdot \left(Q_A \cdot (r-c^W+B) \cdot \left(\frac{q_i}{\left(q_i + \sum_{j \neq i}q^*_j\right)} - \frac{q_i^*}{\sum_{j }q^*_j}\right) - c_i^R \cdot (q_i-q_i^*)\right)\\ 
    &\qquad \leq 0 + \frac{B'-B}{r-c^W+B} \cdot \left(Q_A \cdot (r-c^W+B) \cdot \left(\frac{q_i}{\left(q_i + \sum_{j \neq i}q^*_j\right)} - \frac{q_i^*}{\sum_{j }q^*_j}\right) - c_i^R \cdot (q_i-q_i^*)\right)\\
    &\qquad \leq 0
\end{align*}
Above, the first equality simply expands the definition of $P_i^{B'}$. The second equality observes that $R_i(\vec{q},\vec{r}) = R_i(x\cdot \vec{q},\vec{r})$ for all $x > 0$ (as the quantities won by all miners are the same, and so are the reserves set), and makes algebraic simplifications. The third equality observes that $P_i^B\left(\left(q_i,r_i\right);\vec{q}_{-i}^*,\vec{r}_{-i}^*\right) - P_i^B\left(\vec{q}^*,\vec{r}^*\right) = R_i\left(\left( q_i, r_i\right); \vec{q}_{-i}^*, \vec{r}_{-i}^*\right) - R_i\left( \vec{q}^*, \vec{r}^*\right) + B \cdot \left(\frac{q_i}{\left(q_i + \sum_{j \neq i}q^*_j\right)} - \frac{q_i^*}{\sum_{j }q^*_j}\right)-  c_i^R \cdot \left(q_i - q^*_i\right)$. The fourth equality is basic algebra. The penultimate inequality invokes the assumption that $(\vec{q}^*,\vec{r}^*)$ is an equilibrium with no block reward.

The final inequality holds for the following reason. Observe that $Q_A \cdot (r-c^W+B) \cdot \frac{q_i}{q_i + \sum_{j \neq i} q_j^*} - c_i^R \cdot q_i$ is exactly the payoff of Miner $i$ in a Tullock Contest with total reward $Q_A \cdot (r-c^W+B)$ when using strategy $q_i$ against $\vec{q}_{-i}^*$. Similarly, $Q_A \cdot (r-c^W+B) \cdot \frac{q_i^*}{\sum_j q_j^*} - c_i^R \cdot q_i^*$ is exactly the payoff of Miner $i$ in a Tullock Contest with total reward $Q_A \cdot (r-c^W+B)$ when using strategy $q^*_i$ against $\vec{q}_{-i}^*$. Therefore, if $\vec{q}^*$ is an equilibrium for the Tullock Contest with total reward $Q_A \cdot (r-c^W+B)$ and Resource costs $\vec{c}^R$, it must hold that $Q_A \cdot (r-c^W+B)\cdot \left(\frac{q_i}{\left(q_i + \sum_{j \neq i}q^*_j\right)} - \frac{q_i}{\sum_{j }q^*_j}\right) - c_i^R \cdot (q_i-q_i^*) \leq 0$. Indeed, by Theorem~\ref{thm:bitcoinpure}, any equilibrium of the Distributed Ledger Model with clearing price $r$ must have $\vec{q}^*$ as an equilibrium of the Tullock Contest with total reward $Q_A \cdot (r-c^W+B)$.    
\end{proof}

Finally, we provide a proof of Proposition~\ref{prop:blockrewards2}. Intuitively, the proof of Proposition~\ref{prop:blockrewards2} follows the following outline (below for ease of notation, let $r:= D^{-1}_{\sup}(Q_A)$)
\begin{itemize}
    \item Starting from the proposed market-clearing equilibrium, the payments are according to a Tullock contest with total prize $B+Q_A\cdot (r-c^W)$.
    \item A deviation in investment $q_i$, while still maintaining a clearing price of $r$, induces some loss for the deviating player, equivalent to that of deviating in a Tullock contest with total prize $B+Q_A\cdot (r-c^W)$, which is lower-bounded in Lemma~\ref{lem:TullockDeviation}.
    \item From here, a deviation in reserve \emph{might} increase profit, because perhaps the deviating player increased their quantity of \append{}s from equilibrium. But, the necessary loss in the Tullock contest to enable such a deviation to \emph{possibly} be profitable outweighs the possible profits.
\end{itemize}

The proof first requires a technical lemma which lower bounds the loss in a Tullock contest by deviating from equilibrium.

\begin{lemma}\label{lem:TullockDeviation} Consider a Tullock contest with total reward $Y$ and cost profile $\vec{c}^R$. Then fixing equilibrium investments $\vec{q}_{-i}$, Player $i$ investing an additional $z \cdot Y/c^*(\vec{c}^R)$ from their equilibrium investment $q_i$ decreases payoff by at least $\frac{z^2}{1+z} \cdot Y \cdot \min\{1,c_i^R/c^*(\vec{c}^R)\}$.
\end{lemma}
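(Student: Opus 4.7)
The plan is to compute the payoff gap in closed form by substituting the closed-form Tullock equilibrium recalled at the start of Section~\ref{sec:bitcoinsufficient}. Abbreviating $c^* := c^*(\vec{c}^R)$, the equilibrium investments satisfy $\sum_j q_j^* = Y/c^*$ and $q_i^*/\sum_j q_j^* = \max\{0,\, 1 - c_i^R/c^*\}$. Setting $S := \sum_{j\neq i} q_j^*$, this gives $q_i^* + S = Y/c^*$ and $S = (Y/c^*)\min\{1, c_i^R/c^*\}$. Player $i$'s payoff on investment $q$ against $\vec{q}_{-i}^*$ is $P_i(q) = Yq/(q+S) - c_i^R q$.

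Writing $\delta := zY/c^*$, combining the two fractions over a common denominator gives the identity
\begin{equation*}
P_i(q_i^*) - P_i(q_i^*+\delta) \;=\; c_i^R\,\delta \;-\; \frac{Y\,\delta\, S}{(q_i^*+S)(q_i^*+\delta+S)}.
\end{equation*}
Substituting $q_i^*+S = Y/c^*$, $q_i^*+\delta+S = (1+z)Y/c^*$, and the formula for $S$, this collapses to
\begin{equation*}
P_i(q_i^*) - P_i(q_i^*+\delta) \;=\; zY\left(\frac{c_i^R}{c^*} \;-\; \frac{\min\{1,\,c_i^R/c^*\}}{1+z}\right).
\end{equation*}

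To finish, I would split into two cases on the sign of $c_i^R - c^*$. If $c_i^R \leq c^*$, then $\min\{1, c_i^R/c^*\} = c_i^R/c^*$, and the above expression factors exactly as $\tfrac{z^2}{1+z}\cdot Y \cdot \tfrac{c_i^R}{c^*}$, matching the claimed bound with equality. If $c_i^R > c^*$, then $q_i^* = 0$ and $\min\{1, c_i^R/c^*\} = 1$; the expression becomes $zY(c_i^R/c^* - 1/(1+z))$, and the desired bound $\tfrac{z^2}{1+z}\cdot Y$ rearranges to the (trivially true) $c_i^R/c^* \geq 1$.

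I do not anticipate any real obstacle: the entire argument is two lines of algebra once the equilibrium values are plugged in. The only thing worth flagging is that the tightness of the bound in the first case (equality) explains why the $\min\{1, c_i^R/c^*\}$ factor in the lemma's statement cannot be replaced by something simpler such as $1$ or $c_i^R/c^*$ alone.
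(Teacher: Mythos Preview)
Your proof is correct and follows essentially the same approach as the paper's: both compute the payoff gap in closed form using the Tullock equilibrium values $\sum_j q_j^* = Y/c^*$ and $x_i^* = \max\{0,1-c_i^R/c^*\}$, then split into the two cases $c_i^R \le c^*$ (equality) and $c_i^R > c^*$ (strict inequality reducing to $c_i^R/c^* \ge 1$). The only cosmetic difference is that you parametrize via $S = \sum_{j\neq i} q_j^*$ and derive a single identity before the case split, whereas the paper works in market-share coordinates and treats the two cases separately from the start.
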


\begin{proof}
    First, assume that $x_i^*(\vec{c}^R) > 0$. Then, by investing $q_i + z \cdot Y / c^*(\vec{c}^R)$ instead of $q_i = x_i^*(\vec{c}^R)$, Miner $i$'s change in payoff is:

    \begin{align*}
        Y \cdot \frac{x_i^*(\vec{c}^R)+z}{1+z} - Y \cdot x_i^*(\vec{c}^R) - z \cdot Y\cdot c_i^R /c^*(\vec{c}^R) &= Y \cdot \left(\frac{z-z\cdot x_i^*(\vec{c}^R)}{1+z} - z\cdot \frac{c_i^R}{c^*(\vec{c}^R)}\right)\\
        &= Y \cdot \left(z\cdot \left(\frac{(1-x_i^*(\vec{c}^R))}{1+z} - \frac{c_i^R}{c^*(\vec{c}^R)}\right)\right)\\
        &= Y \cdot z \cdot \frac{c_i^R}{c^*(\vec{c}^R)} \cdot \left(\frac{1}{1+z}-1\right)\\
        &= -Y \cdot z^2 \cdot \frac{c_i^R}{c^*(\vec{c}^R)} / (1+z).
    \end{align*}

     Above, the LHS of the first line simply states the increase in Miner $i$'s prize, and subtracts the additional cost. All equalities are basic algebra.

     If instead $x_i^*(\vec{c}^R) = 0$, then Miner $i$'s change in payoff is instead:
\begin{align*}
        Y \cdot \frac{z}{1+z}  - z \cdot Y\cdot c_i^R /c^*(\vec{c}^R)
        &= Y \cdot z \cdot \left(\frac{1}{1+z}-\frac{c_i^R}{c^*(\vec{c}^R)}\right)\\
        &\leq Y \cdot z \cdot \left(\frac{1}{1+z}-1\right)\\
        &= -Y \cdot z^2 /(1+z)\\
    \end{align*}
\end{proof}

\begin{corollary}\label{cor:TullockIncrease} Consider a Tullock contest with total reward $Y$ and cost profile $\vec{c}^R$. Then fixing equilibrium investments $\vec{q}_{-i}$, Player $i$ increasing their market share from the equilibrium market share $x_i^*(\vec{c}^R)$ to $x_i^*(\vec{c}^R)+w$ decreases payoff by at least $w^2\cdot Y/2$.
\end{corollary}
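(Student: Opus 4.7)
The plan is to reduce Corollary~\ref{cor:TullockIncrease} directly to Lemma~\ref{lem:TullockDeviation} by translating the target change in market share $w$ into the corresponding additional investment. From Theorem~\ref{thm:bitcoinpure} (applied to a Tullock contest with reward $Y$), the total equilibrium investment equals $Y/c^*(\vec{c}^R)$, so Player $i$'s equilibrium investment equals $x_i^*(\vec{c}^R)\cdot Y/c^*(\vec{c}^R)$. If Player $i$ invests an additional $z\cdot Y/c^*(\vec{c}^R)$ while $\vec{q}_{-i}$ remains fixed, the new total is $(1+z)\cdot Y/c^*(\vec{c}^R)$, and Player $i$'s new market share is $(x_i^*(\vec{c}^R)+z)/(1+z)$. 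Setting this equal to $x_i^*(\vec{c}^R)+w$ and solving yields $z = w/(1-x_i^*(\vec{c}^R)-w)$ when $x_i^*(\vec{c}^R) > 0$, and $z = w/(1-w)$ in the corner case $x_i^*(\vec{c}^R) = 0$.

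Next I would substitute this expression for $z$ into the lower bound from Lemma~\ref{lem:TullockDeviation}, which guarantees a payoff decrease of at least $\frac{z^2}{1+z}\cdot Y\cdot \min\{1, c_i^R/c^*(\vec{c}^R)\}$. When $x_i^*(\vec{c}^R) > 0$, the definition of $x_i^*$ gives $c_i^R/c^*(\vec{c}^R) = 1-x_i^*(\vec{c}^R) \leq 1$, so the $\min$ equals $1-x_i^*(\vec{c}^R)$. A short algebraic simplification then shows
\[
\frac{z^2}{1+z}\cdot(1-x_i^*(\vec{c}^R)) \;=\; \frac{w^2}{1-x_i^*(\vec{c}^R)-w} \;\geq\; w^2,
\]
where the inequality holds because $x_i^*(\vec{c}^R)+w \leq 1$ (a market share cannot exceed $1$) implies $1-x_i^*(\vec{c}^R)-w \in [0,1]$. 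The $x_i^*(\vec{c}^R)=0$ case is handled in parallel: $\min\{1,c_i^R/c^*(\vec{c}^R)\}=1$ and $\frac{z^2}{1+z} = w^2/(1-w) \geq w^2$. In either case the payoff loss is at least $w^2 Y$, which is stronger than the stated $w^2Y/2$.

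The steps I would carry out are thus: (i) write the investment deviation $z$ corresponding to market share deviation $w$; (ii) apply Lemma~\ref{lem:TullockDeviation}; (iii) simplify using the identity $c_i^R/c^*(\vec{c}^R) = 1-x_i^*(\vec{c}^R)$ (when positive) to cancel the $(1-x_i^*(\vec{c}^R))$ factor. I do not anticipate any serious obstacle — the corollary is essentially a reparameterization of the lemma, and the slack between the actual bound $w^2Y$ and the stated $w^2Y/2$ means no delicate estimation is needed. The only mild bookkeeping is keeping the $x_i^*(\vec{c}^R)=0$ case separate, but Lemma~\ref{lem:TullockDeviation} already handles that case internally, so the two sub-arguments run in parallel.
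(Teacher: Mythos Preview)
Your proposal is correct and follows the same reduction to Lemma~\ref{lem:TullockDeviation} as the paper: convert the market-share increment $w$ into the required additional investment $z$, then invoke the lemma. Your execution is in fact cleaner than the paper's: you plug in the \emph{exact} $z = w/(1-x_i^*-w)$ and simplify directly, obtaining the sharper bound $w^2 Y$, whereas the paper first lower-bounds the required $z$ by $w/\min\{1,c_i^R/c^*\}$ and then splits into cases $z\le 1$ versus $z\ge 1$ (using $z^2/(1+z)\ge z^2/2$ and $z^2/(1+z)\ge z/2$ respectively), which is where the factor $1/2$ enters. One minor remark: the equilibrium total investment $Y/c^*(\vec{c}^R)$ is the standard Tullock-contest characterization (as in~\cite{ArnostiW22}) rather than a consequence of Theorem~\ref{thm:bitcoinpure} per se, but this does not affect your argument.
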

\begin{proof}
The total equilibrium investment is $Y/c^*(\vec{c}^R)$, which means that in order to increase $x_i^*(\vec{c}^R)$ by $w$, Player $i$ must invest at least an additional $a\cdot Y/c^*(\vec{c}^R)$, where $\frac{x_i^*(\vec{c}^R)+a}{1+a} \geq x_i^*(\vec{c}^R)+w$. Solving for $a$, we conclude $a \geq \frac{w}{1-x_i^*(\vec{c}^R)-w}$.

Recall that $x_i^*(\vec{c}^R) = \max\{0,1-c_i^R/c^*(\vec{c}^R)\}$. Next, observe that if $x_i^*(\vec{c}^R) = 0$, then $\frac{w}{1-x_i^*(\vec{c}^R)-w} = \frac{w}{1-w} \geq w/\min\{1,c_i^R/c^*(\vec{c}^R)\}$.

Moreover, observe that if $x_i^*(\vec{c}^R) > 0$, then $\frac{w}{1-x_i^*(\vec{c}^R)-w} = \frac{w}{c_i^R/c^*(\vec{c}^R)-w} \geq w/\min\{1,c_i^R/c^*(\vec{c}^R)\}$.

Now that we have that Player $i$ must invest at least $\frac{w \cdot Y}{c^*(\vec{c}^R)\cdot \min\{1,c_i^R/c^*(\vec{c}^R)\}}$ in order to increase to $x_i^*(\vec{c}^R)+w$, we can simply plug into Lemma~\ref{lem:TullockDeviation} with $z = \frac{w}{\min\{1,c^R_i/c^*(\vec{c}^R)\}}$, and observe that $z^2/(1+z) \geq z^2/2$ whenever $z \leq 1$, and $z^2/(1+z) \geq z/2$ whenever $z \geq 1$.

So, if $w \leq \min\{1,c^R_i/c^*(\vec{c}^R)\}$, we can now conclude a payoff loss of at least $\frac{w^2}{2(\min\{1,c_i^R/c^*(\vec{c}^R)\})^2} Y \cdot \min\{1,c_i^R/c^*(\vec{c}^R)\} \geq z^2 \cdot Y/2$ (the final inequality follows as $\min\{1,c^R_i/c^*(\vec{c}^R)\} \leq 1$. Similarly, if $w \geq \min\{1,c^R_i/c^*(\vec{c}^R)\}$, we can now conclude a payoff loss of at least $\frac{w}{\min\{1,c^R_i/c^*(\vec{c}^R)\}}\cdot Y \cdot \min\{1,c^R_i/c^*(\vec{c}^R)\}/2 = w\cdot Y /2 \geq w^2 \cdot Y/2$ (the final inequality follows as $w \leq 1$ is the amount by which Miner $i$ aims to increase their market share).
\end{proof}

We need one last technical lemma, and then can complete the proof.

\begin{lemma}\label{lem:norevenue}
Let $\vec{q}$ be an investment profile such that Miner $i$'s total \append{}s $Q_i$ satisfies $Q_i \leq \frac{(x - c^W) (Q_A-D(x))}{x - D^{-1}_{\sup}(Q_A)}$ for all $x > D^{-1}_{\sup}(Q_A)$. Then one best response of Miner $i$ in the price-setting game when all other miners set reserve $D^{-1}_{\sup}(Q_A)$ is to set reserve $D^{-1}_{\sup}(Q_A)$ as well.
\end{lemma}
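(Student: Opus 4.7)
The plan is to compute Miner $i$'s baseline payoff when they set reserve $r^* := D^{-1}_{\sup}(Q_A)$ against all other miners also setting $r^*$, and then to rule out every profitable deviation using the hypothesis. Under this profile, the aggregate supply curve satisfies $Q^{\leq}(b) = 0$ for $b < r^*$ and $Q^{\leq}(b) = Q_A$ for $b \geq r^*$. Combining Proposition~\ref{prop:firstpriceeq} with the defining inequality $D^{>}(r^*) \leq Q_A$ of $D^{-1}_{\sup}$, the canonical clearing price is $r^*$ and a mass $Q_A$ of items clears, so Miner $i$ earns baseline payoff $Q_i(r^* - c^W)$.

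For downward deviations $r_i < r^*$, I will observe that for every $b \in [r_i, r^*)$ the definition of $D^{-1}_{\sup}$ yields $D^{>}(b) \geq Q_A \geq Q_i = Q^{\leq}(b)$. Hence the infimum $p_{\min}$ from Proposition~\ref{prop:firstpriceeq} cannot fall below $r^*$, so the clearing price and Miner $i$'s allocation are unchanged, and the deviation is not strictly profitable (in the degenerate monopoly case $Q_i = Q_A$ the clearing price can only weakly decrease, which still weakly decreases Miner $i$'s payoff).

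For upward deviations $r_i = x > r^*$, the remaining miners saturate at $r^*$ with total supply $Q_A - Q_i$. When $D(x) \leq Q_A - Q_i$ the clearing price lies in $[r^*, x)$, Miner $i$ sells nothing, and earns $0$; when $D(x) > Q_A - Q_i$ the clearing price equals $x$ and Miner $i$ sells $D(x) + Q_i - Q_A$ for payoff $(x - c^W)(D(x) + Q_i - Q_A)$. Requiring this deviation payoff not to exceed the baseline $Q_i(r^* - c^W)$ rearranges, using $x > r^*$, to exactly
\[Q_i \leq \frac{(x - c^W)(Q_A - D(x))}{x - r^*} \quad \text{for every } x > r^*,\]
which is the hypothesis of the lemma.

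The main delicate step is pinning down the clearing price of the induced simultaneous first-price auction under an arbitrary (not necessarily continuous) $D(\cdot)$, especially around jump discontinuities where several allocations could in principle be sustained. Because the price-setting game explicitly selects the canonical end-user equilibrium (Definition~\ref{def:price-setting-game}), Proposition~\ref{prop:firstpriceeq} delivers the uniform characterization needed to secure both the baseline computation and the upward-deviation upper bound, so the rearrangement above applies verbatim.
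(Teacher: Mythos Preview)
Your proposal is correct and follows essentially the same three-step structure as the paper's proof: compute the baseline payoff $Q_i(r^* - c^W)$ at $r^* = D^{-1}_{\sup}(Q_A)$, show downward deviations cannot raise the clearing price, and show upward deviations to $x > r^*$ yield at most $(x - c^W)(D(x) + Q_i - Q_A)$, which the hypothesis rearranges to bound by the baseline. Your treatment is in fact slightly more careful than the paper's in two places---you explicitly separate the subcase $D(x) \leq Q_A - Q_i$ (where Miner $i$ sells nothing) and you flag the degenerate $Q_i = Q_A$ case for downward deviations---but these refinements do not change the argument's substance.
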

\begin{proof}
If Miner $i$ sets price $D^{-1}_{\sup}(Q_A)$, then the clearing price will be $D^{-1}_{\sup}(Q_A)$, and Miner $i$'s total profit from the price-setting game will be $Q_i \cdot (D^{-1}_{\sup}-c^W)$.

If Miner $i$ instead sets some price $x< D^{-1}_{\sup}(Q_A)$, then the clearing price will be no more than $D^{-1}_{\sup}(Q_A)$, and their profit will therefore be no more than $Q_i \cdot (D^{-1}_{\sup}(Q_A)-c^W)$, and this can never be a strictly better response than $D^{-1}_{\sup}(Q_A)$.

Finally, if Miner $i$ instead sets some price $x > D^{-1}_{\sup}(Q_A)$, then Miner $i$ will definitely be a price-setter, so the clearing price will be $x$. Because Miner $i$ is the price-setter, all $Q_A - Q_i$ items from other miners will definitely clear, and Miner $i$ will sell exactly $D(x) - (Q_A-Q_i)$ of their items. Therefore, Miner $i$'s profit will be $(x-c^W)\cdot (D(x) - Q_A + Q_i)$.

In order to see whether $D^{-1}_{\sup}(Q_A)$ is at least as good a response as $x > D^{-1}_{\sup}(Q_A)$, we can write:
\begin{align*}
    &\qquad \qquad Q_i \cdot (D^{-1}_{\sup}(Q_A)-c^W) \geq (x-c^W)\cdot (D(x) - Q_A + Q_i)\\
    &\Leftrightarrow \qquad (x-c^W)\cdot (Q_A-D(x)) \geq Q_i \cdot (x-D^{-1}_{\sup}(Q_A))\\
    &\Leftrightarrow \qquad \frac{(x-c^W)\cdot (Q_A-D(x))}{x-D^{-1}_{\sup}(Q_A)} \geq Q_i.
\end{align*}

The final line holds by hypothesis, and therefore $D^{-1}_{\sup}(Q_A)$ is a best response.

\end{proof}

\begin{proof}[Proof of Proposition~\ref{prop:blockrewards2}]

Consider now the candidate equilibrium proposed by Theorem~\ref{thm:bitcoinpure}:
\begin{itemize}
    \item $\sum_j q_j = Q_A \cdot (D^{-1}_{\sup}(Q_A)+B-c^W)/c^*(\vec{c}^R)$.
    \item For all Miners $i$, $\frac{q_i}{\sum_j q_j} = x_i^*(\vec{c}^R)$.
    \item $r_i = D^{-1}_{\sup}(Q_A)$ for all $i$. 
\end{itemize}

Consider a Miner $i$ deviating from this strategy profile to some $q'_i, r'_i$., and let $\varepsilon:= Q_A \cdot x_i^*(\vec{c}^R) - \inf_{x > D^{-1}_{\sup}(Q_A)}\{\frac{(x-c^W) \cdot (Q_A - D(x))}{x-D^{-1}_{\sup}(Q_A)}\} > 0$. Recall that the change in reward comes in two parts:
\begin{itemize}
    \item First, Miner $i$ changes $q_i$ to $q'_i$, but keeps $r_i = D^{-1}_{\sup}(Q_A)$. This maintains a clearing price of $D^{-1}_{\sup}(Q_A)$, and therefore the rewards change \emph{exactly as a Tullock contest with total reward $D^{-1}_{\sup}(Q_A)+B-c^W$.} If this deviation causes Miner $i$'s fraction of \append{}s to increase to $x_i^*(\vec{c}^R)+w$, then Corollary~\ref{cor:TullockIncrease} lower bounds this loss as at least $(D^{-1}_{\sup}(Q_A)+B-c^W)\cdot w^2/2$. 
    \item Next, Miner $i$ changes $r_i$ to $r'_i$. If $w \leq \varepsilon$, then $x_i^*(\vec{c}^R) + w \leq \inf_{x > D^{-1}_{\sup}(Q_A)}\{\frac{(x-c^W) \cdot (Q_A - D(x))}{x-D^{-1}_{\sup}(Q_A)}\}$, and Lemma~\ref{lem:norevenue} guarantees that this change cannot possibly improve Miner $i$'s payoff. Therefore, if $w \leq \varepsilon$, Miner $i$ cannot be strictly better responding (because they lose payoff when considering the Tullock contest, and do not gain when considering the price-setting game).
    \item So, the only possible joint better responses are for Miner $i$ to increase their investment in the Upstream game to increase the resulting $Q_i$ by at least $\varepsilon$. Corollary~\ref{cor:TullockIncrease} guarantees that Miner $i$ loses at least $\varepsilon^2\cdot (D^{-1}_{\sup}(Q_A) + B - c^W)/2 \geq B \cdot \varepsilon^2 /2$ by doing so.
    \item At the same time, even when increasing $Q_i$ all the way to $Q_A$, the best that can possibly result from the price-setting game is that Miner $i$ earns the full revenue of a monopolist, which is some finite number $X:=\sup_{x \leq Q_A} \{x \cdot (D^{-1}_{\sup}(x)-c^W)\}$.\footnote{$X$ is finite by hypothesis, stated in Section~\ref{sec:prelim}.}
    \item Therefore, as long as $B\cdot \varepsilon^2/2 \geq X$, the payoff loss from increasing investment in the Tullock contest outweighs any possible gain in the price-setting game, and therefore a block reward of $B \geq 2X/\varepsilon^2$ suffices to guarantee that the candidate equilibrium is in fact an equilibrium.
\end{itemize}

\end{proof}

Finally, we show that Proposition~\ref{prop:blockrewards2} is tight in the sense that the strict inequality cannot be relaxed to a weak inequality.

\subsubsection{Example: Proposition~\ref{prop:blockrewards2} is Tight}\label{sec:bitcointight}

Consider a slight modification of the example from Section~\ref{sec:bitcoinexample}. Let $\delta \in (0, 1)$. $D(\cdot)$ has $D(x) = 1+\delta-x$ for all $x \in [\delta,1+\delta]$, $Q_A = 1$, $c^W = 0$, and $c_i^R = 1$ for exactly $n = 1/\delta$ miners (and no other miners exist).

Then, $\inf_{x > D^{-1}_{\sup}(Q_A)}\{\frac{(x - c^W) (Q_A-D(x))}{x - D^{-1}_{\sup}(Q_A)}\} = \inf_{x > \delta}\{\frac{x \cdot (x-\delta)}{x -\delta}\} = \delta$. Moreover, because $c_i^R = 1$ for all $i$, we have $x_i^*(\vec{c}^R) = \delta$ for all $i$. So we have $x_i^*(\vec{c}^R) = \inf_{x > D^{-1}_{\sup}(Q_A)}\{\frac{(x - c^W) (Q_A-D(x))}{x - D^{-1}_{\sup}(Q_A)}\}$. 

So for any fixed $B$, the only possible market-clearing equilibrium is:
\begin{itemize}
    \item $c^*(\vec{c}^R) = 1/(1-\delta)$.
    \item $\sum_{i=1}^{1/\delta} q_i^R = (B+\delta)\cdot (1-\delta)$
    \item $q^R_i = (B+\delta)\cdot (1-\delta)\cdot \delta$ for all $i$. 
    \item $r_i \leq \delta$ for all $i$. 
\end{itemize}

We show that this is not in fact an equilibrium, for any $B$, by considering tiny deviations that slightly increase investment and then price-set in the price-setting game.

Indeed, let us first compute, as a function of $\varepsilon > 0$, the optimal strategy in the price-setting game for a Miner who has $Q_i = \delta + \varepsilon$ (when all other miners set $r_i \leq \delta$). By setting price $x \in (\delta,1+\delta)$, the miner's payoff would be:
\begin{align*}
    x \cdot (1+\delta-x - (1-\delta-\varepsilon)) &= x \cdot (2\delta +\varepsilon -x)\\
    &=(2\delta+\varepsilon)x - x^2
\end{align*}

This is maximized at $x = \delta + \varepsilon/2$, for a total payoff of $(\delta+\varepsilon/2)^2 = \delta^2 + \varepsilon \delta + \varepsilon^2/4$. This means that if Miner $i$ increases their investment to result in $Q_i = \delta+\varepsilon$, they can improve their revenue in the price-setting game (above what they earn in the price-setting game in equilibrium, $\delta^2$) by $\varepsilon \delta + \varepsilon^2/4$. 

Now, let's see how much it would cost a miner to increase their quantity of \append{}s by $\varepsilon$. The total investment of all other miners is $(B+\delta)\cdot (1-\delta) \cdot \delta (1/\delta -1) = (B+\delta)\cdot (1-\delta)^2$. Therefore, to achieve $Q_i = \delta+\varepsilon$, Miner $i$ would need to invest $q$ such that $\frac{q}{q+(B + \delta)\cdot (1-\delta)^2} = \delta+\varepsilon$. This solves to:
\begin{align*}
    &\qquad \qquad \frac{q}{q+(B + \delta)\cdot (1-\delta)^2} = \delta+\varepsilon \\
    &\Leftrightarrow \qquad q = (\delta + \varepsilon) q + (\delta + \varepsilon)\cdot  (B + \delta) \cdot (1-\delta)^2\\
    &\Leftrightarrow \qquad (1 - \delta - \varepsilon) q = (\delta + \varepsilon) \cdot (B + \delta) \cdot (1-\delta)^2 \\
    &\Leftrightarrow \qquad q = \frac{(\delta + \varepsilon)\cdot (B + \delta) \cdot (1-\delta)^2}{1 - \delta - \varepsilon}
\end{align*}

Putting everything together, this concludes that for any $\varepsilon > 0$, Miner $i$ has a strategy that earns total payoff:

\begin{align*}
    &(\delta+\varepsilon)\cdot B + (\delta+\varepsilon/2)^2 - \frac{(\delta + \varepsilon) \cdot (B + \delta) \cdot (1-\delta)^2}{1 - \delta - \varepsilon} \\
    & = \delta \cdot B +\varepsilon \cdot B + \delta^2 + \delta \varepsilon + \varepsilon^2/4 - \frac{\delta \cdot (B+\delta) \cdot (1-\delta)^2 + \varepsilon \cdot (B+\delta) \cdot (1-\delta)^2}{1-\delta-\varepsilon}\\
&= \delta \cdot B +\varepsilon \cdot B + \delta^2 + \delta \varepsilon + \varepsilon^2/4 - \frac{\delta \cdot (B+\delta) \cdot (1-\delta)^2 - \varepsilon \cdot (B+\delta) \cdot (1-\delta) \cdot \delta + \varepsilon \cdot (B+\delta)\cdot (1-\delta)}{1-\delta-\varepsilon}\\
&= \delta \cdot B +\varepsilon \cdot B + \delta^2 + \delta \varepsilon + \varepsilon^2/4 - \frac{\delta \cdot (B+\delta) \cdot (1-\delta)\cdot (1-\delta - \varepsilon) + \varepsilon \cdot (B+\delta)\cdot (1-\delta)}{1-\delta-\varepsilon}\\
&= \delta \cdot B +\varepsilon \cdot B + \delta^2 + \delta \varepsilon + \varepsilon^2/4 - \delta \cdot (B+\delta) \cdot (1-\delta) - \frac{\varepsilon \cdot (B+\delta)\cdot (1-\delta)}{1-\delta-\varepsilon}\\
&= \left(\delta \cdot B +\delta^2 - (B+\delta) \cdot \delta\cdot (1-\delta) \right) + \varepsilon \cdot \left(\varepsilon /4 +(B+\delta)\cdot \left(1-\frac{1-\delta}{1-\delta-\varepsilon}\right) \right)\\
&= \left(\delta \cdot B +\delta^2 - (B+\delta) \cdot \delta\cdot (1-\delta) \right) + \varepsilon \cdot \left(\varepsilon /4 -(B+\delta)\cdot \frac{\varepsilon}{1-\delta-\varepsilon}\right)\\
&= \left(\delta \cdot B +\delta^2 - (B+\delta) \cdot \delta\cdot (1-\delta) \right) + \varepsilon^2 \cdot \left(1/4 -\frac{B+\delta}{1-\delta-\varepsilon}\right)\\
\end{align*}
In particular, the left term is exactly the payoff in equilibrium, so the right term is exactly the change in utility by deviating. Now consider as $\varepsilon \rightarrow 0$. When $\delta \in (\frac{1}{5}, 1)$, $\frac{1}{4} - \frac{B+\delta}{1 - \delta - \varepsilon}$ is strictly negative for all $B \geq 0$.
Therefore, there is always a sufficiently small $\varepsilon$ such that Miner $i$ would prefer to invest beyond equilibrium to achieve $Q_i=\delta + \varepsilon$ and become a price-setter, and there is no market-clearing equilibrium. This direct analysis works for any $\delta \in (1/5, 1)$ and $B \geq 0$, providing the necessary counterexample.

If desired, we can also extend the above analysis to any $\delta \in (0,1)$. Observe that for any $\delta \in (0,1)$ and $B \geq 1$, the term $\frac{1}{4} - \frac{B+\delta}{1-\delta-\varepsilon}$ is strictly negative. Therefore, this direct analysis shows that there is no market-clearing equilibrium for any $\delta \in (0,1)$ and $B \geq 1$. To extend the analysis to all $B \geq 0$, we can use the contrapositive of Proposition~\ref{prop:blockrewards} -- because there is no market-clearing equilibrium for $B = 1$, there is no market-clearing equilibrium for any $B < 1$.

\section{Asymmetric \writeOp{} Costs} \label{app:bitcoinasym}
Our main results focus on the setting where \writeOp{} costs are identical, as this best captures decentralized payment systems. In this section, we briefly explore the case of asymmetric costs (which would better capture a system with significant MEV that is computationally-demanding to extract) and: (a) extend our results that do not require significant new ideas (but still require updated statements/proofs), (b) highlight the one aspect of our results that would require new ideas to extend.

Specifically, we consider now that each Miner $i$ has a possibly different cost per unit of \writeOp{}, denoted as $c_i^W$. We remind the reader of our main results, and briefly comment on extensions to this asymmetric model.

\begin{itemize}
\item Proposition~\ref{prop:bitcoinundominated} states that it is a dominated strategy for Miner $i$ to set a reserve higher than the price they would set as a monopolist who controls all blockspace. This extends to the asymmetric setting with an identical proof, and simply notes that ``the price they would set as a monopolist who controls all blockspace'' now depends on $c_i^W$. We repeat the statement and proof in Proposition~\ref{prop:asymbitcoinundominated}.
\item Theorem~\ref{thm:bitcoinpure} provides necessary conditions for equilibria to exist, and in particular characterizes: (a) that the resulting market shares in any equilibrium are independent of the ultimate clearing price, (b) the resulting price-setting equilibria that can possibly arise at the market shares determined by (a). With asymmetric costs, (a) no longer holds, but it is still possible to characterize the market shares as a function of the clearing price. Similarly, it is still possible to characterize the possible price-setting equilibria as a function of the market shares. As a result, Theorem~\ref{thm:bitcoinasympure} no longer provides quite as clean a characterization with asymmetric costs, but still provides a useful tool to reason about equilibria.
\item Theorem~\ref{thm:bitcoinsufficient} provides sufficient conditions for a market-clearing equilibrium to exist. Similar conditions suffice for a market-clearing equilibrium to exist with asymmetric costs -- Theorem~\ref{thm:bitcoinasymsufficient} makes necessary changes to the statement and proof. The theorem statement restricts to $B=0$, due to the fact that Proposition~\ref{prop:blockrewards} does not extend to asymmetric costs.
\item Proposition~\ref{prop:blockrewards} states that increasing block rewards: (a) does not change the possible candidate equilibria, and (b) can only cause a candidate equilibrium to become an equilibrium (and cannot cause a candidate equilibrium to no longer be an equilibrium). In the asymmetric setting, (a) no longer holds -- when \writeOp{} costs are asymmetric, the block reward impacts bidders differently, and therefore changes candidate equilibria.
\item Proposition~\ref{prop:blockrewards2} identifies a sufficient condition in order for a sufficiently large block reward to imply a market clearing equilibrium. Proposition~\ref{prop:asymblockrewards2} proposes a similar conclusion with asymmetric \writeOp{} costs, although the proof requires one meaningful additional step (due to the fact that block rewards now impact the market shares at a market-clearing equilibrium). 
\end{itemize}

\subsection{Dominated Strategies with Asymmetric \writeOp{} Costs}

\begin{proposition}
\label{prop:asymbitcoinundominated}

    {Let $r^*_i(D, Q_A, c_i^W):= \arg\max_{r \geq D^{-1}_{\inf}(Q_A)}\{(r-c^W_i)\cdot D(r)\}$. Then for all Miners $i$, all $q_i > 0$, and all $r_i > r^*_i(D, Q_A,c_i^W)$, $(q_i, r^*_i(D, Q_A,c_i^W))$ dominates $(q_i, r_i)$. }
\end{proposition}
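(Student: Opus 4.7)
The plan is to adapt the proof of Proposition~\ref{prop:bitcoinundominated} essentially verbatim, substituting $c_i^W$ for $c^W$ and $r^*_i(D, Q_A, c_i^W)$ for $r^*(D, Q_A)$ throughout. This works because the original argument never relied on symmetry across miners' costs: the only cost appearing in the payoff comparison is Miner $i$'s own $c^W$, which is held fixed across the two strategies being compared, and the quantities $Q^{\leq}_{-i}(\cdot)$ already encapsulate the other miners' behavior irrespective of their costs.

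First I would observe that the two strategies $(q_i, r_i)$ and $(q_i, r^*_i)$ both invest $q_i$ in the Upstream Tullock contest, so they incur identical resource costs, earn identical block rewards, and produce identical allotments of \append{}s for Miner $i$ against any fixed $(\vec{q}_{-i}, \vec{r}_{-i})$. Hence, the whole comparison reduces to the net revenue $(p - c_i^W) \cdot S_i$ from the simultaneous first-price auction, where $p$ is the clearing price and $S_i$ is the mass of \writeOp{}s that Miner $i$ sells.

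Next I would split into cases. If $S_i(r_i) = 0$, dominance is immediate because setting $r^*_i$ yields non-negative profit (by definition $r^*_i$ maximizes $(r - c_i^W) D(r)$ over $r \geq D^{-1}_{\inf}(Q_A)$, and this maximum is at least the value at $r \to \infty$, namely $0$). If $S_i(r_i) > 0$, then the clearing price under $r_i$ is at least $r_i > r^*_i$, so every miner $j \neq i$ with $r_j \leq r^*_i$ is saturated, giving the upper bound $S_i(r_i) \leq D(r_i) - Q^{\leq}_{-i}(r^*_i)$. Symmetrically, when Miner $i$ sets $r^*_i$, they either saturate or the clearing price reaches at least $r^*_i$, so $S_i(r^*_i) \geq D(r^*_i) - Q^{\leq}_{-i}(r^*_i)$.

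Combining these two bounds and simplifying, the profit gap becomes
\begin{equation*}
(r^*_i - c_i^W) D(r^*_i) - (r_i - c_i^W) D(r_i) + (r_i - r^*_i) \cdot Q^{\leq}_{-i}(r^*_i),
\end{equation*}
which is non-negative: the last term because $r_i > r^*_i$ and $Q^{\leq}_{-i}(\cdot) \geq 0$, and the first two by optimality of $r^*_i$ in the definition of $r^*_i(D, Q_A, c_i^W)$. There is no substantive obstacle here; the only item worth double-checking is that the original proof's monopoly-optimization step (which used the range $r \geq D^{-1}_{\inf}(Q_A)$) still applies, which it does since the asymmetric definition of $r^*_i$ optimizes over exactly this same range.
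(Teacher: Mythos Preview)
Your proposal is correct and follows essentially the same route as the paper's proof: both observe that resource costs, block rewards, and \append{} allocations are identical under the two strategies, reduce to comparing first-price-auction profit, split on $S_i(r_i)=0$ versus $S_i(r_i)>0$, and in the latter case bound $S_i(r_i)\le D(r_i)-Q^{\leq}_{-i}(r^*_i)$ and $S_i(r^*_i)\ge D(r^*_i)-Q^{\leq}_{-i}(r^*_i)$ to conclude. The one piece you did not spell out explicitly is the strictness witness (the paper takes $q_j=0$ for all $j\neq i$), but since you promised a verbatim adaptation this is implicitly covered.
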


\begin{proof}[Proof of Proposition~\ref{prop:asymbitcoinundominated}] 

Observe first that, for any $(\vec{q}_{-i}, \vec{r}_{-i})$, the only difference in Miner $i$'s payoff for using $(q_i, r_i)$ as opposed to $(q_i, r^*_i(D, Q_A, c_i^W))$ is their profit in the simultaneous first-price auctions (their total expenditure on resources is the same, and their block reward is the same). Moreover, Miner $i$ also receives the same quantity of \append{}s. 

Let us now analyze Miner $i$'s revenue under both strategies. Let $S_i(r_i)$ denote the quantity of \writeOp{}s sold when using $(q_i, r_i)$, and $S_i(r_i^*(D,Q_A, c_i^W))$ denote the quantity sold when using $(q_i, r^*_i(D, Q_A, c_i^W))$. Observe first that if $S_i(r_i) = 0$, then certainly $S_i(r^*_i(D, Q_A, c_i^W))\cdot (r^*_i(D, Q_A, c_i^W)-c_i^W)\geq 0 = S_i(r_i) \cdot (r_i-c^W_i)$. If $S_i(r_i) > 0$, then:
\begin{align*}
&\qquad\qquad\qquad S_i(r_i)\leq D(r_i) - Q^{\leq}_{-i}(r^*_i(D,Q_A, c_i^W)),\\
&\qquad\qquad\text{ and } S_i(r^*_i(D,Q_A, c_i^W))\geq D(r^*_i(D, Q_A, c_i^W)) - Q^{\leq}_{-i}(r^*_i(D, Q_A, c_i^W)).\\
&\Rightarrow (r^*_i(D, Q_A, c_i^W)-c^W_i)\cdot S_i(r^*_i(D, Q_A, c_i^W)) - (r_i - c^W_i)\cdot S_i(r_i) \\
&\qquad \geq (r^*_i(D, Q_A, c_i^W)-c_i^W)\cdot \left(D(r^*_i(D, Q_A, c_i^W)) - Q^{\leq}_{-i}(r^*_i(D, Q_A, c_i^W))\right) \\
&\qquad \qquad - (r_i - c^W_i) \cdot \left(D(r_i) - Q^{\leq}_{-i}(r^*_i(D,Q_A, c_i^W))\right)\\
&\qquad = (r^*_i(D, Q_Ac_i^W)-c_i^W)\cdot D(r^*_i(D, Q_A, c_i^W) - (r_i - c_i^W) \cdot D(r_i) \\
&\qquad\qquad + (r^*_i(D, Q_A, c_i^W)-r_i) \cdot Q^{\leq}_{-i}(r^*_i(D,Q_A, c_i^W))\\
&\qquad \geq (r^*_i(D, Q_A, c_i^W)-c_i^W)\cdot D(r^*_i(D, Q_A, c_i^W)) - (r_i - c_i^W) \cdot D(r_i) \\
&\qquad \geq 0.
\end{align*}
Above, the first line follows as, because $S_i(r_i) > 0$, the clearing price is at least $r_i$. Therefore, at most $D(r_i)$ \writeOp{}s are sold, and at least $Q_{-i}^{<}(r_i) \geq Q_i^{\leq}(r^*_i(D, Q_A, c_i^W))$ must be sold to Miners $\neq i$. The second line follows because any Miner $i$ setting a price of $r$ sells quantity at least $\max\{Q_A, D(r)\} - Q_{-i}^{\leq}(r) \geq D(r) - Q_{-i}^{\leq}(r)$. The third inequality and subsequent equality are basic algebra. The penultimate inequality follows as $r_i > r^*_i(D,Q_A, c_i^W)$. {The final inequality follows as $r^*_i(D,Q_A, c_i^W)$ optimizes $(r-c^W_i)\cdot D(r)$ over all $r \geq D^{-1}_{\inf}(Q_A)$.}

To see that $(q_i, r^*_i(D, Q_A, c_i^W)$ may sometimes give strictly larger payoff than $(q_i, r_i)$, consider the case that each other Miner invests $q_j = 0$. Then Miner $i$'s profit from end-users by setting reserve $r^*_i(D, Q_A, c_i^W)$ is $Q_A \cdot ( r^*_i(D, Q_A, c_i^W)-c_i^W) > D^{-1}(r_i)\cdot (r_i - c_i^W)$, which is the profit earned by setting reserve $r_i$. 

    
\end{proof}

\subsection{Necessary Conditions for Equilibria with Asymmetric \writeOp{} Costs} \label{app:bitcoinasympure}
The key difference brought by asymmetric cost per unit of \writeOp{} is that, when a Miner decides how much to invest in the Upstream, they not only need to consider their $c_i^R$, but also their $c_i^W$ relative to the market clearing price $r$ and block reward $B$. We capture the effect of $\vec{c}^W$ in the following definition.
\begin{definition}
    Define $c^*(\vec{c}^R, \vec{c}^W, B, r)$ to be the unique solution to $\sum_{i=1}^n \max\left\{0,1-\frac{c_i^R}{(r+B-c_i^W) \cdot c^*(\vec{c}^R, \vec{c}^W, B, r)}\right\} = 1$.
    Further define $x_i^*(\vec{c}^R, \vec{c}^W, B, r):=\max\left\{0,1-\frac{c_i^R}{(r+B-c_i^W) \cdot c^*(\vec{c}^R, \vec{c}^W, B, r)}\right\}$. 
\end{definition}

\begin{lemma}
    For all $r, B, \vec{c}^R, \vec{c}^W$, there is a unique $x$ satisfying $\sum_{i=1}^n \max\left\{0,1-\frac{c_i^R}{(r+B-c_i^W) \cdot x}\right\} = 1$.
\end{lemma}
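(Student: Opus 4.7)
The plan is to view the left-hand side as a function $f(x) := \sum_{i=1}^n g_i(x)$ on $x > 0$, where $g_i(x) := \max\bigl\{0,\, 1 - c_i^R/((r+B-c_i^W)x)\bigr\}$, and to establish existence and uniqueness of the equation $f(x) = 1$ by a continuity and monotonicity argument closing with the intermediate value theorem.

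First I would analyze each summand in isolation. For a Miner $i$ with $r + B - c_i^W > 0$ (and $c_i^R > 0$, as costs are strictly positive in the model), the function $g_i$ is continuous on $(0, \infty)$, equals $0$ on $\bigl(0, c_i^R/(r + B - c_i^W)\bigr]$, and is strictly increasing from $0$ to $1$ on $\bigl[c_i^R/(r + B - c_i^W), \infty\bigr)$, approaching $1$ as $x \to \infty$. For miners who cannot profitably participate (those with $r + B - c_i^W \le 0$), the natural reading is $g_i \equiv 0$, matching how $x_i^*(\vec{c}^R, \vec{c}^W, B, r)$ is used in the subsequent equilibrium arguments. Under this convention, $f$ is continuous and non-decreasing on $(0, \infty)$, with $f(x) \to 0$ as $x \to 0^+$ and $f(x) \to N$ as $x \to \infty$, where $N$ is the number of participating miners. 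Provided $N \ge 2$, so that the equation is meaningful, the intermediate value theorem produces some $x^* > 0$ with $f(x^*) = 1$.

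For uniqueness, the key observation is that at any solution $x^*$ of $f(x^*) = 1$, at least one summand $g_i(x^*)$ must be strictly positive, which forces $x^* > c_i^R/(r + B - c_i^W)$ and hence $g_i$ is strictly increasing on a neighborhood of $x^*$. Consequently $f$ is strictly increasing at $x^*$, precluding a second solution. The main delicacy in this plan is the treatment of edge parameters (in particular, miners with $r + B - c_i^W \le 0$, and configurations in which only a single miner can contribute nontrivially, so $f$ approaches but never attains $1$); these are handled by the convention above, which is the one implicitly in force wherever the lemma is invoked in the rest of the paper.
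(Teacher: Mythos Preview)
Your proposal is correct and follows essentially the same approach as the paper: both define $f(x)$ as the sum, observe each summand is zero below a threshold and strictly increasing above it, use continuity plus the limit at infinity for existence via the intermediate value theorem, and obtain uniqueness from strict monotonicity once past the smallest threshold. Your treatment is slightly more careful about edge cases (miners with $r + B - c_i^W \le 0$ and the possibility of fewer than two participating miners), which the paper leaves implicit.
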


\begin{proof}
    For ease of notation let $f(x):=\sum_{i=1}^n \max\left\{0,1-\frac{c_i^R}{(r+B-c_i^W) \cdot x}\right\}$. Without loss of generality, sort the Miners in increasing order of $c_i^R/(r+B-c_i^W)$. The for all $x \leq c_1^R/(r+B-c_1^W)$, $f(x) = 0$. For all $x > c_1^R/(r+B-c_1^W)$, $f(x)$ is strictly increasing (because all terms in the sum are weakly increasing, and the term for $i=1$ is strictly increasing). Finally, $\lim_{x\rightarrow \infty} f(x) = n$. Therefore, there is a unique $x$ with $f(x) = 1$. 
\end{proof}

\begin{theorem} \label{thm:bitcoinasympure}
Let $(\vec{q},\vec{r})$ be an Equilibrium in the Distributed Ledger Model, and let the clearing price for End-Users be $r$. Then:
\begin{itemize}
    \item $\sum_j q_j = Q_A / c^*(\vec{c}^R, \vec{c}^W, B, r)$;
    \item For all Miners $i$, $\frac{q_i}{\sum_j q_j} = x_i^*(\vec{c}^R, \vec{c}^W, B, r)$.
\end{itemize}
Moreover, $r \geq D^{-1}_{\sup}(Q_A)$, and:
\begin{itemize}
    \item If $r = D^{-1}_{\sup}(Q_A)$, then $Q_A \cdot x^*_i(\vec{c}^R, \vec{c}^W, B, r) \leq \frac{(x - c_i^W) (Q_A-D(x))}{x - r}$ for all $i$ and all $x > r$. 
    \item If $r > D^{-1}_{\sup}(Q_A)$, then there is a is a single price-setter $i^*$, who sets a price equal to $r_{i^*}:= \arg\max_{x > D^{-1}_{\sup}(Q_A)}\{(x - c_i^W) \cdot (D(x)+Q_A\cdot x^*_{i^*}(\vec{c}^R, \vec{c}^W, B, r)-Q_A)\}$. 
\end{itemize}
\end{theorem}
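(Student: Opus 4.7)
The plan is to mirror the two-stage decomposition used for Theorem~\ref{thm:bitcoinpure}: first characterize reserve-setting best responses at fixed quantities $\vec{Q}$, then derive necessary local first-order conditions on investments $\vec{q}$ at a fixed clearing price $r$. The only substantive change is to carry the per-Miner \writeOp{} cost $c_i^W$ through both stages; the machinery from Section~\ref{sec:enduser} and the proof of Proposition~\ref{prop:bitcoinmustbeTullock} already admits this generalization.

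For the reserve-setting stage, I fix $\vec{q}$ (and hence $\vec{Q}$) and view $\vec{r}$ as an equilibrium of the Price-Setting Game with quantities $\vec{Q}$ and \emph{heterogeneous} per-item costs $c_i^W$. Since Proposition~\ref{prop:maineq} was already stated with per-seller costs, it applies verbatim: every equilibrium either saturates all Miners at clearing price $D^{-1}_{\sup}(Q_A)$ (with the saturation inequalities now using $c_i^W$), or has a unique price-setter $i^*$ playing $\arg\max_{x>D^{-1}_{\sup}(Q_A)}\{(x-c_{i^*}^W)(D(x)+Q_{i^*}-Q_A)\}$. Substituting $Q_{i^*} = Q_A \cdot x^*_{i^*}(\vec c^R, \vec c^W, B, r)$ (justified in the next stage) yields precisely the two sub-bullets of the Moreover.

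For the investment stage, I fix the clearing price $r$ and, if applicable, the identity of the price-setter $i^*$, and copy the local-deviation argument from the proof of Proposition~\ref{prop:bitcoinmustbeTullock}: small enough perturbations in $q_i$ (with $r_i$ held at $r$ for $i^*$, or at $0$ for others) preserve both the clearing price and the price-setter, so the first-order conditions are necessary. In this range Miner $i$'s payoff is
\[
P_i(q_i;q_{-i}) \;=\; Q_A \cdot \tfrac{q_i}{\sum_j q_j} \cdot (r + B - c_i^W) \;-\; c_i^R \, q_i \;+\; K_i,
\]
where $K_i$ is independent of $q_i$ and absorbs the $-(Q_A-D(r))(r-c_{i^*}^W)$ correction for the price-setter. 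Setting $\partial P_i/\partial q_i = 0$ yields $\tfrac{q_i}{\sum_j q_j} = 1 - \tfrac{c_i^R \cdot \sum_j q_j}{Q_A(r+B-c_i^W)}$, with the boundary case $q_i = 0$ admissible precisely when the RHS is nonpositive. Writing $\sum_j q_j = Q_A/c$ converts this to $x_i = \max\{0,\, 1 - \tfrac{c_i^R}{(r+B-c_i^W) \cdot c}\}$, and imposing $\sum_i x_i = 1$ pins $c = c^*(\vec c^R, \vec c^W, B, r)$, matching the stated formulas for $\sum_j q_j$ and $x_i^*$.

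The main obstacle is conceptual rather than technical: unlike the symmetric case, the market shares now depend on $r$ through the $(r+B-c_i^W)$ denominators, so one cannot cleanly decouple ``fix $\vec{Q}$, then solve price-setting'' from ``solve investment.'' Instead, shares, clearing price, and (if applicable) price-setter identity must be mutually consistent, which is exactly why the theorem is parameterized by $r$ and the two Moreover bullets act as $r$-indexed consistency conditions. Verifying that the slack used in the local-deviation argument remains nonempty with heterogeneous $c_i^W$ requires only noting that the set of $q'_i$ preserving both $r$ and the price-setter is an open neighborhood of $q_i$ (using that $Q_{i^*} > Q_A - D(r)$ holds strictly and that all other participants have reserves strictly below $r$), exactly as in the symmetric case.
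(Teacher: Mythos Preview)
Your proposal is correct and follows essentially the same route as the paper: the paper proves Theorem~\ref{thm:bitcoinasympure} by combining Proposition~\ref{prop:bitcoinasympricesetting} (applying Proposition~\ref{prop:maineq} with heterogeneous $c_i=c_i^W$) and Proposition~\ref{prop:bitcoinasymmustbeTullock} (the local first-order analysis on $q_i$ at fixed clearing price and price-setter), which is exactly your two-stage decomposition. Your observation that shares now depend on $r$ through the $(r+B-c_i^W)$ denominators, and hence that the theorem must be stated as $r$-indexed consistency conditions, is also precisely the conceptual point the paper makes in its preamble to Appendix~\ref{app:bitcoinasym}.
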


We then provide components for proving \Cref{thm:bitcoinasympure}.

\begin{proposition}\label{prop:bitcoinasympricesetting} Let $(\vec{q},\vec{r})$ be an Equilibrium in the Distributed Ledger Model. Then one of the following holds:
\begin{itemize}
    \item Each Miner sells $Q_i := Q_A \cdot q_i/\sum_j q_j$ \writeOp{}s at a clearing price of $D^{-1}_{\sup}(Q_A)$, and $r_i \leq D^{-1}_{\sup}(Q_A)$ for all $i$. Further, $Q_i \leq \frac{(x - c_i^W) (Q_A-D(x))}{x - D^{-1}_{\sup}(Q_A)}$ for all $i$ and all $x > D^{-1}_{\sup}(Q_A)$. 
    \item There is a single price-setter $i$, who sets price $r_i:= \arg\max_{x > D^{-1}_{\sup}(Q)}\{(x-c_i^W) \cdot (D(x)+Q_i-Q)\}$.
\end{itemize}
\end{proposition}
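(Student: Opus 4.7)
The plan is to reduce the statement to Proposition~\ref{prop:maineq}, which already characterizes all price-setting equilibria in the Simultaneous First-Price Auction sellers' subgame under \emph{asymmetric} per-item costs. The key observation is that, once we fix the investment profile $\vec{q}$, the reserve-setting subgame facing the Miners is exactly an instance of the Price-Setting Game from Definition~\ref{def:price-setting-game}, where Miner $i$ plays the role of Seller $i$ with quantity $Q_i := Q_A \cdot q_i/\sum_j q_j$ and per-item cost $c_i = c_i^W$.

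First, I would observe that in any Downstream Equilibrium $(\vec{q},\vec{r})$, the tuple $\vec{r}$ must be a best response to $\vec{r}_{-i}$ for each $i$ \emph{when $\vec{q}$ is held fixed} -- this is because the action $(q_i,r_i)$ being a best response to $(\vec{q}_{-i},\vec{r}_{-i})$ certainly implies that $r_i$ alone is a best response to $(\vec{q}_{-i},\vec{r}_{-i})$ after fixing $q_i$. Fixing $\vec{q}$ pins down the quantities $\vec{Q}$ each Miner received from the Tullock contest (note $\sum_i Q_i = Q_A$, inherited from the protocol's fixed supply), and the block reward component of Miner $i$'s payoff depends only on $\vec{q}$ and so is constant in $r_i$. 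Hence Miner $i$'s strategic optimization over $r_i$ is exactly that of Seller $i$ in the Price-Setting Game with quantity $Q_i$ and cost $c_i = c_i^W$, facing the canonical End-User (i.e.\ first-price auction bidding) equilibrium.

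Next, I would invoke Proposition~\ref{prop:maineq} on this induced Price-Setting Game. That proposition partitions all price-setting equilibria into the two forms stated there, and those two forms translate verbatim into the two bullets of the present proposition: the ``saturated'' case yields a clearing price of $D^{-1}_{\sup}(Q_A)$ together with the inequality $Q_i \leq \frac{(x-c_i^W)(Q_A - D(x))}{x - D^{-1}_{\sup}(Q_A)}$ for all $i$ and all $x > D^{-1}_{\sup}(Q_A)$ (substituting $c_i = c_i^W$ and $Q = Q_A$), and the ``single price-setter'' case yields the maximizer $r_i = \arg\max_{x > D^{-1}_{\sup}(Q_A)}\{(x - c_i^W)(D(x) + Q_i - Q_A)\}$.

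There is essentially no substantive obstacle: Proposition~\ref{prop:maineq} was already proved with arbitrary asymmetric per-item costs $c_i$ (see Lemmas~\ref{lem:one},~\ref{lem:singleprice}, and~\ref{lem:saturatedeq}), so no re-derivation of its underlying bidder analysis is required. The only minor care point is the mapping: one should verify that the sellers' costs in the induced subgame are indeed $c_i^W$ (not some blend that depends on $c_i^R$ or $B$), which follows because investment cost and block reward are additive constants independent of $r_i$ once $\vec{q}$ is fixed, and therefore drop out of the first-order comparisons over reserves. With this correspondence in hand, the proof is a one-line invocation of Proposition~\ref{prop:maineq}.
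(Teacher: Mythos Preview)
Your proposal is correct and takes essentially the same approach as the paper: both reduce to Proposition~\ref{prop:maineq} by observing that once $\vec{q}$ is fixed, the reserve-setting subgame is exactly the Price-Setting Game with quantities $Q_i = Q_A\cdot q_i/\sum_j q_j$ and per-item costs $c_i = c_i^W$. Your explicit remark that the block reward and investment cost are additive constants in $r_i$ (and hence drop out of the reserve optimization) is a useful clarification that the paper leaves implicit.
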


\begin{proof}[Proof of Proposition~\ref{prop:bitcoinasympricesetting}] The proof follows immediately from Proposition~\ref{prop:maineq}, with $c_i = c_i^W$. Indeed, in order for $(q_i,r_i)$ to be a best response to $(\vec{q}_{-i},\vec{r}_{-i})$, it must be that $r_i$ optimizes Miner $i$'s payoff after fixing $q_i,\vec{q}_{-i},\vec{r}_{-i}$. Therefore, in any equilibrium it must hold simultaneously for all $i$ that $r_i$ optimizes Miner $i$'s payoff after fixing $q_i,\vec{q}_{-i},\vec{r}_{-i}$. 

Observe that this condition fixes $\vec{q}$ and therefore $\vec{Q}$, and asks that $r_i$ simultaneously optimize Miner $i$'s payoff in response to $\vec{r}_{-i}$. This is exactly asking for an equilibrium of the price-setting game parameterized by $D(\cdot)$ and $\vec{Q}$, and its equilibria are characterized in Proposition~\ref{prop:maineq}.
\end{proof}

Next, we establish that the market shares of each Miner must match those prescribed by Theorem~\ref{thm:bitcoinasympure}

\begin{proposition}\label{prop:bitcoinasymmustbeTullock}
Let $(\vec{q},\vec{r})$ be an Equilibrium in the Distributed Ledger Model, and let the clearing price for End-Users be $r$. Then:
\begin{itemize}
    \item $\sum_j q_j = Q_A /c^*(\vec{c}^R, \vec{c}^W, B, r)$.
    \item For all Miners $i$, $\frac{q_i}{\sum_j q_j} = x_i^*(\vec{c}^R, \vec{c}^W, B, r)$.
\end{itemize}
\end{proposition}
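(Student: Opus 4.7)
The plan is to follow the same structure as the proof of Proposition~\ref{prop:bitcoinmustbeTullock}, treating local optimality conditions for each Miner $i$ by restricting to deviations that preserve both the clearing price $r$ and the identity of the unsaturated price-setter (if any). As in the symmetric case, the flexibility argument from the symmetric proof carries over essentially verbatim: if $r = D^{-1}_{\sup}(Q_A)$, then deviations of the form $(q_i', 0)$ with $q_i'$ in a neighborhood of $q_i$ preserve both properties; if $r > D^{-1}_{\sup}(Q_A)$ with price-setter $i^*$, then for $i \neq i^*$ deviations of the form $(q_i', 0)$ in a sufficiently small neighborhood work, and for $i^*$ deviations of the form $(q_{i^*}', r)$ in a small neighborhood work. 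So local optimality with respect to $q_i$ is a necessary condition.

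The key computation is the payoff of Miner $i$ as a function of $q_i$ \emph{in this range}. Writing $x_i(q_i;q_{-i}) := q_i/(q_i + \sum_{j\neq i} q_j)$, for a non-price-setter Miner $i$ the payoff is
\[
P_i(q_i;q_{-i}) = Q_A \cdot x_i(q_i;q_{-i}) \cdot (r + B - c_i^W) - c_i^R \cdot q_i,
\]
and for the price-setter $i^*$ the payoff is $Q_A \cdot x_{i^*}(q_{i^*};q_{-i^*}) \cdot (r+B-c_{i^*}^W) - (Q_A - D(r))(r-c_{i^*}^W) - c_{i^*}^R q_{i^*}$, where the middle term is a constant in $q_{i^*}$. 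The crucial observation is that the \emph{only} change from the symmetric proof is that the factor $(r+B-c^W)$ becomes $(r+B-c_i^W)$, indexed by $i$; the revenue in the simultaneous first-price auctions depends on the reserves and quantities but not on other Miners' $c_j^W$. Hence the same derivative computation yields
\[
\frac{\partial P_i(q_i;q_{-i})}{\partial q_i} = Q_A \cdot (r+B-c_i^W) \cdot \frac{1 - x_i(q_i;q_{-i})}{\sum_j q_j} - c_i^R.
\]

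Local optimality at $q_i$ then gives, for each Miner $i$, either (a) $x_i(q_i;q_{-i}) = 1 - \frac{c_i^R \sum_j q_j}{Q_A (r+B-c_i^W)}$, or (b) $q_i = 0$ and this expression is $\leq 0$. In both cases $x_i(q_i;q_{-i}) = \max\{0, 1 - \frac{c_i^R \sum_j q_j}{Q_A (r+B-c_i^W)}\}$. Setting $c^* := Q_A/\sum_j q_j$, this rewrites as $x_i = \max\{0, 1 - \frac{c_i^R}{(r+B-c_i^W) c^*}\}$. Summing $\sum_i x_i = 1$ forces $c^*$ to satisfy the defining equation of $c^*(\vec{c}^R,\vec{c}^W,B,r)$, and by the uniqueness lemma established just before Theorem~\ref{thm:bitcoinasympure} we conclude $c^* = c^*(\vec{c}^R,\vec{c}^W,B,r)$, so $\sum_j q_j = Q_A/c^*(\vec{c}^R,\vec{c}^W,B,r)$ and $q_i/\sum_j q_j = x_i^*(\vec{c}^R,\vec{c}^W,B,r)$ as claimed.

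There is no genuinely new technical obstacle: the argument is a clean substitution of $c_i^W$ for $c^W$ inside the derivative. The only point requiring minor vigilance is re-verifying that the deviations used to extract the first-order condition remain admissible (they do, because the admissibility conditions depend only on $\vec{Q}, \vec{r}$ and the clearing price $r$, not on the cost profile), and appealing to the uniqueness lemma for $c^*(\vec{c}^R,\vec{c}^W,B,r)$ rather than to the simpler closed form from~\cite{ArnostiW22} used in the symmetric case.
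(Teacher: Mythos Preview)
Your proposal is correct and follows essentially the same approach as the paper's own proof: both restrict to deviations preserving the clearing price and price-setter identity, compute the partial derivative $\frac{\partial P_i}{\partial q_i} = Q_A(r+B-c_i^W)\frac{1-x_i}{\sum_j q_j} - c_i^R$ (with the only change from the symmetric case being $c^W \to c_i^W$), extract the first-order conditions, sum to obtain the defining equation for $c^*(\vec{c}^R,\vec{c}^W,B,r)$, and invoke the uniqueness lemma.
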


\begin{proof}[Proof of Proposition~\ref{prop:bitcoinasymmustbeTullock}] 

Recall that the goal of Proposition~\ref{prop:bitcoinasymmustbeTullock} is to propose \emph{necessary} conditions on any equilibrium $(\vec{q},\vec{r})$ of the Distributed Ledger Model. Therefore, it suffices to consider, for example, local optimality conditions (which are necessary, but not sufficient). Let $r$ denote the clearing price at the candidate equilibrium $(\vec{q},\vec{r})$.

To this end, we focus the optimization problem facing a particular Miner $i$, and consider deviations of the following form:
\begin{itemize}
    \item Miner $i$ will only consider deviations that \emph{do not change the clearing price $r$} and \emph{do not change the unsaturated price-setter} (if there is one).
    \item Therefore, if $r = D^{-1}_{\sup}(Q_A)$, we will consider deviations for Miner $i$ from $(q_i,r_i)$ to $(q'_i,0)$ (and all deviations of this form will be considered).\footnote{Observe that if $r = D^{-1}_{\sup}(Q_A)$, it must be the case that all Miners $j$ with $q_j > 0$ have $r_j \leq D^{-1}_{\sup}(Q_A)$, and therefore the clearing price will remain $D^{-1}_{\sup}(Q_A)$ so long as Miner $i$ deviates to some $r_i \leq D^{-1}(Q_A)$ as well.}
    \item If $r > D^{-1}_{\sup}(Q_A)$, then Proposition~\ref{prop:bitcoinasympricesetting} establishes that there is a single price setter $i^*$. In order for $i^*$ to be a price-setter, it must be that $Q_{i^*} > Q_A - D(r)$, and also that $r_i < r$ for all $i\neq i^*$ with $q_i > 0$. Therefore, for Miner $i^*$, we will consider deviations of the form $(q'_{i^*}, r)$ \emph{that still result in $Q'_{i^*} > Q_A - D(r)$}. Observe that there exists a sufficiently small $\delta>0$ such that deviations of the form $(q_{i^*}+x,r)$ satisfies this property for all $x \in (-\delta,\delta)$. For Miner $i\neq i^*$, we will consider deviations of the form $(q'_i, 0)$ that also \emph{still result in $Q'_{i^*} > Q_A - D(r)$}. Observe again that there exists a sufficiently small $\delta > 0$ such that deviations of the form $(q'_i,0)$ satisfy this property for all $q'_i \in [0,q_i+\delta)$. 
    \item In conclusion, we will only ever consider deviations that do not change the clearing price and do not change the unsaturated price-setter (if there is one). However, the above bullets note there is sufficient flexibility in choosing such deviations that local optimality conditions on the choice of $q_i$ must hold.
\end{itemize}

Now, we consider local optimality conditions for deviations of the prescribed type for a particular Miner $i$. The proof essentially breaks into two (interleaved) parts: (a) we repeat calculations identical to those in~\cite{ArnostiW22} for analyzing equilibria of Tullock Contests, and (b) we confirm that the same local optimality conditions must hold for any equilibrium in the Distributed Ledger Model.

So, consider the function $x_i(q_i;q_{-i}):=q_i/(q_i +\sum_{j \neq i} q_j)$, which determines the fraction of the $Q_A$ \append{}s won by Miner $i$ as a function of $q_i$ after fixing $q_{-i}$. We compute (identically to~\cite{ArnostiW22}):

$$\frac{\partial x_i(q_i;q_{-i})}{\partial q_i} =\frac{1}{q_i + \sum_{j \neq i} q_j} - \frac{q_i}{\left(q_i + \sum_{j\neq i} q_j\right)^2}= \frac{1-x_i(q_i;q_{-i})}{q_i + \sum_{j \neq i} q_j}.$$

Now, \emph{in the range where the clearing price remains $r$ and the unsaturated price-seller (if one exists) remains $i^*$}, Miner $i\neq i^*$'s payoff for investing $q_i$ is: $P_i(q_i;q_{-i}):=Q_A \cdot x_i(q_i;q_{-i})\cdot (r-c_i^W+B) - c_i^R \cdot q_i$. Therefore, its derivative \emph{in this range} is:

\begin{align*}
    \frac{\partial P_i(q_i;q_{-i})}{\partial q_i} &= Q_A \cdot (r-c_i^W+B)\cdot \frac{\partial x_i(q_i;q_{-i})}{\partial q_i} - c_i^R\\
    &= Q_A\cdot (r-c_i^W+B) \cdot \frac{1-x_i(q_i;q_{-i})}{q_i + \sum_{j\neq i} q_j} - c_i^R
\end{align*}

If there is a price-setter $i^*$, then Miner $i^*$'s payoff \emph{in the prescribed range} is: $P_{i^*}(q_{i^*};q_{-i^*}):=Q_A \cdot x_{i^*}(q_{i^*};q_{-i^*}) \cdot (r-c_{i^*}^W+B) - (Q_A-D(r))\cdot (r-c_{i^*}^W)- c^R_{i^*} \cdot q_{i^*}$. Therefore, its derivative \emph{in this range} is:

\begin{align*}
    \frac{\partial P_{i^*}(q_{i^*};q_{-i^*})}{\partial q_{i^*}} &= Q_A \cdot (r-c_{i^*}^W+B)\cdot \frac{\partial x_{i^*}(q_{i^*};q_{-i^*})}{\partial q_{i^*}} - c_{i^*}^R\\
    &= Q_A\cdot (r-c_{i^*}^W+B) \cdot \frac{1-x_{i^*}(q_{i^*};q_{-i^*})}{q_{i^*} + \sum_{j\neq i^*} q_j} - c_{i^*}^R
\end{align*}

In particular, \emph{in this range}, the partial derivatives are the same. Now, consider any candidate equilibrium $(\vec{q},\vec{r})$ with clearing price $r$. 

\begin{itemize}
    \item If $i^*$ is a price-setter, then we must have $Q_{i^*} > Q_A - D(r)$ and $r_{i^*} = r$. Moreover, as long as $Q'_{i^*} > Q_A - D(r)$ and $r'_{i^*} = r$, $i^*$ will remain a price-setter and have payoff $P_{i^*}(q_{i^*};q_{-i^*})$ as defined above. Therefore, unless $\frac{\partial P_{i^*}(q_{i^*};q_{-i^*})}{\partial q_{i^*}} = 0$, there exists a sufficiently small $\varepsilon$ such that $(q_i\pm \varepsilon,r)$ is a strictly better response than $(q_i, r)$. We conclude that for any price-setter, $\frac{\partial P_{i^*}(q_{i^*};q_{-i^*})}{\partial q_{i^*}} = 0$ is a necessary condition for $(\vec{q},\vec{r})$ to be an equilibrium.
    \item If $i$ is not a price-setter, then any deviation of the form $(q'_i,0)$ for $q'_i \leq q_i$ maintains both the clearing price and the identity of the price-setter (if one exists), along with any deviation of the form $(q_i+\varepsilon,0)$ for sufficiently small $\varepsilon$. Therefore, it must either hold that (a) $\frac{\partial P_{i}(q_{i};q_{-i})}{\partial q_{i}} = 0$ or (b) $q_i = 0$ and $\frac{\partial P_{i}(q_{i};q_{-i})}{\partial q_{i}} \leq 0$.
    \item Together, we conclude that for all $i$, a necessary condition for $(\vec{q},\vec{r})$ to be an equilibrium is that (a) $\frac{\partial P_{i}(q_{i};q_{-i})}{\partial q_{i}} = 0$ or (b) $q_i = 0$ and $\frac{\partial P_{i}(q_{i};q_{-i})}{\partial q_{i}} \leq 0$. Rewriting the partial derivatives computed above, (a) holds if and only if $x_i(q_i;\vec{q}_{-i}) = 1-\frac{c_i^R \cdot \sum_j q_j}{Q_A \cdot (r+B-c_i^W)}$. (b) holds if and only if $1-\frac{c_i^R \cdot \sum_j q_j}{Q_A \cdot (r+B-c_i^W)} \leq 0$. 
\end{itemize}

Because $\sum_{i=1}^n x_i(\vec{q}) = 1$, we must have: $\sum_{i=1}^n \max\{0,1-\frac{c_i^R \cdot \sum_j q_j}{Q_A \cdot (r+B-c_i^W)}\}=1$. In particular, this means that we must have $\sum_j q_j = Q_A/c^*(\vec{c}^R, \vec{c}^W, B, r)$ as desired, and therefore $x_i(\vec{q}) = x_i^*(\vec{c}^R,\vec{c}^W, B, r)$. 
\end{proof}

\begin{proof}[Proof of Theorem~\ref{thm:bitcoinasympure}]
The proof simply combines Propositions~\ref{prop:bitcoinasympricesetting} and~\ref{prop:bitcoinasymmustbeTullock}.
\end{proof}

\subsection{Sufficient Conditions with Asymmetric \writeOp{} Costs} \label{app:bitcoinasymsufficieint}
\begin{theorem}\label{thm:bitcoinasymsufficient} Let $B = 0$, and consider a potential equilibrium $(\vec{q}^*,\vec{r}^*)$ such that: 
\begin{itemize}
    \item The clearing price is $r = D^{-1}_{\sup}(Q_A)$.
    \item $\sum_j q_j = Q_A / c^*(\vec{c}^R, \vec{c}^W, 0, r)$;
    \item For all Miners $i$, $\frac{q_i}{\sum_j q_j} = x_i^*(\vec{c}^R, \vec{c}^W, 0, r)$.
\end{itemize}

Then:
\begin{itemize}
    \item If $D(\cdot)$ is Regular and $x_i^*(\vec{c}^R, \vec{c}^W, 0, r) \leq 1-\frac{1}{D(0)/Q_A -1}$, for all $i$, then $(\vec{q}^*,\vec{r}^*)$ is an Equilibrium.
    \item Define $k_i(z):=\frac{\lt(D^{-1}_{\sup}(z \cdot Q_A) - c_i^W \rt) \cdot z\cdot Q_A}{\lt(D^{-1}_{\sup}(Q_A) - c_i^W \rt) \cdot Q_A}$. Then $(\vec{q}^*,\vec{r}^*)$ is an Equilibrium if and only if $x_i^*(\vec{c}^R, \vec{c}^W, B, r) \leq 1-\sup_{z \in [0,1]}\left\{\frac{k_i(z)-1}{2\cdot\left(\sqrt{k_i(z)/z}-1\right)}\right\}$ for all Miners $i$.
\end{itemize}
\end{theorem}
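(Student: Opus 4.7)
The plan is to mirror the proof of Theorem~\ref{thm:bitcoinsufficient} step by step, replacing the shared $c^W$ with per-miner $c_i^W$ throughout. The key structural difference is that the ``effective reward per \append{}'' in the implicit Tullock-like contest now depends on $i$ via $r - c_i^W$, and consequently the coverage function becomes miner-specific $k_i(z)$. Fortunately, each miner's best-response analysis decomposes across miners, so I handle each $i$ in isolation. I exploit the fact that the theorem restricts to $B = 0$: this sidesteps the fact that Proposition~\ref{prop:blockrewards} does not extend to asymmetric $\vec{c}^W$, which is why the block-reward reduction from the symmetric proof is unavailable and must be foregone.

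I first rederive the analog of Lemma~\ref{lem:bitcoincost}. Fixing $\vec{q}^*_{-i}$ from the candidate equilibrium in Theorem~\ref{thm:bitcoinasympure}, the aggregate opposing investment is $(1-x_i^*)Q_A/c^*$ where $c^* := c^*(\vec{c}^R,\vec{c}^W,0,r)$, so to win a $(1-y)$ fraction of \append{}s Miner $i$ must invest $(1/y-1)(1-x_i^*)Q_A/c^*$ Resource, at Resource cost $(1/y-1)(1-x_i^*)c_i^R Q_A/c^*$. For $x_i^* > 0$ the identity $c_i^R = (1-x_i^*)(r-c_i^W)c^*$ simplifies this to $(1/y-1)(1-x_i^*)^2 \rew_i(Q_A)$, where $\rew_i(Q_A) := Q_A(r-c_i^W)$. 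For $x_i^* = 0$ I reuse the ``pretend $c_i^R = (r-c_i^W)c^*$'' trick from the symmetric proof — this only lowers the true deviation cost, so any bound obtained remains valid. Next, mirroring Lemma~\ref{lem:bitcoinpureprofit}, I parameterize each joint deviation by $(y,z)$ with $y \leq z$, where Miner $i$ aims for market share $(1-y)$ and sets reserve $D^{-1}_{\sup}(z Q_A)$; their payoff is at most
\[
\rew_i(Q_A)\cdot\bigl[(1-y/z) k_i(z) - (1/y - 1)(1-x_i^*)^2\bigr],
\]
with equality at $z = 1,\ y = 1-x_i^*$ (recovering the candidate equilibrium profile).

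For fixed $z$, optimizing over $y$ as in Lemma~\ref{lem:bitcoinoptimizey} yields $y^* = (1-x_i^*)\sqrt{z/k_i(z)}$ and
\[
L_i(z) := \rew_i(Q_A)\cdot\bigl[k_i(z) - 2(1-x_i^*)\sqrt{k_i(z)/z} + (1-x_i^*)^2\bigr].
\]
A direct algebraic manipulation shows $L_i(1) \geq L_i(z)$ iff $1 - x_i^* \geq \tfrac{k_i(z) - 1}{2(\sqrt{k_i(z)/z} - 1)}$. Because the sellers' payoff expression above holds with equality at the equilibrium point, the analog of the ``if and only if'' portion of Lemma~\ref{lem:bitcoinmain2} applies verbatim, giving Bullet Two upon taking the supremum over $z \in [0,1]$ and requiring the condition for every $i$. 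For Bullet One, I invoke Lemma~\ref{lem:bitcoinregular} — whose proof is agnostic to which $c^W$ one uses — to conclude $k_i(z) \leq \tfrac{D(0)/Q_A - z}{D(0)/Q_A - 1}$ for each $i$ separately. Then the calculation from Corollary~\ref{cor:bitcoin-sufficient-largest-miner} shows $\tfrac{k_i(z)-1}{2(\sqrt{k_i(z)/z}-1)} \leq \tfrac{1}{D(0)/Q_A - 1}$ uniformly in $z$, so $x_i^* \leq 1 - \tfrac{1}{D(0)/Q_A - 1}$ for every $i$ suffices.

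The main obstacle is handling the possibility that $D^{-1}_{\sup}(z Q_A) \leq c_i^W$ for some deviating $z$ and some miner $i$, in which case $k_i(z)$ is non-positive and the quadratic manipulations above formally break down. The resolution is that any such deviation is automatically non-profitable: selling \writeOp{}s at a price below $c_i^W$ yields non-positive revenue contribution from the auction, while investment cost is still non-negative, so Miner $i$ would sooner set $z = $ the threshold where $D^{-1}_{\sup}(z Q_A) = c_i^W$ (or not participate in sales at all, captured by $y = z$). I will therefore restrict the analysis above to $z$ with $D^{-1}_{\sup}(zQ_A) > c_i^W$ and argue separately that the complementary range contributes no profitable deviations, so only the values of $z$ already covered by the $k_i(z)$-based calculation matter for checking optimality of the candidate equilibrium.
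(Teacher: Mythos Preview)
Your proposal is correct and follows essentially the same approach as the paper: the paper's Appendix on asymmetric \writeOp{} costs literally re-proves Lemmas~\ref{lem:bitcoinregular}--\ref{lem:bitcoinmain2} and Corollary~\ref{cor:bitcoin-sufficient-largest-miner} with $c_i^W$ in place of $c^W$ and $\rew_i(Q_A) := Q_A(r-c_i^W)$, arriving at the same $L_i(z)$ and the same iff condition, exactly as you outline. One small note: your ``main obstacle'' is a non-issue, since for $z \in [0,1]$ we have $D^{-1}_{\sup}(zQ_A) \geq D^{-1}_{\sup}(Q_A) = r > c_i^W$ (the last inequality being implicit for any participating miner), so $k_i(z) > 0$ automatically and no separate treatment of that range is needed.
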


We then proceed to prove \Cref{thm:bitcoinasymsufficient}. Throughout this section, we will find a sufficient condition for an equilibrium that clears $Q_A$ \writeOp{}s with a block reward of $0$. 

\begin{definition} For $k(\cdot)$, we say that a quantity $Q$ \emph{$k(\cdot)$-covers} a Demand Curve $D(\cdot)$ and \writeOp{} cost $c_i^W$ if $k(x) \cdot Q\cdot (D^{-1}_{\sup}(Q)-c_i^W) \geq (x\cdot Q) \cdot (D^{-1}_{\sup}(x \cdot Q)-c_i^W)$ for all $x \in (0,1)$, and $k(1) = 1$. 

We say that a quantity $Q$ \emph{exactly $k(\cdot)$-covers} $D(\cdot),c_i^W$ if $k(x) \cdot Q\cdot (D^{-1}_{\sup}(Q)-c_i^W) = (x\cdot Q) \cdot (D^{-1}_{\sup}(x \cdot Q)-c_i^W)$ for all $x \in (0,1)$, and $k(1) = 1$. 
\end{definition}

Intuitively, $Q$ $k(\cdot)$-covers $D(\cdot), c_i^W$ if the total revenue earned selling quantity $Q$ guarantees some fraction of the total revenue that could be earned selling quantity $x\cdot Q$ instead, with the precise coverage required parameterized by $x$. $Q$ exactly $k(\cdot)$-covers $D(\cdot),c_i^W$ if $k(\cdot)$ is the tightest possible coverage. Note that $k(x) \geq x$ for all $x\in (0,1)$, as $D^{-1}_{\sup}(x\cdot Q) \geq D^{-1}_{\sup}(Q)$ for all $x \in (0,1)$.

\begin{lemma}\label{lem:bitcoinregularasym}
    Let $D(\cdot)$ be Regular. Then for all $Q \leq D(0)$, $Q$ $\frac{D(0)/Q-x}{D(0)/Q-1}$-covers $D(\cdot), c_i^W$.
\end{lemma}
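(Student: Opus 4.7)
The plan is to follow the proof of Lemma~\ref{lem:bitcoinregular} essentially verbatim, replacing the symbol $c^W$ with $c_i^W$ throughout. The central observation that makes this work without modification is that the Regularity hypothesis is a property of $D(\cdot)$ alone (monotonicity of $\varphi_D(x) := x - D(x)/d(x)$) and never references the \writeOp{} cost. So for each fixed Miner $i$, the same concavity argument applies with $c_i^W$ in place of $c^W$.

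First I would introduce the per-Miner revenue function $R_{c_i^W}(Q) := Q \cdot (D^{-1}(Q) - c_i^W)$ on $Q \in [0, D(0)]$. A short change-of-variable calculation ($p = D^{-1}(Q)$, so $dQ/dp = -d(p)$) gives $R'_{c_i^W}(Q) = \varphi_D(D^{-1}(Q)) - c_i^W$. Since $\varphi_D$ is monotone non-decreasing by Regularity and $D^{-1}$ is strictly decreasing, the composition $\varphi_D \circ D^{-1}$ is non-increasing, so $R'_{c_i^W}$ is non-increasing in $Q$ and $R_{c_i^W}$ is concave on $[0, D(0)]$. Importantly, the additive form of the derivative means this concavity holds uniformly for every choice of $c_i^W$.

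Next I would exhibit the convex combination $Q = \frac{D(0)/Q - 1}{D(0)/Q - x}\cdot(xQ) + \frac{1-x}{D(0)/Q - x}\cdot D(0)$, apply concavity of $R_{c_i^W}$ to bound $R_{c_i^W}(Q)$ from below by the corresponding convex combination of $R_{c_i^W}(xQ)$ and $R_{c_i^W}(D(0))$, drop the (non-negative) $R_{c_i^W}(D(0))$ term, and rearrange to obtain $\frac{D(0)/Q - x}{D(0)/Q - 1}\cdot R_{c_i^W}(Q) \geq R_{c_i^W}(xQ)$. Unpacking definitions, this is exactly the statement that $Q$ $\frac{D(0)/Q - x}{D(0)/Q - 1}$-covers $D(\cdot), c_i^W$.

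I do not anticipate a genuine obstacle, as each step is a direct instantiation of the symmetric proof with $c_i^W$ substituted for $c^W$; the only piece of bookkeeping is that the resulting covering function now carries a subscript $i$, which is precisely the form required downstream by Theorem~\ref{thm:bitcoinasymsufficient} (where the analogous $k_i(\cdot)$ bound is applied on a per-Miner basis).
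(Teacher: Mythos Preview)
Your proposal is correct and follows the paper's proof essentially verbatim: define $R_{c_i^W}(Q) = Q\cdot(D^{-1}(Q)-c_i^W)$, use Regularity to conclude concavity, express $Q$ as the stated convex combination of $xQ$ and $D(0)$, and drop the $R_{c_i^W}(D(0))$ term. As you anticipated, the only change from Lemma~\ref{lem:bitcoinregular} is the symbolic substitution of $c_i^W$ for $c^W$.
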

\begin{proof}
    Consider the function $R_{c_i^W}(Q):=Q \cdot (D^{-1}(Q)-c_i^W)$. Then $R'_{c_i^W}(Q) = \varphi_D(D^{-1}(Q)) - c_i^W$. Because $D(\cdot)$ is Regular, $R'_{c_i^W}(\cdot)$ is decreasing, and therefore $R_{c_i^W}(\cdot)$ is concave. 

    Observe that $Q = \frac{D(0)/Q-1}{D(0)/Q-x}\cdot x \cdot Q + \frac{1-x}{D(0)/Q-x}\cdot D(0)$. Because $R_{c_i^W}(\cdot)$ is concave:
    \begin{align*}
R_{c_i^W}(Q) &\geq \frac{D(0)/Q -1}{D(0)/Q-x} \cdot R_{c_i^W}(x\cdot Q) + \frac{1-x}{D(0)/Q-x}\cdot R_{c_i^W}(D(0))\\
&\geq \frac{D(0)/Q -1}{D(0)/Q-x} \cdot R_{c_i^W}(x\cdot Q),
    \end{align*}
    confirming that $Q$ $\frac{D(0)/Q-x}{D(0)/Q-1}$-covers $D(\cdot), c_i^W$.
\end{proof}

Lemma~\ref{lem:bitcoinregularasym} allows us to conclude $k(\cdot)$-coverage immediately from the fact that $D(\cdot)$ is Regular, although for most Regular $D(\cdot)$ a tighter bound is possible.

Now, we argue that when $Q_A$ sufficiently-covers $D(\cdot)$, it is an Equilibrium for each miner to set $r_i = D^{-1}_{\sup}(Q_A)$, $\sum_j q_j = Q_A /c^*(\vec{c}^R, \vec{c}^W, B, r)$, and $q_i/\sum_j q_j = x_i^*(\vec{c}^R, \vec{c}^W, B, r)$ for all $i$ (i.e.~the potential equilibrium described by Bullet One in the second half of Theorem~\ref{thm:bitcoinasympure}). For simplicity of notation in the rest of this section, we refer to equilibrium investments as $\vec{q}^*$, the equilibrium quantity of \append{}s won by each miner as $\vec{Q}^*$, the equilibrium fraction of \append{}s won as $\vec{x}^*$ (where $x_i^* = Q_i^*/Q_A$), and $c^*:=c^*(\vec{c}^R, \vec{c}^W, B, r)$. We further use the notation $\rew_i(Q_A):=Q_A\cdot (r- c_i^W)$ to denote the profit when all of \append{}s are sold by Miner $i$.

First, we analyze the investment cost Miner $i$ must pay in order to win a $(1-y)$ fraction of \append{}s against $\vec{q}_{-i}^*$.

For the rest of this section, we will leverage the following observation. For all Miners $i$ with $x_i^* = 0$, we are hoping to show that their best response is to maintain $q_i = 0$. Certainly this holds if their best response is to maintain $q_i = 0$ \emph{even if $\frac{c_i^R}{r-c_i^W}$ were lowered to $c^*$}.\footnote{Recall that $x_i^* = \max\{0,1-\frac{c_i^R}{c^*}\}$, therefore, the cost of all Miners with $x_i^* = 0$ is at least $c^*$.} Therefore, if we can show that $(\vec{q}^*,\vec{r}^*)$ is an Equilibrium \emph{even when all Miners with $x_i^*=0$ have $\frac{c_i^R}{r-c_i^W} = c^*$}, then we will have established that $(\vec{q}^*,\vec{r}^*)$ is an Equilibrium even when non-participating Miners have higher costs. 

\begin{lemma}\label{lem:bitcoincostasym} Let $x^*_i > 0$, or $x_i^* = 0$ and $\frac{c_i^R}{r-c_i^W} = c^*$. Then in order to win $(1-y) \cdot Q_A$ \append{}s, against strategy profile $\vec{q}^*_{-i}$, Miner $i$ must invest $(1/y-1)\cdot (1-x^*_i)^2\cdot \rew_i(Q_A)$.   
If $x_i^* = 0$, then in order to win $(1-y) \cdot Q_A$ \append{}s, against strategy profile $\vec{q}^*_{-i}$, Miner $i$ must invest at least $(1/y-1)\cdot (1-x^*_i)^2\cdot \rew_i(Q_A)$.
\end{lemma}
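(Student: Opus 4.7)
The plan is to essentially repeat the symmetric calculation from Lemma~\ref{lem:bitcoincost}, but carefully tracking how the block-reward-free, asymmetric fixed point $c^*(\vec{c}^R,\vec{c}^W,0,r)$ enters the cost expression. The core observation is that Miner $i$'s required Resource investment depends only on $\sum_{j\neq i}q_j^*$ and $y$; the role of the asymmetric $c_i^W$ appears only at the end when the Resource quantity is converted into a dollar cost via $c_i^R$.

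First, I would use the hypothesis to write $\sum_j q_j^* = Q_A/c^*$, so that $\sum_{j\neq i}q_j^* = (1-x_i^*)\cdot Q_A/c^*$. A standard Tullock calculation then shows that to win a $(1-y)$ fraction of \append{}s against fixed opponents investing $(1-x_i^*)Q_A/c^*$, Miner $i$ must purchase exactly $q_i = (1/y-1)\cdot(1-x_i^*)\cdot Q_A/c^*$ units of Resource (solving $q_i/(q_i+(1-x_i^*)Q_A/c^*)=1-y$). Next, I would multiply by $c_i^R$ to obtain the dollar cost $(1/y-1)(1-x_i^*)(c_i^R/c^*)\cdot Q_A$.

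The final step is to substitute $c_i^R/c^* = (1-x_i^*)\cdot(r-c_i^W)$, which follows from the first-order condition defining $x_i^*(\vec{c}^R,\vec{c}^W,0,r)$ whenever $x_i^*>0$ (or by hypothesis when $x_i^*=0$ and $c_i^R/(r-c_i^W)=c^*$). Combining gives the cost $(1/y-1)(1-x_i^*)^2 \cdot Q_A(r-c_i^W) = (1/y-1)(1-x_i^*)^2\rew_i(Q_A)$, as claimed. For the general $x_i^*=0$ case, the defining equation $\sum_j \max\{0,1-c_j^R/((r-c_j^W)c^*)\}=1$ only forces $c_i^R/(r-c_i^W)\geq c^*$, so substitution yields a lower bound rather than an equality; noting $(1-x_i^*)=(1-x_i^*)^2=1$ completes that case.

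I do not expect any real obstacles; the lemma is purely a bookkeeping translation of the symmetric Lemma~\ref{lem:bitcoincost} after replacing $(r-c^W+B)c^*(\vec{c}^R)$ with $(r-c_i^W)c^*(\vec{c}^R,\vec{c}^W,0,r)$. The only mildly delicate step is verifying that the identity $c_i^R = (1-x_i^*)(r-c_i^W)c^*$ is an equality precisely under the stated hypothesis, and only an inequality (in the right direction for a lower bound) for generic non-participating Miners with $x_i^*=0$.
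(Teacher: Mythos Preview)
Your proposal is correct and follows essentially the same approach as the paper: compute the opponents' total Resource $\sum_{j\neq i}q_j^*=(1-x_i^*)Q_A/c^*$, solve the Tullock share equation to get the required Resource purchase $(1/y-1)(1-x_i^*)Q_A/c^*$, multiply by $c_i^R$, and then substitute the fixed-point identity $c_i^R/c^*=(1-x_i^*)(r-c_i^W)$. Your treatment of the generic $x_i^*=0$ case (where $c_i^R/(r-c_i^W)\geq c^*$ gives only a lower bound) is in fact slightly more explicit than the paper's own write-up.
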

\begin{proof}[Proof of Lemma~\ref{lem:bitcoincostasym}]
    By definition, the total Resources purchased by Miners $\neq i$ is $Q_A \cdot (1-x^*_i) /c^*$. Therefore, in order to win a $(1-y)$ fraction of the market against these Resources, Miner $i$ must invest such that Miners $\neq i$ Resources become a $y$ fraction of the total Resources. Therefore, Miner $i$ must purchase $(1/y-1)\cdot Q_A \cdot (1-x^*_i)/c^*$ Resources.\footnote{To quickly see that the calculation is correct, observe that this results in a total Resources of $(1/y) \cdot \rew(Q_A)\cdot (1-x^*_i)/c^*$, of which $\rew(Q_A)\cdot (1-x^*_i)/c^*$ is a $y$ fraction.} 

    Moreover, a Miner $i$ pays a cost of $c_i^R$ per Resource, meaning that Miner $i$ must invest $(1/y-1)\cdot Q_A \cdot (1-x^*_i)\cdot \frac{c^R_i}{c^*}$. 
    
    Finally, if $x_i^*>0$, $x_i^* = 1- \frac{c^R_i}{(r-c_i^W)c^*}$ and therefore $\frac{c^R_i}{c^*} = (1-x_i^*)(r-c_i^W)$ (and if $x_i^* = 0$, this holds by hypothesis). So we conclude a total investment of $(1/y-1)\cdot \rew_i(Q_A) \cdot (1-x^*_i)^2$, as desired.
\end{proof}

Now, we observe that Miner $i$'s strategy space consists of the following two decisions (made jointly): (a) pick a price $D^{-1}_{\sup}(x \cdot Q_A)$ to set, (b) pick a fraction $y$ to win $(1-y)\cdot Q_A$ \append{}s. After both choices are made, Miner $i$ earns revenue $(x-y) \cdot Q_A \cdot D^{-1}_{\sup}(x \cdot Q_A)$. Therefore, we get the following lemma:

\begin{lemma}\label{lem:bitcoinpureprofitasym} Let $x_i^*>0$, or $x_i^* = 0$ and $\frac{c_i^R}{r - c_i^W} = c^*$, for all $i$. Then for every strategy $(q_i,r_i)$ that Miner $i$ can use against $(\vec{q}_{-i}^*,\vec{r}_{-i}^*)$, there exists a $z \leq 1$ and $y \leq z$ such that:
$$P_i((q_i,r_i);(\vec{q}_{-i}^*,\vec{r}_{-i}^*)) = \left(1-\frac{y}{z}\right) \cdot (z \cdot Q_A) \cdot (D^{-1}_{\sup}(z \cdot Q_A)-c_i^W) - (1/y-1)\cdot(1-x_i^*)^2 \cdot \rew_i(Q_A).$$

Moreover, for every $z \leq 1$ and $y \leq z$, there exists a strategy guaranteeing Miner $i$ payoff exactly $\left(1-\frac{y}{z}\right) \cdot (z \cdot Q_A) \cdot (D^{-1}_{\sup}(z \cdot Q_A)-c_i^W) - (1/y-1)\cdot(1-x_i^*)^2 \cdot \rew_i(Q_A).$

Therefore, $(\vec{q}^*,\vec{r}^*)$ is an Equilibrium if and only if the function $\left(1-\frac{y}{z}\right) \cdot (z \cdot Q_A) \cdot (D^{-1}_{\sup}(z \cdot Q_A)-c_i^W)  - (1/y-1)\cdot(1-x_i^*)^2 \cdot \rew_i(Q_A)$ is optimized at $z = 1$ and $y = 1-x_i^*$.
    
\end{lemma}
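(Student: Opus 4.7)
The plan is to mirror the parameterization used in the symmetric case (Lemma~\ref{lem:bitcoinpureprofit}), tracking carefully where the asymmetric \writeOp{} costs $c_i^W$ and the per-miner revenue scale $\rew_i(Q_A) := Q_A \cdot (r - c_i^W)$ enter. The strategy space of Miner $i$ against the fixed $(\vec{q}_{-i}^*, \vec{r}_{-i}^*)$ can be reduced to two scalar choices: (a) the fraction of \append{}s Miner $i$ ultimately wins (call the losing fraction $y$, so Miner $i$ wins $(1-y)\cdot Q_A$ \append{}s), and (b) the reserve $r_i$ they set, which via the canonical End-User equilibrium is without loss of generality of the form $D^{-1}_{\sup}(z \cdot Q_A)$ for some $z \in [0,1]$.

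First I would apply Lemma~\ref{lem:bitcoincostasym}, whose hypothesis (every $i$ has $x_i^* > 0$, or else $x_i^* = 0$ and $c_i^R/(r-c_i^W) = c^*$) is exactly what is assumed here. This gives that the total Resource cost to Miner $i$ of winning $(1-y)\cdot Q_A$ \append{}s is exactly $(1/y - 1)(1-x_i^*)^2 \cdot \rew_i(Q_A)$. Next, I would compute revenue from the simultaneous first-price auctions: a total mass of $z \cdot Q_A$ \writeOp{}s clear at price $D^{-1}_{\sup}(z \cdot Q_A)$, of which Miners $\neq i$ hold $y \cdot Q_A$ \append{}s priced at $D^{-1}_{\sup}(Q_A) \leq D^{-1}_{\sup}(z \cdot Q_A)$, so Miner $i$ sells $\max\{0, z-y\}\cdot Q_A$ units and receives revenue $\max\{0, z-y\}\cdot Q_A \cdot D^{-1}_{\sup}(z \cdot Q_A)$ while incurring \writeOp{} cost $\max\{0,z-y\}\cdot Q_A \cdot c_i^W$. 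Subtracting these from the Resource expenditure yields the total payoff $\max\{0,z-y\} \cdot Q_A \cdot (D^{-1}_{\sup}(z \cdot Q_A) - c_i^W) - (1/y-1)(1-x_i^*)^2 \rew_i(Q_A)$.

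To get the desired form, I would split on whether $y \leq z$. If so, $\max\{0,z-y\} = z-y = (1-y/z)\cdot z$ and the expression matches the statement directly. If $y > z$, then $\max\{0,z-y\} = 0$, and replacing $z$ with $y$ (which corresponds to re-interpreting the strategy as setting a higher, non-binding reserve $D^{-1}_{\sup}(y \cdot Q_A)$, consistent with the canonical End-User equilibrium) makes the leading term vanish identically and thus matches the desired form with $z = y$. For the ``moreover'' direction, I would note that for any $(y, z)$ with $y \leq z$, choosing $q_i$ so that $q_i/(q_i + \sum_{j\neq i} q_j^*) = 1-y$ is achievable by Lemma~\ref{lem:bitcoincostasym}, and setting $r_i = D^{-1}_{\sup}(z \cdot Q_A)$ realizes the stated payoff exactly. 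Finally, substituting $z=1$, $y = 1-x_i^*$ into the formula recovers Miner $i$'s equilibrium payoff (since $\rew_i(Q_A) = Q_A \cdot (D^{-1}_{\sup}(Q_A) - c_i^W)$ and the cost term collapses), so $(\vec q^*, \vec r^*)$ is an Equilibrium iff no $(y,z)$ beats $(1-x_i^*, 1)$ for each $i$.

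The only subtlety beyond the symmetric proof is bookkeeping: one must verify that reducing to the two-parameter family $(y,z)$ is without loss, which relies on (i) the canonical End-User selection rule pinning down the clearing price as a function of $\vec{Q}$ and $\vec{r}$, and (ii) the fact that cheaper reserves by Miners $\neq i$ saturate first, so Miner $i$'s effective sales are always of the form $\max\{0,z-y\}\cdot Q_A$ regardless of how Miner $i$ redistributes their bid-setting across possible reserves. Once this reduction is established, the asymmetric computation is line-for-line analogous to the symmetric case, with $c^W \to c_i^W$ and $\rew \to \rew_i$ throughout; no genuinely new obstacle arises.
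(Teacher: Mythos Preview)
Your proposal is correct and follows essentially the same approach as the paper: parameterize Miner $i$'s deviation by the losing fraction $y$ (with cost computed via Lemma~\ref{lem:bitcoincostasym}) and the reserve-induced quantity fraction $z$, compute the payoff as $\max\{0,z-y\}\cdot Q_A\cdot(D^{-1}_{\sup}(z\cdot Q_A)-c_i^W)$ minus the investment cost, handle the $y>z$ case by resetting $z:=y$, and then observe that $(z,y)=(1,1-x_i^*)$ recovers the equilibrium strategy. The paper's proof is terser on the ``reduction is without loss'' point that you flag, but otherwise the arguments are line-for-line the same.
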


\begin{proof}
    Ultimately, Miner $i$ makes some investment that wins $(1-y)\cdot Q_A$ \append{}s for some $y \in [0,1]$. The total cost of doing so, by Lemma~\ref{lem:bitcoincostasym} is $(1/y-1)\cdot \rew_i(Q_A) \cdot (1-x^*_i)^2$.

    Ultimately, Miner $i$ also sets some price of the form $D^{-1}_{\sup}(z\cdot Q_A)$, for $z \in [0,1]$. This implies that a total quantity of $z \cdot Q_A$ \writeOp{}s are sold, of which $\max\{0,z-y\}\cdot Q_A$ are sold by Miner $i$. Therefore, Miner $i$'s total payoff is $\max\{0,z-y\}\cdot Q_A \cdot (D^{-1}_{\sup}(z \cdot Q_A)-c_i^W) - (1/y-1)\cdot(1-x_i^*)^2 \cdot \rew_i(Q_A)$. 

    If Miner $i$ happens to choose $y \leq z$, this matches the desired form. If not, observe that $\max\{0,z-y\}\cdot Q_A \cdot (D^{-1}_{\sup}(z \cdot Q_A)-c_i^W) = 0 = (y-y) \cdot Q_A \cdot (D^{-1}_{\sup}(y \cdot Q_A) - c_i^W)$. Therefore Miner $i$'s payoff matches the desired form after updating $z:=y$. This completes the proof (after observing that $(z-y) = (1-y/z) \cdot z$).

    To see the ``Moreover' portion of the lemma, simply observe that Miner $i$ can indeed pick any $D^{-1}_{\sup}(z \cdot Q_A)$ as a price to set, and any $y \leq z$ as a fraction of \append{}s to leave for other Miners, inducing the prescribed payoff. 

    To see the `Therefore' portion of the lemma, simply observe that $z = 1$ and $y = 1-x_i^*$ corresponds to $(q_i^*,r_i^*)$.
\end{proof}

From now on, we will use the function $P_i(y,z)$ to denote the payoff $P_i( (q_i,r_i);(\vec{q}_{-i}^*,\vec{r}_{-i}^*))$ of the strategy $(q_i,r_i)$ that wins a $(1-y)$ fraction of \append{}s and sets price $D^{-1}_{\sup}(z\cdot Q_A)$.

\begin{corollary}\label{cor:optimizeasym} Let $Q_A$ $k(\cdot)$-cover $D(\cdot), c_i^W$. Then:
\begin{align*}
    P_i(y,z) &\leq \left(1-\frac{y}{z}\right) \cdot k(z) \cdot Q_A \cdot (D^{-1}(Q_A)- c_i^W) - (1/y-1)\cdot(1-x_i^*)^2 \cdot \rew_i(Q_A)\\
    &= \rew_i(Q_A) \cdot\left( \left(1-\frac{y}{z}\right) \cdot k(z) - (1/y-1)\cdot (1-x_i^*)^2\right) 
\end{align*}

with equality at $z = 1, y = 1-x^*_i$. Therefore, $(\vec{q}^*,\vec{r}^*)$ is an Equilibrium if, for all $i$, the function $\rew_i(Q_A) \cdot\left( \left(1-\frac{y}{z}\right) \cdot k(z) - (1/y-1)\cdot (1-x_i^*)^2\right)$ is optimized at $z = 1$ and $y = 1-x_i^*$.

Moreover, if $Q_A$ exactly $k(\cdot)$-covers $D(\cdot),c_i^W$, then:
\begin{align*}
    P_i(y,z) &= \rew_i(Q_A) \cdot\left( \left(1-\frac{y}{z}\right) \cdot k(z) - (1/y-1)\cdot (1-x_i^*)^2\right) 
\end{align*}
Therefore, $(\vec{q}^*,\vec{r}^*)$ is an Equilibrium if and only if, for all $i$, the function $P_i(y,z)$ is optimized at $z = 1$ and $y = 1-x_i^*$.
\end{corollary}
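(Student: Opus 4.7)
The plan is to derive Corollary~\ref{cor:optimizeasym} as a direct consequence of Lemma~\ref{lem:bitcoinpureprofitasym} together with the definition of $k(\cdot)$-covers. The only work is a substitution and a verification of the equality case; there is no genuine obstacle.

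First, I would start from the exact expression provided by Lemma~\ref{lem:bitcoinpureprofitasym}:
$$P_i(y,z) = \left(1-\tfrac{y}{z}\right)\cdot (z\cdot Q_A)\cdot (D^{-1}_{\sup}(z\cdot Q_A)-c_i^W) - (1/y-1)\cdot (1-x_i^*)^2\cdot \rew_i(Q_A).$$
Since $1 - y/z \geq 0$ (as $y \leq z$), I can upper bound the first term by invoking the definition of $k(\cdot)$-covers, which gives $(z\cdot Q_A)\cdot (D^{-1}_{\sup}(z\cdot Q_A)-c_i^W) \leq k(z)\cdot Q_A \cdot (D^{-1}_{\sup}(Q_A)-c_i^W) = k(z)\cdot \rew_i(Q_A)$. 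Substituting yields the claimed inequality, and factoring out $\rew_i(Q_A)$ produces the displayed equivalent form.

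Next, I would verify the equality-at-$(z,y)=(1,1-x_i^*)$ assertion. Since $k(1)=1$ by the definition of $k(\cdot)$-covers, the covering inequality is tight at $z=1$, so $P_i(y,1)$ agrees with the upper bound for every $y$; specializing to $y = 1-x_i^*$ gives the equilibrium payoff on both sides. From here the ``Therefore'' statement is immediate: if the upper bound $\rew_i(Q_A)\cdot\bigl((1-y/z)k(z) - (1/y-1)(1-x_i^*)^2\bigr)$ is maximized at $(1,1-x_i^*)$, then for every $(y,z)$ we have $P_i(y,z) \leq \text{bound}(y,z) \leq \text{bound}(1,1-x_i^*) = P_i(1-x_i^*, 1)$, which by Lemma~\ref{lem:bitcoinpureprofitasym} is exactly the payoff of $(\vec{q}^*,\vec{r}^*)$. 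Combined with the ``Moreover'' clause of Lemma~\ref{lem:bitcoinpureprofitasym} (that every $(y,z)$ with $y\leq z$ is achievable by some strategy), this establishes the equilibrium condition.

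Finally, for the ``Moreover'' part of the corollary, I would observe that when $Q_A$ \emph{exactly} $k(\cdot)$-covers $D(\cdot),c_i^W$, the covering inequality used above is an equality for every $z\in(0,1)$. Thus $P_i(y,z)$ equals the proposed bound identically on $\{y\leq z\leq 1\}$, turning the one-sided implication into an if-and-only-if statement: optimizing the bound at $(1,1-x_i^*)$ is equivalent to optimizing $P_i$ at $(1,1-x_i^*)$, which by Lemma~\ref{lem:bitcoinpureprofitasym} is equivalent to $(\vec{q}^*,\vec{r}^*)$ being an Equilibrium. Since nothing in the argument requires any nontrivial analysis beyond substitution and the $k(1)=1$ normalization, there is no real obstacle; the corollary is a clean repackaging of Lemma~\ref{lem:bitcoinpureprofitasym} in a form that isolates the function of $(y,z)$ whose optimization will be attacked in subsequent lemmas.
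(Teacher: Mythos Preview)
Your proposal is correct and takes essentially the same approach as the paper: the paper's proof is a single sentence stating that the corollary follows immediately from Lemma~\ref{lem:bitcoinpureprofitasym} after substituting the definitions of $k(\cdot)$-cover and exactly $k(\cdot)$-cover together with $k(1)=1$, and you have simply spelled out those substitutions (including the observation that $1-y/z\geq 0$ is needed to preserve the inequality direction) in full detail.
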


\begin{proof}[Proof of Corollary~\ref{cor:optimizeasym}]
The proof follows immediately from Lemma~\ref{lem:bitcoinpureprofitasym} after substituting the definition of $k(\cdot)$-cover and exactly $k(\cdot)$-cover, and that $k(z) = 1$.  
\end{proof}

From here, we simply optimize the function provided in Corollary~\ref{cor:optimizeasym}. We begin by optimizing $y$ as a function of $z$.

\begin{lemma}\label{lem:bitcoinoptimizeyasym} For any $z$ and $y \leq z$:
\begin{align*}
&\rew_i(Q_A) \cdot\left( \left(1-\frac{y}{z}\right) \cdot k(z) - (1/y-1)\cdot (1-x_i^*)^2\right) \\
& \leq \rew_i(Q_A) \cdot \left(k(z) - 2(1-x_i^*)\cdot \sqrt{k(z)/z}  +(1-x_i^*)^2\right)
\end{align*}

with equality at $y = (1-x_i^*)\cdot \sqrt{\frac{z}{k(z)}}$ (which in particular implies equality at $y=1$ and $z = 1-x_i^*$).
\end{lemma}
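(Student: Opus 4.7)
The plan is to carry out a direct one-variable optimization in $y$ with $z$ fixed, mirroring the symmetric argument of Lemma~\ref{lem:bitcoinoptimizey}. The only change from the symmetric setting is that the constant prefactor $\rew(Q_A)$ is replaced by $\rew_i(Q_A)$, which pulls out of the optimization cleanly since it does not depend on $y$. So the substantive content reduces to optimizing the function $f(y) := (1-y/z)\cdot k(z) - (1/y-1)\cdot(1-x_i^*)^2$ over $y \in (0, z]$.

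First I would compute the derivative
\[
f'(y) = -\frac{k(z)}{z} + \frac{(1-x_i^*)^2}{y^2}.
\]
Observe that $f'(y)$ is strictly decreasing on $(0,z]$, so $f$ is concave and any interior critical point is a global maximum. Setting $f'(y) = 0$ and solving gives $y^* = (1-x_i^*)\sqrt{z/k(z)}$ (the positive root is the only relevant one since $y > 0$). One should briefly note that $y^* \leq z$ is automatic whenever the optimum is interior: if $y^*>z$ the optimum on $(0,z]$ is attained at $y = z$, and the stated inequality continues to hold there (the plan below handles this via monotonicity).

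Next I would substitute $y = y^*$ back into $f$ and simplify, which is routine algebra: the two $y/z$-type and $1/y$-type terms both reduce to $(1-x_i^*)\sqrt{k(z)/z}$, yielding
\[
f(y^*) = k(z) - 2(1-x_i^*)\sqrt{k(z)/z} + (1-x_i^*)^2,
\]
which is precisely the claimed bound after multiplying by $\rew_i(Q_A)$. Multiplying both sides of $f(y) \leq f(y^*)$ by the nonnegative constant $\rew_i(Q_A)$ preserves the inequality and gives the lemma (for the boundary case $y^* > z$ one uses that $f$ is increasing on $(0,y^*]$, so the maximum on $(0,z]$ is $f(z)$, and one checks directly that $f(z) = (1-x_i^*)^2(1-1/z) \leq k(z) - 2(1-x_i^*)\sqrt{k(z)/z} + (1-x_i^*)^2$ using $k(z) \geq z$).

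Finally I would verify the ``with equality'' clause: at $z=1$ we have $k(1)=1$ by definition of (exactly) $k(\cdot)$-covers, so $y^* = 1-x_i^*$, and substituting $z=1, y = 1-x_i^*$ into both sides of the displayed inequality gives $\rew_i(Q_A)\cdot x_i^{*2}$ on both sides. There is no real obstacle here; the entire argument is a concave one-variable optimization, and the proof differs from Lemma~\ref{lem:bitcoinoptimizey} only in notation ($\rew \mapsto \rew_i$). The only mild care needed is the boundary check $y^* \leq z$, which I would dispatch in one line using $k(z) \geq z$.
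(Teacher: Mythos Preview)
Your proposal is correct and follows essentially the same approach as the paper: both differentiate in $y$, observe concavity, solve for the critical point $y^*=(1-x_i^*)\sqrt{z/k(z)}$, and substitute back. Your boundary discussion for $y^*>z$ is extra care the paper omits (and is in fact unnecessary, since the right-hand side is the unconstrained global maximum of a concave function, so $f(y)\le f(y^*)$ holds for all $y>0$ regardless of whether $y^*\le z$).
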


\begin{proof}[Proof of Lemma~\ref{lem:bitcoinoptimizeyasym}]
Simply take the derivative with respect to $y$. We get:
\begin{align*}
    &\frac{\partial \left( \rew_i(Q_A) \cdot\left( \left(1-\frac{y}{z}\right) \cdot k(z) - (1/y-1)\cdot (1-x_i^*)^2\right)\right) }{\partial y}\\
    &= -\frac{k(z)}{z} \cdot \rew_i(Q_A) + (1-x^*_i)^2\cdot \rew_i(Q_A)/y^2
\end{align*}

Observe that the derivative is decreasing in $y$. Therefore, the maximum is achieved when the derivative is $0$. This occurs when:
\begin{align*}
    &-\frac{k(z)}{z} \cdot \rew_i(Q_A) + (1-x^*_i)^2\cdot \rew_i(Q_A)/y^2 = 0\\
    \Rightarrow &y = (1-x_i^*)\cdot \sqrt{\frac{z}{k(z)}}.
\end{align*}

This completes the proof of the core lemma. To see that equality holds at $z=1, y=1-x^*_i$, simply observe that $k(1) = 1$, and therefore the RHS above simplifies when substituting $z=1$ to $(1-x_i^*)$. Therefore, the LHS and RHS in the lemma statement are identical after substituting $z=1$ and $y=(1-x_i^*)$ to both sides.

To see the simplification in the statement, simply substitute $y = (1-x^*_i)\cdot \sqrt{\frac{z}{k(z)}}$ as below:
\begin{align*}
   & \rew_i(Q_A) \cdot\left( \left(1-\frac{y}{z}\right) \cdot k(z) - (1/y-1)\cdot (1-x_i^*)^2\right) \\
    &\qquad \leq \rew_i(Q_A) \cdot\left( \left(1-\frac{(1-x_i^*)\cdot \sqrt{z/k(z)}}{z}\right) \cdot k(z) - \left(\frac{1}{(1-x_i^*)\cdot \sqrt{z/k(z)}}-1\right)\cdot (1-x_i^*)^2\right)\\
    &\qquad = \rew_i(Q_A) \cdot \left(k(z) - (1-x_i^*)\cdot \sqrt{k(z)/z} -(1-x_i^*)\cdot \sqrt{k(z)/z} +(1-x_i^*)^2\right)\\
    &\qquad = \rew_i(Q_A) \cdot \left(k(z) - 2(1-x_i^*)\cdot \sqrt{k(z)/z}  +(1-x_i^*)^2\right)
\end{align*}

\end{proof}

\begin{definition}
    From now on, we define
    \begin{align*}
L_i(z)&= P_i\left((1-x_i^*)\cdot \sqrt{\frac{z}{k(z)}}, z\right) = \rew_i(Q_A) \cdot \left(k(z) - 2(1-x_i^*)\cdot \sqrt{k(z)/z}  +(1-x_i^*)^2\right)
\end{align*}
\end{definition}

\begin{corollary}
    Let $Q_A$ $k(\cdot)$ cover $D(\cdot), c_i^W$. Then $(\vec{q}^*,\vec{r}^*)$ is an equilibrium if $L_i(z)$ is optimized at $z=1$. If $Q_A$ exactly $k(\cdot)$ covers $D(\cdot), c_i^W$, then $(\vec{q}^*,\vec{r}^*)$ is an equilibrium if and only if $L_i(z)$ is optimized at $z=1$. 
\end{corollary}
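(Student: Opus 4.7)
The plan is to chain Corollary~\ref{cor:optimizeasym} and Lemma~\ref{lem:bitcoinoptimizeyasym} together, mirroring the argument for the analogous corollary in the symmetric-cost case.

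For the \emph{if} direction, fix a Miner $i$ and suppose $L_i(z)$ is maximized at $z=1$. For any candidate deviation $(q_i, r_i)$ against $(\vec{q}^*_{-i}, \vec{r}^*_{-i})$, Lemma~\ref{lem:bitcoinpureprofitasym} produces a pair $y \leq z \leq 1$ such that Miner $i$'s payoff equals $P_i(y,z)$. Corollary~\ref{cor:optimizeasym} then upper-bounds this by $\rew_i(Q_A)\bigl((1-y/z)k(z) - (1/y-1)(1-x_i^*)^2\bigr)$ via the $k(\cdot)$-cover hypothesis; Lemma~\ref{lem:bitcoinoptimizeyasym} optimizes this expression over $y$ to yield the further upper bound $L_i(z)$; and by hypothesis $L_i(z) \leq L_i(1)$. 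The equality clauses in both the corollary and the lemma (together with $k(1)=1$) imply $L_i(1) = P_i(\vec{q}^*, \vec{r}^*)$, so no deviation strictly improves on the candidate equilibrium.

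For the \emph{only if} direction under exact $k(\cdot)$-cover, the same chain holds with equalities throughout: Corollary~\ref{cor:optimizeasym} gives $P_i(y,z) = \rew_i(Q_A)\bigl((1-y/z)k(z) - (1/y-1)(1-x_i^*)^2\bigr)$, and for any fixed $z$ the choice $y^*(z) = (1-x_i^*)\sqrt{z/k(z)}$ realizes exactly $L_i(z)$. The ``moreover'' portion of Lemma~\ref{lem:bitcoinpureprofitasym} guarantees that some explicit strategy achieves any such $(y^*(z), z)$, so if $L_i(z) > L_i(1)$ for some $z$ then that strategy is a strictly profitable deviation, contradicting the equilibrium assumption and giving the reverse direction.

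The one mildly subtle point is handling Miners with $x_i^* = 0$, where Lemma~\ref{lem:bitcoincostasym} only supplies a lower bound on the required investment rather than an exact formula. As noted in the setup preceding Lemma~\ref{lem:bitcoincostasym}, it suffices to run the entire argument under the stronger hypothesis that such Miners satisfy $c_i^R/(r-c_i^W) = c^*$, since this only makes deviations cheaper and hence preserves the equilibrium property. Under this hypothesis the formula in Lemma~\ref{lem:bitcoinpureprofitasym} is exact, and both directions above go through without further work. I do not expect any genuine obstacle beyond this careful bookkeeping, since the heavy lifting has already been done in Corollary~\ref{cor:optimizeasym} and Lemma~\ref{lem:bitcoinoptimizeyasym}.
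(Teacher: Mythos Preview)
Your proposal is correct and follows essentially the same approach as the paper: both chain Corollary~\ref{cor:optimizeasym} and Lemma~\ref{lem:bitcoinoptimizeyasym} to bound $P_i(y,z)$ by $L_i(z)$ for the \emph{if} direction, and invoke the exact-cover equality together with the ``moreover'' clause of Lemma~\ref{lem:bitcoinpureprofitasym} to exhibit a profitable deviation for the \emph{only if} direction. Your added paragraph on the $x_i^*=0$ case simply makes explicit the reduction that the paper handles once in the setup preceding Lemma~\ref{lem:bitcoincostasym}; note, though, that this reduction (lowering costs) is one-directional and justifies only the \emph{if} direction for such Miners---the paper likewise does not revisit this for the \emph{only if} direction.
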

\begin{proof}
    If $L_i(z)$ is optimized at $z=1$, then we can conclude the following:
\begin{align*}
    P_i(y,z) &\leq \left(1-\frac{y}{z}\right) \cdot k(z) \cdot Q_A \cdot (D^{-1}(Q_A)- c_i^W) - (1/y-1)\cdot(1-x_i^*)^2 \cdot \rew_i(Q_A)\\
    &\leq L_i(z)\\
    &\leq L_i(1)\\
    &= P_i(\vec{q}^*,\vec{r}^*).
\end{align*}
      Above, the first line follows from Corollary~\ref{cor:optimizeasym}, for some $y \leq z \leq 1$. The second line follows from Lemma~\ref{lem:bitcoinoptimizeyasym}. The third line follows by assumption. The fourth line follows by the `with equality' portions of Corollary~\ref{cor:optimizeasym} and Lemma~\ref{lem:bitcoinoptimizeyasym}.

      If we further have that $Q_A$ exactly $k(\cdot)$ covers $D(\cdot),c_i^W$, and $L_i(z)$ is not optimized at $z=1$, we conclude that for whatever $z$ $L_i(z) > L_i(1)$ it holds:
\begin{align*}
    P_i\left((1-x_i^*)\cdot \sqrt{\frac{z}{k(z)}},z\right) &=  L_i(z)\\
    &> L_i(1)\\
    &= P_i(\vec{q}^*,\vec{r}^*).
\end{align*}

Above, the first equality holds by Lemma~\ref{lem:bitcoinoptimizeyasym} and Corollary~\ref{cor:optimizeasym}. The second line follows by assumption that $L_i(z)$ is not optimized at $z=1$. The third line follows by the `with equality' portions of Corollary~\ref{cor:optimizeasym} and Lemma~\ref{lem:bitcoinoptimizeyasym}.
      
\end{proof}

\begin{lemma}\label{lem:bitcoinmain2asym}
    $L_i(z) \leq L_i(1)$ for all $z \in [0,1]$ if and only if $1-x_i^* \geq \max_{z \in [0,1]}\lt\{\frac{k(z)-1}{2\cdot \left(\sqrt{k(z)/z}-1\right)}\rt\}$. Therefore, if for all $i$, $1-x_i^* \geq \max_{z \in [0,1]}\left\{\frac{k(z)-1}{2\cdot \left(\sqrt{k(z)/z}-1\right)}\right\}$, and $Q_A$ $k(\cdot)$-covers $D(\cdot),c_i^W$, $(\vec{q}^*,\vec{r}^*)$ is an Equilibrium.

    If $Q_A$ exactly $k(\cdot)$-covers $D(\cdot),c_i^W$, then $(\vec{q}^*,\vec{r}^*)$ is an Equilibrium if and only if for all $i$, it holds that: $1-x_i^* \geq \max_{z \in [0,1]}\left\{\frac{k(z)-1}{2\cdot \left(\sqrt{k(z)/z}-1\right)}\right\}$.
\end{lemma}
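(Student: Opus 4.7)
The plan is to mirror the short algebraic argument used for Lemma~\ref{lem:bitcoinmain2}: directly expand $L_i(1) - L_i(z)$, rearrange, and then invoke the corollary immediately preceding this lemma to translate the resulting condition on $L_i(\cdot)$ into a statement about equilibria. The only structural change in the asymmetric setting is the replacement of $\rew(Q_A)$ by $\rew_i(Q_A) = Q_A(r - c_i^W)$, which is a strictly positive constant (Proposition~\ref{prop:asymbitcoinundominated} and participation ensure $r > c_i^W$), so the signs of all manipulations are preserved and the symmetric argument transfers almost verbatim.

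Concretely, using $k(1) = 1$ the target identity reads
\begin{align*}
L_i(1) - L_i(z) = \rew_i(Q_A) \cdot \Bigl[\,1 - k(z) + 2(1-x_i^*)\bigl(\sqrt{k(z)/z} - 1\bigr)\Bigr].
\end{align*}
Dividing by $\rew_i(Q_A) > 0$, the inequality $L_i(1) \geq L_i(z)$ is equivalent to $2(1-x_i^*)(\sqrt{k(z)/z}-1) \geq k(z)-1$. The definition of $k(\cdot)$-cover gives $k(z) \geq z$ on $(0,1]$, so $\sqrt{k(z)/z} \geq 1$ and the coefficient of $1-x_i^*$ on the LHS is non-negative. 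Splitting on the sign of $k(z) - 1$: when $k(z) \leq 1$ the RHS is non-positive and the inequality holds trivially (such $z$ contribute non-positive values to the supremum and cannot tighten it), while for $k(z) > 1$ one divides to obtain $1-x_i^* \geq (k(z)-1)/(2(\sqrt{k(z)/z}-1))$. Taking the supremum over $z \in [0,1]$ (interpreting the indeterminate form at $z = 1$ by continuity, or just noting the inequality is trivial there) yields the claimed characterization of when $L_i(z) \leq L_i(1)$ holds for all $z$.

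For the equilibrium conclusions, I would apply the corollary immediately preceding the lemma: under $k(\cdot)$-coverage of $D(\cdot), c_i^W$ by $Q_A$, Miner $i$'s best-response payoff against $(\vec{q}_{-i}^*, \vec{r}_{-i}^*)$ is upper bounded by $L_i(z)$ (after optimally choosing the fraction $y$ via Lemma~\ref{lem:bitcoinoptimizeyasym}), with equality at the equilibrium point $z=1$, $y = 1-x_i^*$. Consequently $L_i(z) \leq L_i(1)$ for all $z$ certifies that Miner $i$ cannot profitably deviate. Under exact $k(\cdot)$-coverage, Corollary~\ref{cor:optimizeasym} upgrades this bound to an equality for the inner optimization, so the condition becomes necessary as well. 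Combining across all $i$ (and using the standard observation that a Miner with $x_i^* = 0$ has best response $q_i = 0$ provided the condition holds after tightening their cost ratio to $c^*$, exactly as in the symmetric case) yields the lemma.

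The main obstacle is purely bookkeeping: tracking the sign conventions when $k(z) < 1$, and handling the degeneracy of the denominator $\sqrt{k(z)/z} - 1 \to 0$ as $z \to 1^-$. No new ideas beyond those in Lemma~\ref{lem:bitcoinmain2} are required, since the asymmetric \writeOp{} costs do not introduce cross-Miner coupling at this stage of the argument --- each Miner's deviation analysis proceeds independently, with $c_i^W$ entering only through the constant multiplier $\rew_i(Q_A)$ that cancels out of the non-strict inequality.
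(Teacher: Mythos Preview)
Your proposal is correct and follows essentially the same approach as the paper: both expand $L_i(1)-L_i(z)$ to $\rew_i(Q_A)\cdot\bigl[1-k(z)+2(1-x_i^*)(\sqrt{k(z)/z}-1)\bigr]$, divide through by the positive constant, and rearrange to the stated inequality, then invoke the preceding corollary for the equilibrium conclusions. You are slightly more careful than the paper about the edge cases (the sign of $k(z)-1$ and the degeneracy at $z=1$), but the argument is the same.
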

\begin{proof}[Proof of Lemma~\ref{lem:bitcoinmain2asym}]
\begin{align*}
    L_i(1) - L_i(z) &=\rew_i(Q_A) \cdot \left(k(1) - 2(1-x_i^*)\cdot \sqrt{k(1)/1)}+(1-x_i^*)^2\right)\\
    &\qquad + \rew_i(Q_A)\cdot \left(- k(z) +2(1-x_i^*)\cdot \sqrt{k(z)/z} - (1-x_i^*)^2 \right)\\
    &= \rew_i(Q_A) \cdot \left(1-k(z) +2(1-x_i^*) \cdot \left(\sqrt{k(z)/z}-1\right)\right)
\end{align*}

In particular, $L_i(1) - L_i(z) \geq 0$ if and only if:
\begin{align*}
    &1-k(z) +2(1-x_i^*) \cdot \left(\sqrt{k(z)/z}-1\right) \geq 0\\
   \Leftrightarrow & 1-x_i^* \geq \frac{k(z)-1}{2\cdot \left(\sqrt{k(z)/z}-1\right)}
\end{align*}
\end{proof}

\begin{corollary} \label{cor:bitcoin-sufficient-largest-miner-asym}
    Let $D(0) = k \cdot Q_A$. Then as long as $x^*_i \leq 1-\frac{1}{k-1}$, $(\vec{q}^*,\vec{r}^*)$ is an Equilibrium.
\end{corollary}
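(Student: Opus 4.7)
The plan is to mirror the proof of the symmetric analogue (Corollary~\ref{cor:bitcoin-sufficient-largest-miner}) essentially verbatim, exploiting the fact that the covering function supplied by regularity is insensitive to $c_i^W$. Thus the asymmetric generalization reduces to an algebraic bound identical to the one already carried out in the symmetric case.

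First, I would invoke Lemma~\ref{lem:bitcoinregularasym} to conclude that, for every Miner $i$, the quantity $Q_A$ is $\frac{D(0)/Q_A-z}{D(0)/Q_A-1}$-covered relative to $D(\cdot),c_i^W$. Crucially, since $D(0)=k\cdot Q_A$ by hypothesis, this simplifies to $k(z)=\frac{k-z}{k-1}$, and this covering function is \emph{the same} for every Miner $i$ (it does not depend on $c_i^W$, because regularity of $D$ provides concavity of $R_{c_i^W}(\cdot)$ regardless of the constant offset $c_i^W$).

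Next, I would invoke the sufficiency direction of Lemma~\ref{lem:bitcoinmain2asym}: for each Miner $i$, it suffices to check $1-x_i^* \geq \max_{z \in [0,1]}\bigl\{\frac{k(z)-1}{2(\sqrt{k(z)/z}-1)}\bigr\}$. Plugging in $k(z)=\frac{k-z}{k-1}$ reduces the verification to the same closed-form inequality that appears in the proof of Corollary~\ref{cor:bitcoin-sufficient-largest-miner}. Since this computation is purely a statement about the function $z\mapsto \frac{k-z}{k-1}$ on $[0,1]$, and depends on neither $\vec{c}^W$ nor on which Miner $i$ is under consideration, it transfers without modification and yields the uniform upper bound $\frac{1}{k-1}$. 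Therefore the hypothesis $x_i^*\leq 1-\frac{1}{k-1}$ implies the condition of Lemma~\ref{lem:bitcoinmain2asym} simultaneously for every $i$, establishing that $(\vec{q}^*,\vec{r}^*)$ is an Equilibrium.

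The main thing to flag is not a technical obstacle but rather a verification: one must confirm that the covering function produced by Lemma~\ref{lem:bitcoinregularasym} is independent of $c_i^W$ (so that a single bound on $x_i^*$ uniformly controls all Miners), and that the sufficient condition of Lemma~\ref{lem:bitcoinmain2asym} is genuinely a \emph{per-Miner} condition rather than a joint one. Both are immediate from the statements, so no new calculation is required beyond the one already executed in the symmetric case.
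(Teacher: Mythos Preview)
Your proposal is correct and follows essentially the same approach as the paper: invoke Lemma~\ref{lem:bitcoinregularasym} to obtain the covering function $k(z)=\frac{k-z}{k-1}$ (independent of $c_i^W$), then feed it into the sufficiency direction of Lemma~\ref{lem:bitcoinmain2asym} and reuse verbatim the algebraic bound $\max_{z\in[0,1]}\frac{k(z)-1}{2(\sqrt{k(z)/z}-1)}\leq \frac{1}{k-1}$ already computed in the symmetric case. The paper in fact simply repeats that computation line-for-line; your observation that it is literally the same calculation (because the covering function does not depend on $c_i^W$) is exactly the point.
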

\begin{proof}
    We simply plug into Lemma~\ref{lem:bitcoinregularasym} and Lemma~\ref{lem:bitcoinmain2asym}. Lemma~\ref{lem:bitcoinregularasym} asserts that $Q_A$ $\frac{k-z}{k-1}$-covers $D(\cdot), c^W_i$. Therefore, Lemma~\ref{lem:bitcoinmain2asym} concludes the desired equilibrium so long as $1-x_i^* \geq \frac{\frac{k-z}{k-1}-1}{2\cdot \left(\sqrt{\frac{k-z}{z\cdot (k-1)}}-1\right)}$ for all $z$. We observe that:
    \begin{align*}
        \frac{\frac{k-z}{k-1}-1}{2\cdot \left(\sqrt{\frac{k-z}{z\cdot (k-1)}}-1\right)} & \leq \frac{\frac{k-z}{k-1}-1}{2\cdot \left(\sqrt{\frac{k-1}{z\cdot (k-1)}}-1\right)}\\
        &=\frac{1}{2(k-1)}\cdot \frac{1-z}{1/\sqrt{z}-1}\\
        &=\frac{1}{2(k-1)}\cdot \frac{(1-\sqrt{z})\cdot (1+\sqrt{z})}{(1-\sqrt{z})/\sqrt{z}}\\
        &=\frac{\sqrt{z}+z}{2(k-1)}\\
        &\leq \frac{1}{k-1}
    \end{align*}
    Above, the first inequality follows as $z \in [0,1]$. The three equalities are basic algebra. The final inequality follows again as $z \in [0,1]$. 
\end{proof}

\begin{proof}[Proof of Theorem~\ref{thm:bitcoinasymsufficient}]
The proof follows immediately from the definition of exactly $k(\cdot)$-covers and Lemma~\ref{lem:bitcoinmain2asym}, and Corollary~\ref{cor:bitcoin-sufficient-largest-miner-asym}
\end{proof}

\subsection{Impact of Block Rewards with Asymmetric \writeOp{} Costs}

We have already noted that Proposition~\ref{prop:blockrewards} does not naturally extend with asymmetric costs, as the block reward impacts the resulting market shares. Still, a variant of Proposition~\ref{prop:blockrewards2} holds, stated and proved below.

\begin{proposition}\label{prop:asymblockrewards2}
Let $\vec{c}^R, D(\cdot), Q_A, c^W$ be such that $Q_A \cdot x_i^*(\vec{c}^R) < \inf_{x > D^{-1}_{\sup}(Q_A)}\left\{\frac{(x-c^W_i) \cdot (Q_A - D(x))}{x-D^{-1}_{\sup}(Q_A)}\right\}$ for all $i$. Then, there exists a sufficiently large $B < \infty$ such that a market-clearing equilibrium exists in the market defined by $\vec{c}^R, D(\cdot), B, Q_A, \vec{c}^W$.
\end{proposition}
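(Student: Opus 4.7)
The plan is to mirror the proof of Proposition~\ref{prop:blockrewards2}, with the extra twist that here the equilibrium market shares themselves depend on $B$. Fix $r := D^{-1}_{\sup}(Q_A)$ and, for each $B$, consider the candidate market-clearing profile from Theorem~\ref{thm:bitcoinasympure}: every Miner $i$ sets reserve $r$, the aggregate investment is $\sum_j q_j^*(B) = Q_A/c^*(\vec{c}^R,\vec{c}^W,B,r)$, and the market share of $i$ is $x_i^*(\vec{c}^R,\vec{c}^W,B,r)$. I will show this profile is an equilibrium once $B$ is large enough.

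The first step is a continuity observation: $x_i^*(\vec{c}^R,\vec{c}^W,B,r) \to x_i^*(\vec{c}^R)$ as $B \to \infty$. To see this, let $\tilde c^*(B) := (r+B)\cdot c^*(\vec{c}^R,\vec{c}^W,B,r)$ and rewrite the defining fixed-point equation as $\sum_i \max\{0,\ 1 - c_i^R(r+B)/((r+B-c_i^W)\tilde c^*(B))\} = 1$. Because $(r+B)/(r+B-c_i^W) \to 1$ as $B \to \infty$, the equation converges to the defining equation for $c^*(\vec{c}^R)$, and $x_i^*(\vec{c}^R,\vec{c}^W,B,r) \to x_i^*(\vec{c}^R)$. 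Combined with the strict inequality in the hypothesis, there exists $\varepsilon > 0$ such that for all sufficiently large $B$ and every $i$, $Q_A \cdot x_i^*(\vec{c}^R,\vec{c}^W,B,r) + \varepsilon \leq \inf_{x > r}\{(x-c_i^W)(Q_A - D(x))/(x-r)\}$.

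Given such a $B$, consider an arbitrary deviation by Miner $i$ to $(q_i', r_i')$. If $r_i' \leq r$, the clearing price remains $r$ and Miner $i$'s payoff coincides with their payoff in a single-player Tullock best-response problem with prize $Q_A(r+B-c_i^W)$ and cost $c_i^R$, in which $q_i^*(B)$ is by construction a best response, so no gain is possible. If $r_i' > r$, Miner $i$ profits in the price-setting subgame only when their realized quantity exceeds $\inf_{x > r}\{(x-c_i^W)(Q_A-D(x))/(x-r)\}$ (by Lemma~\ref{lem:norevenue} applied with $c_i^W$); by Step~1 this forces the market share of $i$ to increase by at least $w := \varepsilon/Q_A$ above $x_i^*(\vec{c}^R,\vec{c}^W,B,r)$. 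A direct computation paralleling Corollary~\ref{cor:TullockIncrease}, but with the asymmetric prize $Q_A(r+B-c_i^W)$ for Miner $i$, shows that the upstream loss from such a market-share increase is at least $Q_A(r+B-c_i^W)\cdot w^2$, which grows linearly in $B$. The maximum possible gain from the price-setting deviation is bounded above by the monopolist revenue $X_i := \sup_x\{x(D^{-1}_{\sup}(x) - c_i^W)\}$, finite by the revenue assumption on $D(\cdot)$. Choosing $B$ large enough that $(r+B-c_i^W)\varepsilon^2/Q_A > X_i$ for every $i$ rules out any profitable joint deviation.

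The main obstacle is the continuity statement in Step~1, specifically at boundary points where some Miners have $x_i^*(\vec{c}^R) = 0$: the $\max$-operator makes the fixed-point equation non-smooth, so one has to argue that $\tilde c^*(B)$ is continuous in $B$ at infinity and that the limit is the symmetric-case $c^*(\vec{c}^R)$ even when the set of active Miners changes at finite $B$. This follows from monotonicity of the relevant one-dimensional fixed-point equation in the spirit of~\cite{ArnostiW22}. Given Step~1, the remaining quantitative step is immediate: a loss that is $\Omega(B)$ eventually dominates a bounded gain.
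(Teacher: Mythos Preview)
Your proposal is correct and follows essentially the same approach as the paper: both first argue that $x_i^*(\vec{c}^R,\vec{c}^W,B,r)\to x_i^*(\vec{c}^R)$ as $B\to\infty$ to recover a uniform $\varepsilon$-margin, then decompose any deviation into an investment change (bounded via an asymmetric analogue of Corollary~\ref{cor:TullockIncrease} yielding a loss $\Omega(B)$) and a reserve change (bounded by the finite monopolist revenue via Lemma~\ref{lem:norevenue}). The only cosmetic differences are your normalization $\tilde c^*(B)=(r+B)\,c^*(\vec{c}^R,\vec{c}^W,B,r)$ in place of the paper's sandwich $f(Bx)\le g(x)\le f((A+B)x)$, and a missing factor of $\tfrac{1}{2}$ in your Tullock-loss bound, neither of which affects the argument.
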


In particular, note that the \emph{statement} of Proposition~\ref{prop:asymblockrewards2} is nearly identical to that of Proposition~\ref{prop:blockrewards2} -- the only change is a substitution of $c_i^W$ for $c^W$, and the statement even keeps $x_i^*(\vec{c}^R)$ (which does not depend on $\vec{c}^W$ or $B$). The proof requires one extra step to argue that this is the appropriate condition, even though block rewards can impact market shares. Essentially, the key idea is that as block rewards grow sufficiently large, $x_i^*(\vec{c}^R, \vec{c}^W, B, r)$ approaches $x_i^*(\vec{c}^R)$. 

We first need similar technical lemmas to the proof of Proposition~\ref{prop:blockrewards2}. Let us define $c^* := c^*(\vec{c}^R, \vec{c}^W, B, r), x_i^* := x_i^*(\vec{c}^R, \vec{c}^W, B, r)$, and $\rew_i(Q_A):= (r+B-c_i^W) \cdot Q_A$ for simplicity of notation. Below, 

\begin{lemma}\label{lem:asymTullockDeviation} Consider a game where each player $j$ chooses a $q_j \geq 0$, and Player $i$ receives payoff $\rew_i(Q_A) \cdot q_i / \sum_j q_j - c_i^R \cdot q_i$ where $\rew_i(Q_A) = Q_A \cdot (r + B - c_i^W)$, and the current strategy profile has $\sum_j q_j = Q_A/c^*$ and $q_i/\sum_j q_j = \max\{1-\frac{c_i^R}{(r+B-c_i^W) \cdot c^*},0\}$. Then if Player $i$ invests an additional $z \cdot Q_A /c^*$, their payoff decreases by at least $\frac{z^2}{1+z} \cdot \rew_i(Q_A) \cdot \min\{1, \frac{c_i^R}{c^* \cdot (r+B-c_i^W)}\}$.
\end{lemma}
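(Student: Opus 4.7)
The plan is to mirror the proof of Lemma~\ref{lem:TullockDeviation} but carefully track the asymmetric term $(r+B-c_i^W)$ that now appears inside $\rew_i(Q_A)$. I split into two cases based on whether $x_i^*>0$ or $x_i^*=0$, and in each case I compute the change in Player $i$'s payoff exactly (or up-to-inequality) after investing an additional $z\cdot Q_A/c^*$ on top of their equilibrium investment $x_i^* \cdot Q_A/c^*$.

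\textbf{Case 1: $x_i^*>0$.} Here the equilibrium definition of $x_i^*$ gives the key identity $\frac{c_i^R}{c^*(r+B-c_i^W)} = 1-x_i^*$. After deviation, the total investment is $(1+z)\cdot Q_A/c^*$, so Player $i$'s new market share is $(x_i^*+z)/(1+z)$. The change in payoff is
\[
\rew_i(Q_A)\cdot\left(\frac{x_i^*+z}{1+z}-x_i^*\right) \;-\; c_i^R\cdot \frac{z\cdot Q_A}{c^*},
\]
which simplifies (using the identity above to eliminate $c_i^R$) to $-\rew_i(Q_A)\cdot\frac{z^2}{1+z}\cdot (1-x_i^*)$. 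Rewriting $(1-x_i^*)=\frac{c_i^R}{c^*(r+B-c_i^W)}$ and noting that this quantity is at most $1$ (since $x_i^*>0$), the decrease equals $\frac{z^2}{1+z}\cdot \rew_i(Q_A)\cdot\min\{1,\tfrac{c_i^R}{c^*(r+B-c_i^W)}\}$ exactly.

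\textbf{Case 2: $x_i^*=0$.} Now $q_i=0$ in equilibrium, and $\frac{c_i^R}{c^*(r+B-c_i^W)}\geq 1$. After deviation Player $i$ receives share $z/(1+z)$, so the change in payoff is
\[
z\cdot\rew_i(Q_A)\cdot\left(\frac{1}{1+z}-\frac{c_i^R}{c^*(r+B-c_i^W)}\right)
\;\leq\; z\cdot\rew_i(Q_A)\cdot\left(\frac{1}{1+z}-1\right)
\;=\; -\frac{z^2}{1+z}\cdot\rew_i(Q_A),
\]
where the inequality uses $\frac{c_i^R}{c^*(r+B-c_i^W)}\geq 1$. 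Since $\min\{1,\tfrac{c_i^R}{c^*(r+B-c_i^W)}\}=1$ in this case, this matches the claimed bound.

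I do not anticipate a real obstacle: the calculation is structurally identical to Lemma~\ref{lem:TullockDeviation}, with the only delicate step being the substitution that leverages the fixed-point condition defining $c^*(\vec{c}^R,\vec{c}^W,B,r)$ to cancel $c_i^R/c^*$ against $1-x_i^*$ cleanly. The only place this could get awkward is ensuring the $\min\{1,\cdot\}$ captures both cases uniformly, which is handled by distinguishing whether $x_i^*>0$ or $x_i^*=0$ at the outset.
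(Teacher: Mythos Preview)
Your proposal is correct and follows essentially the same approach as the paper's proof: both split into the cases $x_i^*>0$ and $x_i^*=0$, compute the change in payoff after the additional investment $z\cdot Q_A/c^*$, and use the identity $1-x_i^* = \tfrac{c_i^R}{c^*(r+B-c_i^W)}$ (respectively the inequality $\tfrac{c_i^R}{c^*(r+B-c_i^W)}\geq 1$) to arrive at the stated bound. The algebra and the handling of the $\min\{1,\cdot\}$ are identical in substance.
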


\begin{proof}
    To more easily match notation with previous proofs, define $x_i^*:=\max\{1-\frac{c_i^R}{(r+B-c_i^W) \cdot c^*},0\}$.

    First, assume that $x^*_i > 0$. Then, by investing $q_i + z \cdot Q_A / c^*$ instead of $q_i = x_i^*\cdot Q_{A} /c^*$, Miner $i$'s change in payoff is:

    \begin{align*}
        \rew_i(Q_A) \cdot \frac{x_i^*+z}{1+z} - &\rew_i(Q_A) \cdot x_i^*- z \cdot Q_A \cdot c_i^R /c^* \\
        =& \rew_i(Q_A) \cdot \left(\frac{z-z\cdot x_i^*}{1+z} - z\cdot \frac{1}{r+B-c_i^W} \cdot \frac{c_i^R}{c^*}\right)\\
        =& \rew_i(Q_A) \cdot \left(z\cdot \left(\frac{1-x_i^*}{1+z} - \frac{c_i^R}{(r+B-c_i^W) \cdot c^*} \right)\right)\\
        =& \rew_i(Q_A) \cdot z \cdot \frac{c_i^R}{c^* \cdot (r+B-c_i^W)} \cdot \left(\frac{1}{1+z}-1\right)\\
        =& - \rew_i(Q_A) \cdot z^2 \cdot \frac{c_i^R}{c^* \cdot (r+B-c_i^W)} / (1+z).
    \end{align*}

     Above, the LHS of the first line simply states the increase in Miner $i$'s prize, and subtracts the additional cost. All equalities are basic algebra.

     If instead $x_i^* = 0$, then Miner $i$'s change in payoff is instead:
    \begin{align*}
        \rew_i(Q_A) \cdot \frac{z}{1+z}  - z \cdot Q_A \cdot c_i^R /c^*
        &= \rew_i(Q_A) \cdot z \cdot \left(\frac{1}{1+z}-\frac{c_i^R}{c^*} \cdot \frac{1}{r+B-c_i^W} \right)\\
        &\leq \rew_i(Q_A) \cdot z \cdot \left(\frac{1}{1+z}-1\right)\\
        &= -\rew_i(Q_A) \cdot z^2 /(1+z)
    \end{align*}
\end{proof}

\begin{corollary}\label{cor:asymTullockIncrease} Consider a game where each player $j$ chooses a $q_j \geq 0$, and Player $i$ receives payoff $\rew_i(Q_A) \cdot q_i / \sum_j q_j - c_i^R \cdot q_i$, and the current strategy profile has $\sum_j q_j = Q_A/c^*$ and $q_i/\sum_j q_j = \max\{0,1-\frac{c_i^R}{(r+B-c_i^W) \cdot c^*}\}$. Then, Player $i$ increasing their market share from $x_i^*$ to $x_i^*+w$ decreases payoff by at least $w^2\cdot \rew_i(Q_A)/2$.
\end{corollary}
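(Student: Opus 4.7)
The plan is to follow the structure of the proof of Corollary~\ref{cor:TullockIncrease}, using Lemma~\ref{lem:asymTullockDeviation} as the asymmetric analogue. First I would determine the minimum additional investment by Player $i$ required to raise their market share from $x_i^*$ to $x_i^* + w$. The opposing players' total investment is $(1 - x_i^*) \cdot Q_A / c^*$, so an additional investment $a \cdot Q_A/c^*$ by Player $i$ achieves market share at least $x_i^* + w$ exactly when $\frac{x_i^* + a}{1 + a} \geq x_i^* + w$, which rearranges to $a \geq \frac{w}{1 - x_i^* - w}$. (If $w = 1 - x_i^*$, achieving market share $1$ would require infinite investment, making the claimed lower bound on the loss trivially satisfied.)

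Next I would apply Lemma~\ref{lem:asymTullockDeviation} with $z = \frac{w}{1 - x_i^* - w}$ to conclude that the payoff loss is at least
\begin{equation*}
\frac{z^2}{1+z} \cdot \rew_i(Q_A) \cdot \min\left\{1, \frac{c_i^R}{c^* \cdot (r + B - c_i^W)}\right\}.
\end{equation*}
A short case check on the definition $x_i^* = \max\{0, 1 - \frac{c_i^R}{(r + B - c_i^W) \cdot c^*}\}$ yields $\min\{1, \frac{c_i^R}{c^* \cdot (r + B - c_i^W)}\} = 1 - x_i^*$: when $x_i^* > 0$ the minimum is $\frac{c_i^R}{c^* \cdot (r + B - c_i^W)} = 1 - x_i^*$, and when $x_i^* = 0$ the minimum is simply $1 = 1 - x_i^*$. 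Substituting this together with the identity $1 + z = (1 - x_i^*)/(1 - x_i^* - w)$ collapses the bound to
\begin{equation*}
\frac{w^2}{1 - x_i^* - w} \cdot \rew_i(Q_A) \;\geq\; w^2 \cdot \rew_i(Q_A) \;\geq\; \frac{w^2 \cdot \rew_i(Q_A)}{2},
\end{equation*}
where the first inequality uses $1 - x_i^* - w \leq 1$.

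I do not anticipate a real obstacle: Lemma~\ref{lem:asymTullockDeviation} already absorbs the asymmetric $\vec{c}^W$ into $\rew_i(Q_A)$ and the weighted cost ratio, after which the remaining argument is just algebraic simplification analogous to the symmetric Corollary~\ref{cor:TullockIncrease}. The only bookkeeping step beyond the symmetric case is verifying the identity $\min\{1, \frac{c_i^R}{c^* \cdot (r + B - c_i^W)}\} = 1 - x_i^*$, which is immediate from the definition of $x_i^*$, and this is precisely the point where the dependence on $c_i^W$ disappears cleanly and leaves an estimate of the same shape as in the symmetric setting.
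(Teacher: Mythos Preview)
Your proof is correct and is actually cleaner than the paper's. Both proofs start the same way, solving $\frac{x_i^*+a}{1+a}\geq x_i^*+w$ to get $a\geq \frac{w}{1-x_i^*-w}$. From there the paper lower-bounds this by $\frac{w}{\min\{1,\tau_i\}}$ (where $\tau_i=\frac{c_i^R}{(r+B-c_i^W)c^*}$), plugs that smaller $z$ into Lemma~\ref{lem:asymTullockDeviation}, and then splits into the cases $z\leq 1$ and $z\geq 1$ using $z^2/(1+z)\geq z^2/2$ or $z/2$ respectively. You instead plug in the exact value $z=\frac{w}{1-x_i^*-w}$ and exploit the identity $\min\{1,\tau_i\}=1-x_i^*$, which collapses the product $\frac{z^2}{1+z}\cdot(1-x_i^*)$ to $\frac{w^2}{1-x_i^*-w}$ in one line. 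This avoids the paper's case split entirely and even yields the sharper intermediate bound $w^2\cdot\rew_i(Q_A)$ before weakening to $w^2\cdot\rew_i(Q_A)/2$. The paper's detour is harmless (the loss $\frac{z^2}{1+z}$ is monotone in $z$, so plugging in a smaller $z$ is valid), but your direct substitution is the more transparent argument.
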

\begin{proof}
Again, to match earlier notation more easily, we define $x_i^*:=\max\{0,1-\frac{c_i^R}{(r+B-c_i^W) \cdot c^*}\}$. 

The total current investment is $Q_A/c^*$, which means that in order to increase $x_i^*$ by $w$, Player $i$ must invest at least an additional $a\cdot Q_A/c^*$, where $\frac{x_i^*+a}{1+a} \geq x_i^*+w$. Solving for $a$, we conclude $a \geq \frac{w}{1-x_i^*-w}$.

Recall that $x_i^* = \max\{0,1-\frac{c_i^R}{(r+B-c_i^W) \cdot c^*} \}$. Let us denote $\frac{c_i^R}{(r+B-c_i^W) \cdot c^*}$ as $\tau_i$. Next, observe that if $x_i^* = 0$, then $\frac{w}{1-x_i^*-w} = \frac{w}{1-w} \geq w/\min\{1,\tau_i\}$.

Moreover, observe that if $x_i^* > 0$, then $\frac{w}{1-x_i^*-w} = \frac{w}{\tau_i-w} \geq w/\min\{1,\tau_i\}$.

Now that we have that Player $i$ must invest at least an additional $\frac{w \cdot Q_A}{c^*\cdot \min\{1,\tau_i\}}$ in order to increase to $x_i^*+w$, we can simply plug into Lemma~\ref{lem:asymTullockDeviation} with $z = \frac{w}{\min\{1,\tau_i\}}$, and observe that $z^2/(1+z) \geq z^2/2$ whenever $z \leq 1$, and $z^2/(1+z) \geq z/2$ whenever $z \geq 1$.

So, if $w \leq \min\{1,\tau_i\}$, we can now conclude a payoff loss of at least $\frac{w^2}{2(\min\{1,\tau_i\})^2} \rew_i(Q_A) \cdot \min\{1,c_i^R/c^*\} \geq z^2 \cdot \rew_i(Q_A)/2$ (the final inequality follows as $\min\{1,\tau_i\} \leq 1$. Similarly, if $w \geq \min\{1,\tau_i\}$, we can now conclude a payoff loss of at least $\frac{w}{\min\{1,\tau_i\}}\cdot \rew_i(Q_A) \cdot \min\{1,\tau_i\}/2 = w \cdot \rew_i(Q_A) /2 \geq w^2 \cdot \rew_i(Q_A)/2$ (the final inequality follows as $w \leq 1$ is the amount by which Miner $i$ aims to increase their market share).
\end{proof}

\begin{proof}[Proof of Proposition~\ref{prop:asymblockrewards2}]

Consider now the candidate equilibrium proposed by Theorem~\ref{thm:bitcoinasympure}:
\begin{itemize}
    \item $\sum_j q_j = Q_A / c^*(\vec{c}^R, \vec{c}^W, B, D^{-1}_{\sup}(Q_A))$.
    \item For all Miners $i$, $\frac{q_i}{\sum_j q_j} = x_i^*(\vec{c}^R, \vec{c}^W, B, D^{-1}_{\sup}(Q_A))$.
    \item $r_i = D^{-1}_{\sup}(Q_A)$ for all $i$. 
\end{itemize}

First, we wish to take a sufficiently large $B$ so that each $x_i^*(\vec{c}^R, \vec{c}^W, B, D^{-1}_{\sup}(Q_A))$ is close to $x_i^*(\vec{c}^R)$. Consider the function $f(x):= \sum_{i=1}^n \max\{0, 1 - \frac{c_i^R}{x}\}$ and $g(x):=\sum_{i=1}^n \max\{0, 1-\frac{c_i^R}{(D^{-1}_{\sup}(Q_A)+B-c_i^W)\cdot x}\}$. 

Let $A:=\max_{i\in [n]}\{D^{-1}_{\sup}(Q_A) - c_i^W\}$. Then observe that $\frac{c_i^R}{Bx} \geq \frac{c_i^R}{(D^{-1}_{\sup}(Q_A)+B-c_i^W)\cdot x} \geq \frac{c_i^R}{(A + B) \cdot x}$, for all $i$. Therefore, $f(Bx)\leq g(x) \leq f( ( A+B)x)$ for all $B$. In particular, this means that because $g(c^*(\vec{c}^R, \vec{c}^W, B, D^{-1}_{\sup}(Q_A)) = 1$, then $f(B \cdot c^*(\vec{c}^R, \vec{c}^W, B, D^{-1}_{\sup}(Q_A) ) \leq 1$, and $f((A+B)\cdot c^*(\vec{c}^R, \vec{c}^W, B, D^{-1}_{\sup}(Q_A) )) \geq 1$, and therefore $c^*(\vec{c}^R) \in [B \cdot c^*(\vec{c}^R, \vec{c}^W, B, D^{-1}_{\sup}(Q_A)), (A+B) \cdot c^*(\vec{c}^R, \vec{c}^W, B, D^{-1}_{\sup}(Q_A))]$. 

This implies $x_i^*(\vec{c}^R) \in [\max\{0,1-\frac{c_i^R}{B \cdot c^*(\vec{c}^R, \vec{c}^W, B, D^{-1}_{\sup}(Q_A))}\}, \max\{0, 1-\frac{c_i^R}{(A+B) \cdot c^*(\vec{c}^R, \vec{c}^W, B, D^{-1}_{\sup}(Q_A))}\}]$, whereas $x_i^*(\vec{c}^R, \vec{c}^W, B, D^{-1}_{\sup}(Q_A)) = \max\{0,1-\frac{c_i^R}{(D^{-1}_{\sup}(Q_A) + B - c_i^W) \cdot c^*(\vec{c}^R, \vec{c}^W, B, D^{-1}_{\sup}(Q_A))}\} \in [\max\{0,1-\frac{c_i^R}{B \cdot c^*(\vec{c}^R, \vec{c}^W, B, D^{-1}_{\sup}(Q_A))}\}, \max\{0, 1-\frac{c_i^R}{(A+B) \cdot c^*(\vec{c}^R, \vec{c}^W, B, D^{-1}_{\sup}(Q_A))}\}]$ as well. So our goal is just to show that the interval $[\max\{0,1-\frac{c_i^R}{B \cdot c^*(\vec{c}^R, \vec{c}^W, B, D^{-1}_{\sup}(Q_A))}\}, \max\{0, 1-\frac{c_i^R}{(A+B) \cdot c^*(\vec{c}^R, \vec{c}^W, B, D^{-1}_{\sup}(Q_A))}\}]$ approaches width $0$ as $B \rightarrow \infty$.

To see this, observe simply that if $\frac{c_i^R}{(A+B) \cdot c^*(\vec{c}^R, \vec{c}^W, B, D^{-1}_{\sup}(Q_A))} \geq 1$, then the entire interval is $[0,0]$. If instead, $\frac{c_i^R}{(A+B) \cdot c^*(\vec{c}^R, \vec{c}^W, B, D^{-1}_{\sup}(Q_A))} < 1$, then the right end of the interval is $1-\frac{c_i^R}{(A+B) \cdot c^*(\vec{c}^R, \vec{c}^W, B, D^{-1}_{\sup}(Q_A))}$, and the left end of the interval is at least $1-\frac{c_i^R}{B \cdot c^*(\vec{c}^R, \vec{c}^W, B, D^{-1}_{\sup}(Q_A))} = 1-\frac{A+B}{B} \cdot \frac{c_i^R}{(A+B) \cdot c^*(\vec{c}^R, \vec{c}^W, B, D^{-1}_{\sup}(Q_A))} \geq 1-\frac{c_i^R}{(A+B)\cdot c^*(\vec{c}^R, \vec{c}^W, B, D^{-1}_{\sup}(Q_A))} - \frac{A}{B}$ (because $\frac{c_i^R}{(A+B) \cdot c^*(\vec{c}^R, \vec{c}^W, B, D^{-1}_{\sup}(Q_A))} < 1$). Therefore, the width of the interval is at most $A/B$, which approaches $0$ as $B \rightarrow \infty$. Therefore, because $x^*_i(\vec{c}^R) < \inf_{x > D^{-1}_{\sup}(Q_A)}\left\{\frac{(x-c^W_i) \cdot (Q_A - D(x))}{x-D^{-1}_{\sup}(Q_A)}\right\}$ for all $i$, there exists a sufficiently large $B'$ such that for all $B \geq B'$, $x^*_i(\vec{c}^R, \vec{c}^W, B, D^{-1}_{\sup}(Q_A)) < \inf_{x > D^{-1}_{\sup}(Q_A)}\left\{\frac{(x-c^W_i) \cdot (Q_A - D(x))}{x-D^{-1}_{\sup}(Q_A)}\right\}$ for all $i$ as well. So from now on we will only consider $B > B'$, and the remaining proof will be similar to that of Proposition~\ref{prop:blockrewards2}.

Now, consider a Miner $i$ deviating from this strategy profile to some $q'_i, r'_i$., and let $\varepsilon:= Q_A \cdot x_i^*(\vec{c}^R) - \inf_{x > D^{-1}_{\sup}(Q_A)}\{\frac{(x-c_i^W) \cdot (Q_A - D(x))}{x-D^{-1}_{\sup}(Q_A)}\} > 0$ by the work above. Recall that the change in reward comes in two parts:
\begin{itemize}
    \item First, Miner $i$ changes $q_i$ to $q'_i$, but keeps $r_i = D^{-1}_{\sup}(Q_A)$. This maintains a clearing price of $D^{-1}_{\sup}(Q_A)$, and therefore the rewards change exactly as a \emph{Tullock-ish contest that treats Miner $i$'s reward as $Q_A \cdot (D^{-1}_{\sup}(Q_A)+B-c_i^W)$}. If this deviation causes Miner $i$'s fraction of \append{}s to increase to $x_i^*(\vec{c}^R)+w$, then Corollary~\ref{cor:asymTullockIncrease} lower bounds this loss as at least $Q_A \cdot (D^{-1}_{\sup}(Q_A)+B-c_i^W)\cdot w^2/2$. 
    \item Next, Miner $i$ changes $r_i$ to $r'_i$. As long as $w \leq \varepsilon$, then $x_i^*(\vec{c}^R, \vec{c}^W, B, D^{-1}_{\sup}(Q_A)) + w \leq \inf_{x > D^{-1}_{\sup}(Q_A)}\{\frac{(x-c^W_i) \cdot (Q_A - D(x))}{x-D^{-1}_{\sup}(Q_A)}\}$, and Lemma~\ref{lem:norevenue} guarantees that this change cannot possibly improve Miner $i$'s payoff. Therefore, if $w \leq \varepsilon$, Miner $i$ cannot be strictly better responding (because they lose payoff when considering the Tullock contest, and do not gain when considering the price-setting game).
    \item So, the only possible joint better responses are for Miner $i$ to increase their investment in the Upstream game to increase the resulting $Q_i$ by at least $\varepsilon$. 
    \begin{itemize}
        \item Observe now that our starting point is a game where Miner $i$ receives payoff $Q_A\cdot (B+D^{-1}_{\sup}(Q_A) - c_i^W) \cdot q_i/\sum_j q_j - c_i^R \cdot q_i$, and the equilibrium from which we started satisfies $\sum_j q_j = Q_A / c^*(\vec{c}^R, \vec{c}^W, B, D^{-1}_{\sup}(Q_A))$ and $q_i = \max\{0,1-\frac{c_i^R}{(B+D^{-1}_{\sup}(Q_A) - c_i^W) \cdot c^*(\vec{c}^R, \vec{c}^W, B, D^{-1}_{\sup}(Q_A))}\}$. Therefore, we can apply Corollary~\ref{cor:asymTullockIncrease}. Then indeed Miner $i$ receives payoff $Q_A \cdot (B+D^{-1}_{\sup}(Q_A)-c_i^W) \cdot q_i /\sum_j q_j - c_i^R\cdot q_i$, and the equilibrium we start with satisfies $\sum_j q_j = Q_A / c^*(\vec{c}^R, \vec{c}^W, B, D^{-1}_{\sup}(Q_A))$, and $q_i/\sum_j q_j = x_i^*(\vec{c}^R, \vec{c}^W, B, D^{-1}_{\sup}(Q_A)) = \max\{0,1-\frac{c_i^R}{(B+D^{-1}_{\sup}(Q_A)-c_i^W)\cdot c^*(\vec{c}^R, \vec{c}^W, B, D^{-1}_{\sup}(Q_A))}\}$, which are the necessary conditions to apply Corollary~\ref{cor:asymTullockIncrease}. Applying Corollary~\ref{cor:asymTullockIncrease}, we see that Miner $i$ must lose at least $\varepsilon^2 \cdot Q_A \cdot (B + D^{-1}_{\sup}(Q_A) - c_i^W)/2$ by this deviation.
    \end{itemize}
    \item At the same time, even when increasing $Q_i$ all the way to $Q_A$, the best that can possibly result from the price-setting game is that Miner $i$ earns the full revenue of a monopolist, which is some finite number $X_i:=\sup_{x \leq Q_A} \{x \cdot (D^{-1}_{\sup}(x)-c_i^W)\}$.\footnote{$X$ is finite by hypothesis, stated in Section~\ref{sec:prelim}.}
    \item Therefore, as long as $(B + D^{-1}_{\sup}(Q_A) - c_i^W)\cdot Q_A \cdot \varepsilon^2/2 \geq X_i$, the payoff loss from increasing investment in the Tullock contest outweighs any possible gain in the price-setting game, and therefore a block reward of $B \geq \frac{2X_i}{\varepsilon^2\cdot Q_A}+c_i^W - D^{-1}_{\sup}(Q_A)$ suffices to guarantee that the candidate equilibrium is in fact an equilibrium (after additionally taking $B \geq B'$ so that the first half of the argument works).
\end{itemize}

\end{proof}

\end{document}